\newcommand{\path}{\mathrm{path}}
\newcommand{\pgen}{\mathrm{P}_{\mathrm{gen}}}
\newcommand{\dist}{\mathrm{dist}}
\newcommand{\G}{\mathcal{G}}
\newcommand{\ths}{\mathrm{th}}
\newcommand{\Z}{\tilde{Z}}
\newcommand{\phs}{\mathrm{ph}}
\newcommand{\pchose}{\mathrm{P}_{\mathrm{choose}}}
\newcommand{\p}{\mathcal{P}}
\newcommand{\I}{\mathbb{I}}
\newcommand{\curr}{\mathrm{curr}}
\newcommand{\mincut}{\mathrm{mincut}}
\newcommand{\phys}{\mathrm{phys}}
\newcommand{\neigh}{\mathrm{Neigh}}
\newcommand{\E}{\mathcal{E}}
\newtheorem{definition}{Definition}
\newtheorem{theorem}{Theorem}
\newtheorem{lemma}{Lemma}
\newtheorem{corollary}{Corollary}
\DeclareMathOperator*{\argmin}{arg\,min}
\newcommand{\diam}{\mathrm{diam}}
\title{Distributed Routing in a Quantum Internet}
\author{\IEEEauthorblockN{Kaushik Chakraborty\IEEEauthorrefmark{1},
Filip Rozpedek\IEEEauthorrefmark{2}, Axel Dahlberg\IEEEauthorrefmark{3} and
Stephanie Wehner\IEEEauthorrefmark{4}}
\IEEEauthorblockA{QuTech, 
Delft University of Technology,\\
Lorentzweg 1, 2628 CJ Delft, The Netherlands.\\
Email: \IEEEauthorrefmark{1}K.Chakraborty@tudelft.nl,
\IEEEauthorrefmark{2}F.D.Rozpedek@tudelft.nl,
\IEEEauthorrefmark{3}E.A.Dahlberg@tudelft.nl,
\IEEEauthorrefmark{4}S.D.C.Wehner@tudelft.nl}}
\begin{document}

\maketitle
\begin{abstract}

We develop new routing algorithms for a quantum network with noisy quantum devices such that each can store a small number of qubits.
We thereby consider two models for the operation of such a network. The first is a \emph{continuous model}, in which entanglement between a subset of the nodes is produced continuously in the background. This can in principle allows the rapid creation of entanglement between more distant nodes using the already pre-generated entanglement pairs in the network.
The second is an \emph{on-demand} model, where entanglement production does not commence before a request is made. 
Our objective is to find protocols, that minimise the latency of the network to serve a request to create entanglement between two distant nodes in the network. 
We propose three routing algorithms and analytically show that as expected when there is only a single request in the network, then employing them on the continuous model yields a lower latency than on the on-demand one. We study the performance of the routing algorithms in a ring, grid, and recursively generated network topologies. We also give an analytical upper bound on the number of entanglement swap operations the nodes need to perform for routing entangled links between a source and a destination yielding a lower bound on the end to end fidelity of the shared entangled state. We proceed to study the case of multiple concurrent requests and show that in some of the scenarios the on-demand model can outperform the continuous one.
Using numerical simulations on ring and grid networks we also study the behaviour of the latency of all the routing algorithms. We observe that the proposed routing algorithms behave far better than the existing classical greedy routing algorithm. The simulations also help to understand the advantages and disadvantages of different types of continuous models for different types of demands.

\end{abstract}

\section{Introduction}

The goal of a quantum internet \cite{Van14, LSWK04, Kim08, Cast18} is to enable the transmission of quantum bits (qubits) between distant quantum devices to achieve the tasks that are impossible using classical communication. For example, with such a network we can implement cryptographic protocols like long-distance quantum key distribution (QKD) \cite{bb14,E91}, which enables secure communication. Apart from QKD, many other applications in the domain of distributed computing and multi-party cryptography \cite{BC16} have already been identified at different stages of quantum network development~\cite{WEH18}.

Like the classical internet, a quantum internet consists of the network components like physical communication links, and eventually routers  \cite{LSWK04,MLMN09,SRAS07,SDS09}. However, due to fundamental differences between classical and quantum bits, these components in a quantum network behave rather differently than their classical counterparts. For example, qubits cannot be copied, which rules out retransmission as a means to overcome qubit losses \cite{NC02}. To nevertheless send qubits reliably, a standard method is to first produce quantum entanglement between a qubit held by the sender and a qubit held by the receiver. Once this entanglement has been produced, the qubit can then be sent using quantum teleportation~\cite{NC02,teleport93}. This requires, in addition, the transmission of two classical bits per qubit from the sender to the receiver. Importantly, teleportation consumes the entanglement, meaning that it has to be re-established before the next qubit can be sent. When it comes to routing qubits in a network, one hence needs to consider routing entanglement \cite{Cal17,GI18,Van14,MSLM13,PLCL13}.

An important tool for establishing entanglement over long distances is the notion of entanglement swapping. If two nodes $A$ and $B$ are both connected to an intermediary node $r$, but not directly connected themselves by a physical quantum communication channel such as fiber, then $A$ and $B$ can nevertheless create entanglement between themselves with the help of $r$. First, $A$ and $B$ each individually create entanglement with $r$. This requires one qubit of quantum storage at $A$ and $B$ to hold their end of the entanglement, and two qubits of quantum storage at $r$. Node $r$ then performs an \emph{entanglement swap}~\cite{teleport93,ZZHE93,GWZ08}, destroying its own entanglement with $A$ and $B$, but instead creating entanglement between $A$ and $B$. This process can be understood as node $r$ teleporting its qubit entangled with $A$ onto node $B$ using the entanglement that it shares with $B$. In turn, using this process iteratively, node $r$ can with the assistance of $A$ and $B$, also establish entanglement with nodes that are far away in the physical communication network. Any node $r$ capable of storing qubits can thus simultaneously be entangled with as many nodes in the network as it can store qubits in its quantum memory. Such a node may function as an entanglement router by taking decisions for which of its neighbours it should perform an entanglement swap operation for sharing an entangled link between a source $s$ and a destination $e$ (see~\cite{SMIKW16} for a longer introduction).

In the domain of quantum information we use the term quantum state to represent the state of a multi-qubit quantum system. A pure $n$-qubit quantum state $|\psi_n\rangle$ can be mathematically described as a unit vector in a Hilbert space $\mathcal{H}$ of dimension $2^n$. The entangled target state $|\psi^+\rangle = \frac{1}{\sqrt{2}}(|00\rangle_{AB} + |11\rangle_{AB})$
is a pure quantum state of two qubits $A$ and $B$. An $n$-qubit mixed state $\rho_n$ on $\mathcal{H}$ is a Hermitian operator with unit
trace. $\rho_n$ is called as density matrix. A mixed state is a generalisation of a pure state that can model noisy quantum states, and the density matrix
representation $\rho$ of a pure state $|\psi\rangle$ is $|\psi\rangle\langle\psi|$, where $\langle\psi|$
is the transpose of the complex conjugate of $|\psi\rangle$. In this
paper, we use the quantity, known as fidelity to measure the closeness
between two quantum states. The fidelity between a target pure state $|\psi\rangle$ and a mixed state $\rho$ is defined as $F(\rho, |\psi\rangle\langle\psi|) = \langle\psi|\rho |\psi\rangle$. The mixed state $\rho$ has a unit trace and $|\psi\rangle$ is a unit vector, which implies $0 \leq F(\rho, |\psi\rangle\langle\psi|) \leq 1$. Moreover, $F(\rho, |\psi\rangle\langle\psi|) =1$ if and only if $\rho = |\psi\rangle\langle\psi|$. This implies that two states with high fidelity are close to each other. In this paper, we mostly consider the depolarising channels and for this type of channels, if a node $r$ performs a noise free entanglement swap operation between two entangled links with fidelity $F_1$, $F_2$ then the fidelity of the resulting entangled state is at least $F_1F_2$ \cite{BDCZ98}. 

A quantum network may generate entanglement on the demand only when a request arrives, which we call the \emph{on-demand} model. In this case, the routing problem reduces to routing entanglement on the physical communication graph ($G_{\phs} = (V, E_{\phs})$) corresponding to the fibres (or free-space links) connecting the quantum network nodes. This means that entanglement is produced by two nodes connected in $G_{\phs}$ followed by entanglement swapping operations along a path in this graph \footnote{In this case the nodes discover a path from a source $s$ to a destination $e$ in $G_{\phs}$ and use any entanglement distribution scheme (e.g. the schemes proposed in \cite{CZKM97,BDCZ98,DBCZ99,DLCZ01,GWZ08,MLMN09,SJM19}) in a repeater chain for generating entanglement between $s$ and $e$.}. Two such neighbour nodes in $G_{\phs}$ are called physical neighbours of each other. However, we may also pre-establish entanglement between two nodes which do not share a physical connection, in anticipation of future requests. Such pre-shared entanglement forms a \emph{virtual link}~\cite{SMIKW16}. Two such nodes, who are not directly connected by a physical link but share an entangled link or a virtual link, are called \emph{virtual neighbours} of each other. Here, we consider routing on the \emph{virtual graph} given by pre-shared entanglement ($\G = (V, \E )$). This virtual graph may have much lower diameter than the underlying physical one. Such virtual links are in spirit similar to forming an overlay network in peer-to-peer networks \cite{LCPS05,RD01,Rip01,RFHK01,SMKKB01,ZHSR04,MM02,MNR02,CDGR02}, with the important distinction that the graph is highly dynamic: each virtual link can be used only once and it must be re-established before further use. This can be a very time-consuming process. What's more, due to short lifetimes of quantum memories the virtual graph is continuously changing as virtual links expire after some time even if they have not been used.

Of course, on both graphs, one can nevertheless apply classical algorithms to select a path from the sender to the receiver, along which entanglement swapping is performed to create an end to end link. The performances of the centralised shortest path routing algorithms \cite{Dij59} are highly dependent on the network topology. If the topology changes rapidly then keeping the routing tables up to date becomes challenging. This type of situation also occurs in classical delay-tolerant networks \cite{Fall03, JFP04,JLS07,SPR05,HCY11}. Usually, for this type of network, the distributed routing algorithms \cite{JFP04,JLS07,SPR05,HCY11} always perform better than the centralised shortest path algorithm \cite{Dij59}. Hence, in this paper, we address the routing problem by modifying existing classical distributed routing algorithms. The main challenge for designing such algorithms is that the nodes need to decide which operation to perform (entanglement generation or entanglement swap) based on local information. Analysing those routing algorithms for multiple demands is also a challenging task. In this paper we use the mathematical tools from classical routing theory \cite{klein00,klein99,NW04,MNW04} for computing the latency of our proposed routing algorithms for single source and destination pairs. We use numerical simulations for analysing the performances in case of multiple source and destination pairs.

\section{Related Work}

In this paper, we focus on the classical decision making procedure for distributing entanglement in an arbitrary network. The underlying physical mechanism for distributing entanglement in a simple quantum network has a long literature. Distributing entanglement in a simple chain network has been studied before, see e.g., \cite{CZKM97,BDCZ98,DBCZ99,DLCZ01,GWZ08,MLMN09,SJM19, SSDG11} (see \cite{MATN15} for a review). The authors of \cite{ACL07, BDJ09,PLCL13} studied the problem of entanglement distribution from a percolation theory point of view. In their model, first they considered a physical graph and from there they constructed the virtual graph by assuming that two physically connected nodes can have a weakly entangled link with a probability $\pgen$. Given such a virtual graph (which may be disconnected) they were interested in finding whether there exists a path in the virtual graph between any source and destination pair. Their proposed solutions are highly dependent on $\pgen$ and the connectivity of the virtual graph, however, they did not propose any specific routing algorithms in their paper that can find a path between the source and the destination. In \cite{PJSV08, FWHL10,Per10,LCKL12,MGHH14} the researchers proposed solutions for distributing entanglement in a noisy network using the concept of quantum network coding. All of these approaches were based on the specific structure of the physical graphs and manipulation of multi-partite entangled states. However, with current day technologies, these solutions are very difficult to realise in practice.

On the other hand, the routing approaches in \cite{MSLM13, GI18, Cal17,IG12,GI17,SMIKW16} are based on classical techniques and these are arguably more likely to be implemented with the near future quantum technology. In all of these approaches, first, the nodes discover a path from a source to a destination and then distribute the entangled links along the path. The difference between these approaches comes from the path selection algorithms. For example, in \cite{MSLM13} Van Meter et al. defined a cost metric across each communication link and then used Dijkstra's algorithm \cite{Dij59} to find an optimal path between a source and a destination. In \cite{Cal17} Caleffi proposed a routing algorithm that tries to find an optimal path between a source and a destination based on a route metric called end-to-end entanglement rate. However, both of the path selection algorithms are centralised algorithms and work under the assumption that each node has information about the entire network. In \cite{PKTT19} Pant et al., proposed a greedy multi-path routing algorithm for distributing entanglement between a source destination pair in a network. Multi-path routing algorithms are useful for sharing entangled links between a source destination pair. However, this type of approaches is not scalable for a larger network with multiple demands. 

In \cite{SMIKW16,GI17,GI18} the researchers were interested in finding routing algorithms in a virtual graph. In \cite{SMIKW16} Schoute et al. first showed that an efficient construction of the virtual graph can reduce the latency rapidly. In both \cite{SMIKW16,GI18} the researchers first defined a physical graph (mostly ring and grid topology) and classified the nodes into different levels according to their entanglement generation capabilities and then from that physical graph, they constructed a virtual graph, using the concept of a small-world network, proposed in \cite{klein99,klein00,klein02}. Later they used a classical distributed greedy routing algorithm for routing. The main contribution of \cite{GI18} was to adopt the techniques proposed in \cite{Sand06} for constructing the virtual graph in a decentralised fashion.

The main differences between the approaches in \cite{SMIKW16,GI17,GI18}, and this paper is that in our model we put an upper bound on the distance (relative to the physical graph) of a pre-shared entangled link and put an upper bound on the storage time of the entangled link in a quantum memory. This makes our model more realistic. Besides this, we show that due to the dynamic nature of the virtual graph topology, certain routing strategies for the classical network do not work efficiently for a high number of demands and we also propose two new routing algorithms that work better than the one used in the classical case. One can find more details about our contributions in the next sections.

\section{Summary of our Results}
\label{sum_res}
The main contributions in this paper can be subdivided into three parts. In the first part, we propose different types of continuous model networks based on the choice of virtual neighbours. The entangled links between two virtual neighbours can reduce the diameter of the virtual graph, and increase the connectivity of the network. Inspired by complex network theory, we propose the following ways of choosing virtual neighbours.
\begin{itemize}
\item    Deterministically chosen virtual neighbours, where each node, chooses its virtual neighbours using a fixed deterministic strategy depending on the topology. We call this type of virtual graph a deterministic virtual graph.
\item    Randomly chosen virtual neighbours, where each node samples its virtual neighbours using a probability distribution, defined over the set of nodes. Here we study the following two such sampling distributions. We call this type of virtual graph as random virtual graph.
 \begin{itemize}
\item    Virtual neighbours are chosen following the uniform distribution: In this type of network, each node chooses its virtual neighbour uniformly randomly among all the nodes which are at most $d_{\ths}$ distance from it. In the literature of the random graph, this type of virtual neighbours can reduce the diameter of the virtual graph.
\item    Virtual neighbours are chosen following the power law distribution: In this type of network,  each node chooses its virtual neighbour among all the nodes at a distance at most $d_{\ths}$ following a power-law distribution. In the literature of small-world networks, it is well known that for certain kind of physical graph topologies, this type of virtual neighbours exponentially reduces the diameter of the virtual graph \cite{klein00,klein99,klein02}. However, in small-world networks, there is no such concept of $d_{\ths}$ and all the virtual links are permanent in nature. These make our construction different from the construction of \cite{klein00,klein99,klein02}. These differences in the construction also have a significant impact on the design and analysis of the routing algorithms.  
\end{itemize}
\end{itemize}
For details explanation we refer to section \ref{cont_net}.

In the second part, we present the design and analysis of the distributed routing algorithms. In section \ref{routing} we propose the following two routing algorithms.
\begin{itemize}
\item Modified greedy routing algorithm.
\item Local best effort routing algorithm.
\end{itemize}

The main structure of all of the proposed algorithms is similar and it is described in algorithm \ref{algo_greed_dem}. Both of the algorithms can be used in the continuous and on-demand model on any topology and use only local information of the virtual network topology. After getting a demand or request, they discover a path from the source to the destination and reserve the required amount of entanglement across the path. All of the nodes wait until the path is discovered. The one disadvantage of this type of approach is that, if the memory storage time is not large enough, then all of the reserved links might decohere when the path has been discovered. In this paper, we assume that the memory storage time is much longer compared to the classical path discovery time. After the path discovery, if all of the links are available then the intermediate nodes along the path performs entanglement swap. It might happen that due to decoherence, some of the links along the path becomes unusable. In that case, the demand waits until all of the links are being generated. In algorithm \ref{algo_greed_dem} only the path discovery procedure behaves differently in two different algorithms. 

Later in algorithm \ref{path_non_local_best_effort} we propose another version of the local best effort algorithm, where we assume that each node has the information about the virtual graph up to two hops from itself. For the details of these algorithms, we refer to algorithm \ref{path_greed} (modified greedy) and \ref{path_loc_best_eff} (local best effort), \ref{path_non_local_best_effort} (NoN local best effort). In section \ref{prop_net} we analytically show that for a single demand, continuous networks can reduce the latency rapidly compared to the on-demand network. In lemma \ref{lem_cap} we show that the number of entangled links any two nodes can share using the existing pre-shared entangled links in the continuous model is lower bounded by the minimum edge cut of the virtual graph. Later in section \ref{ring_const} we focus our studies on more structured network topologies, like the ring, grid, and recursively generated networks. For the continuous model, we give an analytical upper bound on the number of entanglement swap operations that the nodes have to perform for sharing an entangled link between a source-destination pair. The bounds are given in table \ref{ent_swap}. Based on the results in table \ref{ent_swap}, in lemma \ref{fid} we give a lower bound on the fidelity of the entangled state shared between any source and destination pair. This lower bound shows that for large $d_{\ths}$ the end to end fidelity of the shared entangled state is inversely proportional to the diameter of the ring and grid network. Then we study the behaviour of the proposed routing algorithms and compare their performances with the existing classical routing algorithm for multiple source-destination pairs. Using numerical simulations, we observe the latency of the routing algorithms in the following scenarios as a function of the number of demands. Here we consider set of demands as a $|V| \times |V|$ matrix $D$, where $(i,j)$-th entry $D_{i,j}$ denotes the number of entangled links the source node $i \in V$ wants to share with destination node $j \in V$ at a specific point in time. The summary of our main observations is given below.
 \begin{enumerate}
\item    The latency for classical greedy routing algorithm behaves similarly to the other proposed routing algorithms if the number of demands is small. However, the latency increases rapidly with a higher number of demands. We refer to figure \ref{comp_ring1}, \ref{comp_ring2}, \ref{comp_ring3} for the plots regarding this simulation. This simulation shows the importance of designing new routing algorithms for a quantum internet.
\item    All of the proposed routing algorithms perform much better in the continuous model compared to the on-demand one if the demand is low. I.e., $D_{i,j} \ll cap$ where $cap$ denotes the maximum number of entangled links any two neighbour nodes $u$ and $v$ can share simultaneously. We refer to figure \ref{cap4dem2} for the plots regarding this simulation.
\item    If each demand asks for more entangled links then in figure \ref{cap4dem4} we observe that in a ring network the deterministic virtual graph performs better than the random ones when the total number of demands is very high. However, for the grid network, the situation is opposite. For more detail, regarding this observation, we refer to section \ref{disc}.
\item    Here we also study the performance of the routing algorithms for different continuous models when each of the demands has a high distance in the physical graph. In figure \ref{cap4largedist} we observe that this simulation behaves similarly to the simulation of figure \ref{cap4dem4}.

\end{enumerate}

In the third part, we study our routing algorithms in a heterogeneous network, where the nodes can be classified into several groups based on their entanglement generation capacity. We use the concept of recursively generated graphs for studying such networks. The key idea behind this type of graphs is that first, we start with an initial graph $G_0$ and if any subgraph of $G_0$ has certain structure then we substitute each of the edges of that subgraph with another graph. We call this operation as \textit{edge substitution}. Thus by repeating this procedure recursively we generate the final graph. In lemma \ref{diam_rrgg} we show that the diameter of such recursively generated physical graph increases exponentially with the number of recursive steps. Here, we also propose strategies to construct the corresponding virtual graph and for a single demand, we analytically compute the latency for distributing entangled link between any source-destination pair.

\begin{table*}[h!]
\centering
\begin{tabular}{| m{8em} | m{6cm}| m{6cm} | } 
 \hline
 Models &  Greedy Routing & NoN Routing \\ 
  \hline
Deterministically chosen virtual links & $O(\frac{n}{d_{\ths}} + \log d_{\ths})$ & $O(\frac{n}{d_{\ths}}+\log d_{\ths})$  \\
 \hline
Virtual links chosen with power-law distribution & $O\left(\frac{n}{d_{\ths}} + \log d_{\ths}\right)$ & $O\left(\frac{n}{d_{\ths}} + \frac{\log d_{\ths}}{ \log \log d_{\ths}}\right)$\\
 \hline
Virtual links chosen with uniform distribution & $O\left(\frac{n}{d_{\ths}} + \frac{d_{\ths}}{(\log d_{\ths})^2}\right)$ & $O\left(\frac{n}{d_{\ths}} + \frac{d_{\ths}}{(\log d_{\ths})^2}\right)$ \\
 \hline
\end{tabular}
\vspace{0.2in}
\caption{Expected number of entanglement swap for ring and grid network with single source-destination pair.}
\label{ent_swap}
\end{table*}

\section{Notations and Definitions}
\label{not_def}
In this paper we are interested in computing the latency of a demand. Let a routing algorithm $\mathcal{A}$ takes $T_{\mathcal{A},i,j}$ time steps to distribute $D_{i,j}$ number of entangled links between the source node $i$ and destination node $j$. If $|D|$ denotes the number of non-zero entries in $D$ then with respect to a routing algorithm $\mathcal{A}$ we define \textit{average latency (AL)} as follows,
\begin{equation}
\text{AL} = \frac{1}{|D|}\sum_{i,j}T_{\mathcal{A},i,j},
\end{equation}
where $T_{\mathcal{A},i,j}$ denotes the latency to distribute $D_{i,j}$ entangled link between the nodes $i$ and $j$. In this paper, for a graph $G$, we use $\dist_G(u,v)$, $\diam_G$, $\neigh_G(u)$ to denote hop distance between two nodes $u,v$ in $G$, diameter of $G$ and set of neighbours of $u$ respectively.

\section{Model}
\label{model}
\subsection{Discrete Time Model}

In this paper we consider a simplified discrete time model where each time step is equivalent to the communication time between two neighbouring nodes in the physical graph $G_{\phs} = (V,E_{\phs})$. Here we assume that the distance between any two physical neighbour nodes $u,v \in V$ is upper bounded by $\dist_{\phys}$, i.e., the maximum length of a physical communication link is $d_{\ths}$. This implies in the discrete time model each time step is equivalent to $\frac{\dist_{\phys}}{c}$ time units, where $c$ is the speed of light in the communication channel. 



\subsection{Quantum Network}
\label{qnet}
Each node in the quantum network has the following features.
\begin{enumerate}
\item Each node has a unique id, which carries the information about its location in the physical graph.
\item The nodes are capable of generating $cap$ number of EPR pairs in parallel with each of its neighbours.
\item Each node is capable of storing the created entangled state. However, we assume that the storage is noisy and the fidelity of the stored entangled state decays with each time step. See Section \ref{NQS} for more details. 
If any entangled link is not being used for $T_{\ths}$ time steps then the corresponding nodes won't use that link for entanglement swapping or for teleportation.
\item After $T_{\ths}$ time steps each node throws away the stored entangled state and starts generating a new entangled state. 
\item In this paper we assume that $T_{\ths} \gg \diam_{G_{\phs}}$, i.e, the storage time for quantum memory is much higher than the classical communication time between any two nodes in the network.
\item Each node stores the information about the physical graph topology.
\item Each node stores the information (if available) about its both virtual neighbours (if any) and physical neighbours.
\item The distance between two virtual neighbours in $G_{\phs} = (V,E_{\phs})$ is upper bounded by $d_{\ths}$.
\end{enumerate}

\subsection{Long-Distance Entanglement Creation}
\label{link_gen}

Long-distance entanglement creation is a well-studied subject \cite{CCGZ99,BK05,JKRK16,NTDS16,DHRW17}. For a detailed review, we refer to \cite{MATN15}. In this paper we are interested in a basic model. Due to the threshold time $T_{\ths}$ the entanglement distribution time scales exponentially with the distance between the source destination pair. However, with this type of entanglement generation scheme we can guarantee on the end to end fidelity of the shared entangled links. It is important to note that this model is a trivial one. One can reduce this exponential scaling to a polynomial one by other techniques like \textit{entanglement distillation} or \textit{quantum error correction} \cite{BDCZ98,IM10,MLKL16}. The concepts of these techniques are beyond the scope of this paper. In this paper our main focus is on the decision making procedure for routing, not the physical means of entanglement generation. However, our routing algorithms are designed in such a manner, so that it can work with any entanglement generation procedure. Moreover, the conclusions we draw in this paper about the routing algorithms, remain same with respect to other entanglement generation procedures.

The entanglement generation protocol between two nodes $s,e$ with $\dist_{G_{\phs}}(s,e)=d$, we use in this paper, can be subdivided into two parts,
\begin{itemize}
\item \textit{Elementary link creation}, where the nodes in $\path_{G_{\phs}}(s,e)$, which are connected directly by a physical link, create entangled links between themselves. Usually, the created links are stored in a quantum memory.
\item The next one is called \textit{longer link creation} where the intermediate nodes perform \textit{entanglement swap} operation and share an entangled link between the end nodes.
\end{itemize}

In practice, photon loss in the optical fibre and other imperfections of the network components make the entanglement generation procedure a probabilistic but heralded one. This implies that the elementary link creation procedure can produce a signal which certifies the successful creation of the entangled link. In order to model this here, we assume that \textit{elementary link creation} with probability $P_0 \in [0,1]$ it can create an entangled link within a single time step. For an example of $P_0$, that takes into account the repeater technology based on NV center, we refer to \cite{RYGR18}. 
In this paper, inspired from certain physical implementations \cite{PHHS14}, we assume that \textit{entanglement swap} operation is a deterministic one. We make the simplifying assumption that the time for this operation is negligible.
   
According to our simplified model, $s,e$ can share an entangled state if and only if all the elementary links are being created within the time step $T_{\ths}$. Of course, after the creation of the elementary entangled links the nodes need to communicate each other about it and it will cost some time steps. However, in the last section we assume that $T_{\ths}$ is much larger than this classical communication time. So, for the simplicity, we remove the classical communication time from our calculation. This implies, probability of creating an entangled link between $s,e$ within $T_{\ths}$ time step is at least $(1-(1-P_0)^{T_{\ths}})^d$. As a consequence, the expected entanglement distribution time increases exponentially with $d$. As mentioned before this is a simplified model for entanglement generation but it is useful to guarantee the end to end fidelity of the shared entangled state.

\vspace{-0.25cm}

\subsection{Noisy Quantum Storage}
\label{NQS}
The stored quantum state in quantum memory devices decoheres with time. Here we assume a simplified pessimistic model where the qubits decohere under the effect of symmetric depolarising noise with the parameter $p$. This type of quantum channel can be described as a completely positive trace preserving map $\epsilon: \mathcal{H}^{\otimes 2} \rightarrow \mathcal{H}^{\otimes 2}$.

The noise operator for a two qubits state $\rho_{0,se} = |\psi^+\rangle_{s,e}\langle\psi^+|$, shared between $s,e$, is denoted as $\epsilon$ and on this specific state (not in general) it acts as follows,

\vspace{-0.3cm}

\begin{align*}
&\epsilon(\rho_{0,se}) := p^2\rho_{0,se} + (1-p^2)\frac{\I_4}{4},
\end{align*}
where $\I_4$ is a $4\times 4$ identity matrix.

If at time step $t-1$ the stored state is $\rho_{{t-1},se}$ then at time step $t$ the stored state would be $\rho_{t,se} = \epsilon(\rho_{{t-1},se}).$ By solving this recursive relation on time steps $t$ we get,

\begin{equation}
\rho_{t,se} = p^{2t}\rho_{0,se} + (1-p^{2t})\frac{\I_4}{4}.
\end{equation}

We can rewrite the above expression as $\rho_{t,se} = (\frac{1}{4}+ \frac{3}{4}p^{2t})\rho_{0,se} + \frac{3}{4}(1-p^{2t})\rho^{\perp}_{0,se}$, where $\rho^{\perp}_{0,se}$ is orthogonal to $\rho_{0,se}$. This implies the fidelity of the stored state after $t$ time steps is $F(\rho_{t,se},\rho_{0,se}) = \frac{1}{4}+ \frac{3}{4}p^{2t}$.

\subsubsection{How to Choose $T_{\ths}$}
Entangled links with low fidelity make quantum communication very noisy. In order to protect the information from noise here we fix a threshold value, $F_{\ths}$, for the fidelity. According to our model, each entangled link can be stored inside the quantum memory for $T_{\ths}$ time. This implies $\rho_{T_{\ths},se}$ should satisfy the following condition, $$F(\rho_{T_{\ths},se},\rho_{0,se}) > F_{\ths}.$$ By substituting the value of $F(\rho_{T_{\ths},se},\rho_{0,se})$ from previous section we can get the following bound, up to time $T_{\ths}$.


\begin{align}
\label{ths}
T_{\ths} & \geq \frac{1}{2\log p}\log\left[\frac{4F_{\ths}}{3}-\frac{1}{3}\right].
\end{align}


\subsection{Continuous Network}
\label{cont_net}
In this model each of the nodes $u$ in $G_{\phs} = (V, E_{\phs})$ keeps the information about its $|\neigh_{G_{\phs}}(u)|$ physical neighbours and $O(k)$ virtual neighbours. According to the model, for any long distance neighbour $v$ of $u$, $\dist_{G_{\phs}}(u,v)\leq d_{\ths}$. Each of the nodes establishes and maintains virtual entangled links with all of its neighbours using entanglement generation procedure discussed in section \ref{link_gen}. After some fixed time interval ($T_{\ths}$) the nodes again start to generate all of the links, irrespective of whether there is any demand or not. Borrowing ideas from classical complex network theory here we give more precise description of the following families of continuous network models.

\begin{enumerate}
\item Deterministically chosen virtual neighbours: In the later sections we give a specific strategy for choosing virtual neighbours for ring, grid and recursively generated graph topologies. 
\item Randomly chosen virtual neighbours: 
\begin{enumerate}
\item virtual neighbours are chosen following a uniform distribution: In this network any node $u$ choses another node $v$ ($\dist_{G_{\phs}}(u,v)>1$) as a neighbour with probability $\pchose$, where

\begin{equation}
\pchose(u,v) := 
\begin{cases}
\frac{1}{N_{\leq d_{\ths}}(G_{\phs})},~\dist_{G_{\phs}}(u,v) \leq d_{\ths}\\
 0~~~\text{Otherwise},
\end{cases}
\end{equation}

where $N_{\leq d_{\ths}}(G)$ denotes the number of nodes at a distance at most $d_{\ths}$ from a node $u$.

\item virtual neighbours are chosen following a power-law distribution: In this network any node $u$ choses another node $v$ (such that $\dist_{G_{\phs}}(u,v)>1$) as a neighbour with probability $\pchose$, where

\begin{equation}
\label{pchose}
\pchose(u,v) := 
\begin{cases}
\frac{1}{\beta_{u}}\frac{1}{\dist^{\alpha}_G(u,v)},~\dist_G(u,v) \leq d_{\ths}\\
0~~~\text{Otherwise},
\end{cases}
\end{equation}
where $\beta_u = \sum_{v' \in V} \pchose(u,v')$ and $\alpha > 0$.
\end{enumerate}
\end{enumerate}

\subsection{On-demand Network}

In this model, each node $u$ in the physical graph $G_{\phs} = (V,E_{\phs})$ has only $|\neigh_{G_{\phs}}(u)|$ neighbours. In the on-demand network there are no pre-shared entangled links and each node has information about the entire physical network topology. This implies that if a demand comes, then the source node can compute the shortest path from a source to a destination and starts generating entangled links between the source and destination along that path.

\section{Routing Algorithms}
\label{routing}
In this section we propose three different kinds of distributed routing algorithms. The entire routing procedure between any two nodes in the quantum network can be subdivided into following three phases. 

\begin{enumerate}
\item Path discovery phase.
\item Entanglement reservation phase.
\item Entanglement distribution phase.
\end{enumerate}

One can find the steps of the routing procedure in algorithm \ref{algo_greed_dem}. Among the above three phases the first two phases all the routing decisions are made. In this paper in all of the routing algorithms we focus on different types of path discovery algorithms. Moreover, the algorithms reserve the entangled links while discovering the path. 
The reservation of the links is useful to prevent the utilisation of those links by other demands. After the path discovery and link reservation, if all of the entangled links are available along the path then, the intermediate nodes perform entanglement swap (in parallel) between the links. If there are not enough links available between two neighbours in the path, then the demand waits until all the missing links are being generate. If this missing link generation time is more than $T_{\ths}$ then all of the reserved links along the path expires. In this case the demand waits until the all of the links along the path is being generated. 


In all of the routing algorithms, we assume that each node has the complete information about the \emph{physical} network topology. However, due to the fragile nature of the entangled links, it is difficult to keep track of the current topology of the virtual graph. Here we assume that each node has all the information about the virtual links it shares with its neighbours. During the path discovery each node decides the next hop on the basis of the physical graph topology and the information it has about the shared entangled links with its neighbours. Both of the algorithms try to minimise the average waiting time. Designing such algorithms becomes challenging when there are not enough entangled links available between two neighbours $u,v$. At that moment, for a demand, if $v$ is an optimal neighbour of $u$ to reach a destination node $e$ then the routing algorithm has two options, (i) select the path via $v$ and generate sufficient amount of links between $u,v$, (ii) $u$ try to select another neighbour with whom it already shares enough amount of entangled links. 

The classical greedy routing algorithm \cite{klein99,klein00} always chooses the first option. Here we point out that this is not always a good option for routing in a quantum internet. However, a slight modification of the greedy algorithm can give better performance. In algorithm \ref{path_greed} we describe the modified version of the greedy routing algorithm in detail.
The path discovery algorithm in \ref{path_loc_best_eff} is a best effort algorithm, it chooses the second option. The aim of this algorithm is to utilise all the existing entangled links before generating a new one. In the next two sections, we give a detailed description of both of the algorithms.


\begin{algorithm}
\caption{\small{Distributed Routing Algorithm($s,e, \G, D, cap$)}}
 \label{algo_greed_dem}
\begin{algorithmic}[1]
\State $round = \lceil\frac{D_{s,e}}{cap}\rceil$
\State $i=1$ \Comment{Path discovery phase}
\While{$i \leq round$}
\State $ CommPath_{s,e} = PathDisc(s,e,\G,D_{s,e})$         \Comment{$CommPath_{s,e} = \{s= u_0, u_1, \ldots , u_{d-1}=e\}$}
\If{in $ CommPath_{s,e}$ some of the neighbour nodes do not have enough entangled links}
\State Generate all the links.
\Else
\State $j=1$ \Comment{Communication Phase}
\While{$j\leq d-2$} 
\State $EntSwap(u_0,u_j,u_{j+1})$
\State $j=j+1$
\EndWhile
\EndIf
\State $dem=dem-cap$
\State $i = i+1$
\EndWhile

\end{algorithmic}
\end{algorithm}

\subsection{Modified Greedy Routing}
\label{greed_route}

In the classical network theory, greedy routing is well studied for discovering near optimal paths using only local information \cite{klein99,klein00}. On the basis of the information of its neighbours, each node decides the next hop of a demand. Here, each node tries to jump to as close as possible to the destination node. The virtual neighbours are very useful for making such jumps. One big disadvantage of using the classical greedy routing algorithm for entanglement distribution is that while discovering the path, it doesn't take into account whether there exist enough entangled links available between two virtual neighbours. If the links are not present then the request or demand has to wait until all of the missing links are being generated. This might increase the latency. For example, for a demand, $u$ is the last node discovered in the path and $v$ is a neighbour of $u$ which is the closest to the destination $e$. If there is not enough entangled links available between $u,v$ and if $\dist_{G_{\phs}}(u,v) \geq \dist_{G_{\phs}}(u,e)$ then it is not a good idea for $u$ to put $v$ in the path. In this paper we use a slightly modified version of the algorithm proposed in \cite{klein99} so that it can handle this type of problem. In our modified Algorithm \ref{path_greed} a node $u$ puts the virtual neighbour $v$ in the path, which is the closest to $e$ if at least one of the following two cases are satisfied (See step $5$ of the algorithm \ref{path_greed}). 
\begin{itemize}
\item If the amount of available entangled links between them is greater than $D_{s,e}$. 
\item If $\dist_{G_{\phs}}(u,e) = \dist_{G_{\phs}}(u,v) + \dist_{G_{\phs}}(v,e)$, where $G_{\phs}=(V,E_{\phs})$ is a physical graph, $u$ is the current node and $v$ is its virtual neighbour which is the closest to the destination $e$.
\end{itemize}

If none of these two conditions were satisfied then $u$ chooses its physical neighbour, closest to $e$ as next hop. The details of the algorithm are described in algorithm \ref{path_greed}.


\begin{algorithm}
\caption{\small{Modified Greedy : PathDisc($s,e,\G,D_{s,e}$)}}
 \label{path_greed}
\begin{algorithmic}[1]
\State $u_{\curr}=s$ 
\State $CommPath_{s,e}=\{s\}$
\While{$u_{\curr} \neq e$}
\State $ u =  \argmin_{v \in \neigh_{\G}(u_{\curr})} \dist_{G_{\phs}}(v,e)$
\vspace{0.2cm}
\If{$\left(\substack{(i) \text{ There are not enough links available between $u,u_{\curr}$}\\ \text{\textbf{and}} \\ (ii)  ~\dist_{G_{\phs}}(e,u_{\curr}) < \dist_{G_{\phs}}(u,u_{curr}) + \dist_{G_{\phs}}(u,e)}\right)$}
\vspace{0.2cm}
\State $\small{u = \argmin_{\substack{v \in \neigh_{\G}(u_{\curr}) \\ \dist_{G_{\phs}}(v,u_{\curr}) = 1 }} \dist_{G_{\phs}}(v,e)}$
\EndIf
\State $CommPath_{s,e} = CommPath_{s,e} \cup \{u\}$
\State $u_{curr}=u$
\EndWhile

\State return($CommPath_{s,e}$)

\end{algorithmic}
\end{algorithm}

\subsection{Local Best Effort Routing}
\label{best_eff_route}

The local best effort routing algorithm is a distributed algorithm where each node tries to jump as close as possible to the destination node, using existing entangled links. Unlike the greedy algorithm in this one, a node $u$ first prepares a set of neighbours with whom it shares more than $D_{s,e}$ amount of entangled links. Then from that set, $u$ chooses the next hop $v$ such that $v$ is closest to $e$ and $\dist_{G_{\phs}}(u,e) < \dist_{G_{\phs}}(v,e)$ (See step $4$ of the algorithm \ref{path_loc_best_eff}). If no such neighbour exists, then it chooses its physical neighbour which is the closest to the destination (See step $6$ of the algorithm \ref{path_loc_best_eff}). Algorithm \ref{path_loc_best_eff} describes the procedure in more detail.

\begin{algorithm}
\caption{\small{Local Best Effort : PathDisc($s,e,\G,D_{s,e}$)}}
 \label{path_loc_best_eff}
\begin{algorithmic}[1]
\State $u_{\curr}=s$ 
\State $CommPath_{s,e}=\{s\}$
\While{$u_{\curr} \neq e$}
\State $ u =  \argmin_{\substack{v \in \neigh_{\G}(u_{\curr}) \\ \G(v,u_{\curr}) \geq D_{s,e} }} \dist_{G_{\phs}}(v,e)$
\vspace{0.2cm}
\If{$\left(\substack{(i) \text{ No such $u$ exists} \\ \text{\textbf{or}} \\ (ii)~\dist_{G_{\phs}}(u_{\curr},e) > \dist_{G_{\phs}}(u,e)}\right)$}
\vspace{0.2cm}
\State $\small{u = \argmin_{\substack{v \in \neigh_{\G}(u_{\curr}) \\ \dist_{G_{\phs}}(v,u_{\curr}) = 1 }} \dist_{G_{\phs}}(v,e)}$
\EndIf
\State $CommPath_{s,e} = CommPath_{s,e} \cup \{u\}$
\State $u_{curr}=u$
\EndWhile

\State return($CommPath_{s,e}$)

\end{algorithmic}
\end{algorithm}


\subsection{NoN Local Best Effort Routing}

In this section we are interested in studying the behaviour of the local best effort algorithm when the nodes have information about its neighbours as well as the neighbours of the neighbours in the virtual graph. In the literature of classical complex networks, this type of algorithm is known as NoN routing algorithm. In \cite{NW04,MNW04}, it has been showen that this type of algorithm gives better advantage for routing in certain kind of physical graphs. For example in ring ($C_n$) and grid network ($Grid_{n \times n}$) if virtual neighbours are chosen following a power law distribution then classical NoN routing algorithm can discover a path with expected length $\frac{\log_2 n}{\log_2 \log_2 n}$ \cite{CGS02,NW04,MNW04}. Here we adopt the idea of classical NoN routing algorithm into the local best effort algorithm. As it is a local best effort algorithm, so during the path discovery, the current node $u_{\curr}$ first prepares a set of neighbours $U'$ with whom it shares more than $D_{s,e}$ amount of entangled links (See step $5$ of the algorithm \ref{path_non_local_best_effort}). Then from that set, the algorithm finds a node $u'$ such that it has a neighbour $u$ which is the closest (compare to the neighbours of the nodes in $U'$) to the destination $e$ and the available entangled links between $u,u'$ is more than $D_{s,e}$. If no such $u$ exists then the algorithm just chooses $u$ from the physical neighbours of $u'$ (See steps $6$ to $12$ of the algorithm \ref{path_non_local_best_effort}). If the set $U'$ is empty, then the algorithm chooses the next hop among its physical neighbours (See the steps $14$ to $20$ of the algorithm \ref{path_non_local_best_effort}). For more details, we refer to Algorithm \ref{path_non_local_best_effort}.

\begin{algorithm}
\caption{\small{NoN Local Best Effort : PathDisc($s,e,\G,D_{s,e}$)}}
 \label{path_non_local_best_effort}
\begin{algorithmic}[1]
\State $u_{\curr}=s$ 
\State $CommPath_{s,e}=\{s\}$
\State $d=\dist_{G_{\phs}}(s,e)$
\While{$u_{\curr} \neq e$}
\State $U' = \{u' \in \neigh_{\G}(u): \G(u_{\curr},u') \geq D_{s,e}\}$
\For{all $u' \in U'$}
\State $ u =  \argmin_{\substack{v \in \neigh_{\G}(u') \\ \G(v,u') \geq D_{s,e} }} \dist_{G_{\phs}}(v,e)$
\vspace{0.2cm}
\If{$\left(\substack{(i)\text{ No such $u$ exists} \\ \text{\textbf{or}} \\ (ii)~\dist_{G_{\phs}}(u_{\curr},e) > \dist_{G_{\phs}}(u,e)}\right)$}
\vspace{0.2cm}
\State $u =$ $\argmin_{\substack{v \in \neigh_{\G}(u_{\curr}) \\ \dist_{G_{\phs}}(v,u_{\curr}) = 1 }} \dist_{G_{\phs}}(v,e)$
\EndIf
\If{$d<\dist_{G_{\phs}}(u,e)$}
\State $d= \dist_{G_{\phs}}(u,e)$
\State $u_{\mathrm{next}} = u'$
\EndIf
\EndFor
\If{$U' = \phi$}
\vspace{0.2cm}
\For{all $\left(\substack{u'' \in \neigh_{\G}(u) \\ \text{\textbf{and}} \\ \dist_{G_{\phs}}(u_{\curr},u'') =1}\right)$}
\vspace{0.2cm}
\State $u=\argmin_{\substack{v \in \neigh_{\G}(u'') \\ \G(v,u'') \geq D_{s,e} }}$ $ \dist_{G_{\phs}}(v,e)$
\vspace{0.2cm}
\If{$\left(\substack{(i) \text{ No such $u$ exists} \\ \text{\textbf{or}} \\ (ii)~\dist_{G_{\phs}}(u_{\curr},e) > \dist_{G_{\phs}}(u,e)}\right)$}
\vspace{0.2cm}
\State $u =$ $ \argmin_{\substack{v \in \neigh_{\G}(u_{\curr}) \\ \dist_{G_{\phs}}(v,u_{\curr}) = 1 }} \dist_{G_{\phs}}(v,e)$
\EndIf
\If{$d<\dist_{G_{\phs}}(u,e)$}
\State $d= \dist_{G_{\phs}}(u,e)$
\State $u_{\mathrm{next}} = u''$
\EndIf
\EndFor
\EndIf
\State $CommPath_{s,e} = CommPath_{s,e} \cup \{u_{\mathrm{next}}\}$
\State $u_{curr}=u_{\mathrm{next}}$
\EndWhile

\State return($CommPath_{s,e}$)

\end{algorithmic}
\end{algorithm}

\section{Properties of the Routing Algorithms in General Network}
\label{prop_net}
The performance analysis of the distributed routing algorithms in a general virtual graph is a challenging task. However, due to the greedy nature of the routing algorithms in next lemma we can give an upper bound on the latency of a demand for all of the proposed routing algorithms. 

\begin{lemma}
\label{routing_time}
For any physical graph $G_{\phs}$, and a demand matrix $D$ with $|D|=1$, for any source and destination pair $s,e$, with $D_{s,e}=1$, for both modified greedy and local best effort algorithms the expected waiting times are upper bounded by,
\begin{align}
&O(P^{-\diam_{G_{\phs}}})~~~~\text{On-demand model}\\
&O(d_{\ths}\diam_{G_{\phs}})~~~~\text{Continuous model},
\end{align}
where $P = (1-(1-P_0)^{T_{\ths}})$, i.e, the probability of creating an entangled link between two neighbour nodes (in $G$) within $T_{\ths}$ time step. 
\end{lemma}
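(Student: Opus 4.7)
The plan is to decompose the latency into the three phases of Algorithm~\ref{algo_greed_dem}---path discovery, entanglement reservation, and entanglement distribution---and bound each separately for the two models. Because $|D|=1$ with $D_{s,e}=1$, no other demand contends for links, so the phases can be analysed in isolation.

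For the on-demand model, no virtual graph exists, so routing proceeds on $G_{\phs}$. Every node already stores the physical topology, so path discovery simply returns a shortest path of length $d \leq \diam_{G_{\phs}}$ at a classical-communication cost that I absorb into the $O$-constant. The dominant cost is generating all elementary links along this path inside one coherence window of length $T_{\ths}$: by Section~\ref{link_gen}, each elementary link succeeds within one window with probability at least $P = 1-(1-P_0)^{T_{\ths}}$, and because the individual link attempts are independent, a whole attempt succeeds with probability at least $P^{d} \geq P^{\diam_{G_{\phs}}}$. A failed attempt is erased by decoherence after $T_{\ths}$ and the process restarts, giving a geometric number of attempts with mean at most $P^{-\diam_{G_{\phs}}}$. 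Each attempt lasts $T_{\ths}$ time steps, so the expected total is $O(T_{\ths}\cdot P^{-\diam_{G_{\phs}}}) = O(P^{-\diam_{G_{\phs}}})$ after treating $T_{\ths}$ as a fixed network parameter.

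For the continuous model, the background protocol keeps $\G$ populated, so with only a single demand no waiting on link generation is incurred. The latency is dominated by path discovery, in which each hop requires a classical exchange between the current node and its selected neighbour; by Section~\ref{qnet} any virtual neighbour lies within physical distance $d_{\ths}$, so each hop costs $O(d_{\ths})$ time steps. I still have to argue that both Algorithm~\ref{path_greed} and Algorithm~\ref{path_loc_best_eff} produce a path of length at most $\diam_{G_{\phs}}$; I would do this by a monotone-progress argument: for any non-destination $u_{\curr}$, a physical neighbour of $u_{\curr}$ lying on a shortest path from $u_{\curr}$ to $e$ realises distance $\dist_{G_{\phs}}(u_{\curr},e)-1$, and since physical neighbours are contained in $\neigh_{\G}(u_{\curr})$, both the primary $\argmin$ branch and the physical fallback branch return a neighbour whose distance to $e$ strictly decreases. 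Multiplying at most $\diam_{G_{\phs}}$ hops by $O(d_{\ths})$ per hop, and noting that the parallel entanglement swap phase contributes only a constant, yields $O(d_{\ths}\,\diam_{G_{\phs}})$.

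The main obstacle will be making the monotone-progress claim airtight, since greedy routing on a general virtual graph can in principle stall or revisit vertices. The resolution is that the fallback branches of both algorithms explicitly minimise $\dist_{G_{\phs}}(\cdot,e)$ over physical neighbours, and such a physical neighbour realising distance $\dist_{G_{\phs}}(u_{\curr},e)-1$ always exists by connectivity of $G_{\phs}$; the primary branch meets the same bound because physical neighbours, being virtual neighbours as well, are candidates in the relevant $\argmin$. A secondary concern is that virtual links reserved during discovery could decohere before the final swaps, but this is excluded by the standing assumption $T_{\ths}\gg\diam_{G_{\phs}}$ of Section~\ref{qnet}, under which the full discovery time $O(d_{\ths}\,\diam_{G_{\phs}})$ sits comfortably inside a coherence window and no regeneration round is triggered.
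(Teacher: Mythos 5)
Your proposal is correct and follows essentially the same route as the paper: the on-demand bound comes from the success probability $P^{d}$ of generating all elementary links along a path of length $d\leq\diam_{G_{\phs}}$ within one coherence window (inverted to a geometric expected number of attempts), and the continuous bound comes from the monotone-progress observation that each hop strictly decreases $\dist_{G_{\phs}}(\cdot,e)$ while costing at most $d_{\ths}$ in classical communication. Your elaboration of the fallback-to-physical-neighbour argument and the explicit geometric-restart accounting merely make explicit what the paper's proof states more tersely.
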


\begin{proof}
In $G$, the maximum distance $d$ between any source and destination pair is $\diam_{G_{\phs}}$, i.e., $d\leq diam_{G_{\phs}}$. In the on-demand model, as there is no pre-shared entanglement so the proof directly follows from the fact that expected time to generate a $d$-distance entangled link is $(1-(1-P_0)^{T_{\ths}})^d$.

In the continuous model, each node shares the entangled links with its neighbours. As $|D|=1$ and $D_{s,e}=1$, so for the demand $s$ first discovers a path to $e$ and reserve an entangled link along that path. Then entanglement swapping is used along that path to distribute the entanglement. As the nodes do not need to perform any entanglement generation during the routing, and entanglement swap operation is deterministic, the worst case latency would be the time to discover the path from $s$ to $e$ and the time to inform all the nodes for performing the entanglement swap operation. In both of the routing algorithms if a node $u$ puts its neighbour $v$ in the path then $\dist_{G_{\phs}}(v,e) < \dist_{G_{\phs}}(u,e)$. This implies at each step of the path discovery phase the distance from the current node to the destination node is reduced by at least one. As the maximum length of any virtual link is $d_{\ths}$, this implies the total latency is upper bounded by $O(d_{\ths}\diam_G)$. 
\end{proof}
In the continuous model, the pre-shared entangled links are temporary in nature and one link can only be used for once. If $D_{s,e}> cap$ then in this model it is better idea to find different edge-disjoint paths between $s$ and $e$ and use each path to distribute $cap$ number of entangled links. In the next lemma we give an upper bound on $D_{s,e}$, such that the nodes do not need to wait for creating new entangled links.  
\begin{lemma}
\label{lem_cap}
In the continuous network model, for any virtual graph $\G$, and for a demand matrix $D$ with $|D|=1$, if $\mincut_{\G}$ denotes the minimal cut of the graph $\G$ then any two nodes $s,e$ in $\G$ can distribute at least $\mincut_{\G}.cap$ number of EPR pairs before regenerating any new entangled link.
\end{lemma}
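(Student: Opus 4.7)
The plan is to reduce the claim to the classical max-flow min-cut theorem (equivalently Menger's theorem) applied to the virtual graph $\G$, where each edge carries a capacity equal to the number of pre-shared EPR pairs it represents.

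First, I would recall the structure of the continuous model: every pair of virtual neighbours $u,v$ in $\G$ shares exactly $cap$ pre-generated EPR pairs (as stated in Section \ref{qnet}). So I would work with the capacitated graph $\G^{cap}$ obtained from $\G$ by assigning capacity $cap$ to every edge. By Menger's theorem, there exist $\mincut_{\G}$ edge-disjoint paths $\path_1, \path_2, \ldots, \path_{\mincut_{\G}}$ from $s$ to $e$ in $\G$; equivalently the max-flow from $s$ to $e$ in $\G^{cap}$ equals $\mincut_{\G}\cdot cap$.

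Next, I would describe how each edge-disjoint path yields end-to-end EPR pairs without any new link generation. Along any single path $\path_i = (s=u_0, u_1, \ldots, u_{\ell_i}=e)$, each consecutive pair $(u_j,u_{j+1})$ is a virtual neighbour pair in $\G$ and therefore already shares $cap$ EPR pairs. The intermediate nodes can perform entanglement swap operations (which are deterministic in our model) consuming one EPR pair per edge of the path, producing one end-to-end EPR pair between $s$ and $e$. Repeating this $cap$ times along the same path consumes all $cap$ links on every edge of $\path_i$, producing $cap$ end-to-end EPR pairs via $\path_i$ alone. Because the paths $\path_1,\ldots,\path_{\mincut_{\G}}$ are edge-disjoint, the EPR pairs consumed along different paths are disjoint, so the contributions add, giving a total of at least $\mincut_{\G}\cdot cap$ shared EPR pairs between $s$ and $e$ without ever needing to regenerate any link.

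Finally, I would note what the main subtlety is: it is not the combinatorial argument itself, which is a direct application of Menger, but rather the justification that entanglement swapping along an edge-disjoint path can be executed in parallel (or in any order) consuming exactly one link per edge, and that $T_{\ths}$ is large enough that none of the reserved links decohere during this process. Both facts are guaranteed by the assumptions in Section \ref{qnet} that $T_{\ths}\gg \diam_{G_{\phs}}$ and that entanglement swap is deterministic with negligible time cost, so the bound $\mincut_{\G}\cdot cap$ follows without further work.
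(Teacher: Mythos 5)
Your proposal is correct and takes essentially the same route as the paper: both arguments reduce the claim to Menger's theorem, identifying $\mincut_{\G}$ edge-disjoint $s$--$e$ paths in $\G$ and consuming one pre-shared link per edge per end-to-end pair. Your write-up is somewhat more explicit than the paper's (which only notes that links can be used ``until the graph becomes disconnected'') in that you account for the factor of $cap$ per edge and for the physical executability of the swaps, but the underlying idea is identical.
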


\begin{proof}
In a virtual graph, an entangled link can only be used once for teleporting one qubit. In the continuous model we start with a connected virtual graph $\G$ and for sharing an entangled link between a source-destination pair $s,e$ we have to use one entangled link along for each edge along the path from $s$ to $e$. This implies $s,e$ can use the existing links for routing until the graph becomes disconnected. By Menger's Theorem this is equal to the minimal cut of an undirected graph.
\end{proof}

 The fragile nature of the entangled links affects the network topology. Due to this feature, if $|D|>1$ then the topology of the network changes rapidly and sometimes it might become disconnected. This triggers an interesting question regarding the performance of the routing algorithms. The next section focuses on this issue.

\section{Analysis of the Routing Algorithms on Ring and Grid Network Topology}

\label{ring_const}
In this section we consider two physical graphs, a ring network $C_{n}$ and a grid network $Grid_{n' \times n'}$. For the ring network each node in the network has a unique id from $\{0,1, \ldots , n-1\}$. For the grid network we assume that each node has node id $u=(u_a,u_b)$, where $u_a,u_b \in \{0,1, \ldots , n'-1\}$. For the simplicity of our calculation we assume for the ring network, $n=2^m$ for some positive integer $m$ and $d_{\ths} = 2^l$ for some positive integer $l \leq m$. 

 In the next subsections we define how to choose the virtual neighbours in the continuous model and the on demand model.
\subsection{Continuous Model}
\label{qig_ring}

\subsubsection{Virtual $\G$ with Deterministically Chosen entangled links}
\label{det_ring}
In this model, for the ring network, each node chooses its virtual neighbours with a deterministic strategy. If any node with node id $x$ is of the form $2^iy$, where $i,y$ are non-negative integers and $y$ is an odd number or $0$, then that node has $\min(2i,2\log_2 d_{\ths})$ virtual neighbours. Node id of those virtual neighbours are $x+2 (mod~n), x+2^2(mod~n),\ldots , x+2^{\min(i,\log_2 d_{\ths})}(mod~n)$ and $x+n-2 (mod~n), x+n-2^2(mod~n),\ldots , x+n-2^{\min(i,\log_2 d_{\ths})}(mod~n)$. One can also find this type of virtual network topology in \cite{SMIKW16}.  However, in \cite{SMIKW16} Schoute et al. constructed the network only for $d_{\ths}=\diam_{G_{\phs}}$. In figure \ref{chord_8} we give an example of such network.

Similarly, for grid network, each node with node id $u=(u_a,u_b)$ of the form $(2^ix,2^jy)$, where $i,j$ are non-negative integers and $x,y$ are positive odd numbers or zero, chooses 
$k=\min(2\min(i,j),2\log_2 d_{\ths})$ virtual neighbours. The node id of those virtual neighbours are $(u_a+2 (mod~n+1),u_b), (u_a+2^2(mod~n+1),u_b),\ldots , (u_a+2^{k}(mod~n+1),u_b)$ and $(u_a,u_b+2 (mod~n+1), (u_a,u_b+2^2(mod~n+1)),\ldots , (u_a,u_b+2^{k}(mod~n+1))$. In figure \ref{grid_5} we give an example of such network.

\begin{figure}[h!]
\centering
\includegraphics[scale=0.5]{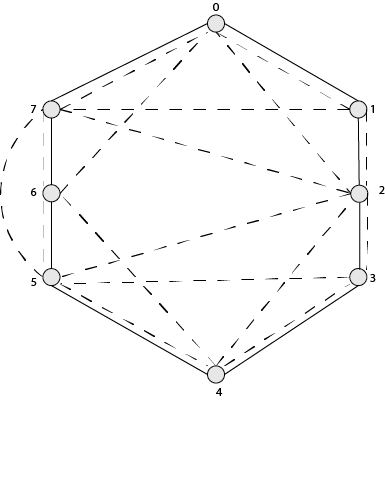}
\caption{Quantum Internet Graph $\G$ with Deterministically Chosen Virtual Links. Here each of the virtual links are denoted by a dotted line. In this example, $d_{\ths}= 2$. Nodes with even node id are of the form $2y$, so, each of such nodes have $2$ virtual neighbours and nodes with odd node id has no long distance neighbour.}
\label{chord_8}
\end{figure}

\begin{figure}[h!]
\centering
\includegraphics[scale=0.5]{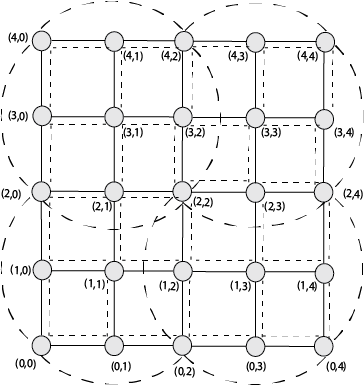}
\caption{Quantum Internet Graph $\G$ with Deterministically Chosen Virtual Links. Here each of the virtual links are denoted by a dotted line. In this example, $d_{\ths}= 2$. Nodes with even node id are of the form $2y$, so, each of such nodes have $2$ virtual neighbours and nodes with odd node id has no long distance neighbour.}
\label{grid_5}
\end{figure}


\subsubsection{Randomly Chosen virtual neighbours}
\label{power_ring}
In this model, for both of the network topologies, each node $u \in V$ has $k = \log_2 d_{\ths}$ virtual neighbours (other than the physical neighbours) and each of such neighbours $v$ is chosen independently without replacement, following the distribution $\pchose(u,v)$. For the ring network, if $\pchose(u,v)$ follows power law distribution then the diameter of the virtual graph is optimal $\alpha=1$. In this simulation we also assume $\alpha=1$. For the grid network  the diameter of the virtual graph is optimal for $\alpha =2$ \cite{klein00}. In this paper we assume $\alpha = 2$.

\subsection{On Demand Model}

In the on-demand model, the nodes do not have any virtual neighbours. So, for this model, we run our simulation with the physical graphs.

\subsection{Lower bound on the Fidelity for Single Source-Destination Pair}
In this section, we assume that for all types of virtual graphs, all of the pre-shared links are available during the routing. 
\begin{theorem}
\label{ent_swap_begin}
In the continuous model with $d_{\ths} > 2$, if the virtual graphs are constructed from a physical network ($G_{\phs}$) like a ring ($C_n$) or a grid ($Grid_{n \times n}$) topology, if $|D|=1$ and for all $i,j \in [0,n-1]$ and if $D_{i,j}\in \{0,1\}$ then for any source destination pair $s,e$, the expected number of required entangled swap operations for sharing an entangled link between $s,e$ is as given in table \ref{ent_swap}.
\end{theorem}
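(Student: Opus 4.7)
The number of entanglement swap operations performed along a routed path equals its length minus one, so the task reduces to bounding the expected length of the path produced by PathDisc for each of the three algorithms and each of the three virtual-graph families. Since $|D|=1$ and $D_{s,e}=1$, every pre-shared virtual link is available throughout the routing, hence the link-availability checks in Algorithms \ref{path_greed}, \ref{path_loc_best_eff} and \ref{path_non_local_best_effort} are vacuously satisfied and the three algorithms reduce to classical greedy (respectively NoN-greedy) routing on $\mathcal{G}$. Lemma \ref{routing_time} already furnishes the crude bound $O(d_{\ths}\cdot \diam_{G_{\phs}})$, but the goal here is to obtain the sharper small-world bounds of Table \ref{ent_swap}.

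The plan is to decompose any greedy path into a \emph{long-distance phase}, during which the current node is at physical-graph distance greater than $d_{\ths}$ from $e$, and a \emph{local phase}, during which the current node sits inside the radius-$d_{\ths}$ ball $B(e)$ around $e$. Every virtual link has physical length at most $d_{\ths}$, so each step of the long-distance phase reduces the distance to $e$ by at most $d_{\ths}$. Conversely, I would show uniformly across the three virtual-graph families that each step of this phase reduces the distance by $\Omega(d_{\ths})$ with high probability: (i) in the deterministic construction by appealing to the Chord-like argument that any node can reach a ``highway'' node of the form $x=2^i y$ with $i\geq \log_2 d_{\ths}$ in $O(\log d_{\ths})$ hops; (ii) in the power-law construction by noting that the maximum of $k=\log_2 d_{\ths}$ independent draws from a density $\propto 1/r^{\alpha}$ supported on $[1,d_{\ths}]$ is $\Theta(d_{\ths})$ with high probability; (iii) in the uniform construction by the same maximum-of-uniforms argument. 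Combined with $\diam_{C_n}=\Theta(n)$ and $\diam_{Grid_{n\times n}}=\Theta(n)$, this yields the $O(n/d_{\ths})$ summand for all six table entries.

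For the local phase I would invoke known small-world results restricted to the ball $B(e)$. In the deterministic case, the powers-of-two chord structure supports a binary-halving argument that terminates in $O(\log d_{\ths})$ hops, unchanged under NoN. In the power-law case ($\alpha=1$ for $C_n$, $\alpha=2$ for $Grid_{n\times n}$), Kleinberg's theorem \cite{klein00,klein99} gives $O(\log^2 d_{\ths})$ hops with a single long-range link per node, and since each node carries $k=\log_2 d_{\ths}$ independent virtual neighbours, I would use the standard boosting argument that treats each step as $k$ parallel attempts to halve the distance, improving the bound to $O(\log d_{\ths})$ for greedy. For the NoN variant I would adapt the Manku--Naor--Wieder analysis \cite{MNW04,NW04} (see also \cite{CGS02}) inside $B(e)$ to obtain $O(\log d_{\ths}/\log\log d_{\ths})$. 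For the uniform construction I would apply the classical observation that greedy routing in small-worlds with exponent $0$ is essentially random search, yielding a local complexity of $O(d_{\ths}/(\log d_{\ths})^2)$; NoN offers no asymptotic improvement because the uniform distribution carries no geometric information that two-hop lookahead can exploit.

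The main technical obstacle is faithfully adapting the Kleinberg and Manku--Naor--Wieder analyses to the truncated setting: their proofs partition the space into $\Theta(\log n)$ distance scales and use the fact that long-range links span the whole network, whereas here the links reach only to distance $d_{\ths}$, so the partition has $\Theta(\log d_{\ths})$ scales and the very last scale must be joined cleanly to the $O(n/d_{\ths})$ long-distance phase without losing constants. A secondary subtlety is verifying that modified-greedy and local-best-effort coincide with classical greedy when $|D|=1$, so the imported analyses apply verbatim, and that the progress lemma driving the long-distance phase is consistent with the tie-breaking rules encoded in Algorithms \ref{path_greed}--\ref{path_non_local_best_effort}.
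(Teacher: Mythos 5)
Your proposal follows essentially the same route as the paper's appendix: a two-phase decomposition into a long-range phase contributing $O(n/d_{\ths})$ (the paper's sets $Z^{(s,e)}_i$, each crossed in $O(1)$ expected hops) and a local phase inside the radius-$d_{\ths}/2$ ball, analysed by a truncated Kleinberg/Manku--Naor--Wieder scale decomposition in which the $k=\log_2 d_{\ths}$ independent virtual links per node boost the per-scale success probability to a constant (and to $k^2$ two-hop attempts for NoN). The one imprecision is that for the power-law graphs a single step clears $\Omega(d_{\ths})$ only with constant probability rather than with high probability, but the geometric-distribution argument you need only requires the constant, so this does not affect the bounds.
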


\textit{Proof Sketch :}
For the deterministic virtual graphs, the required number of swap operations, presented in table \ref{ent_swap}, remain same for all of the routing algorithms and the proofs directly follow from the adaptation of the proofs given in \cite{SMIKW16}. 

Here we give the idea about the proof techniques for the random virtual graphs. Please check appendix \ref{proof_thm1} for a detailed proof.

In all of the proofs, first we divide all of the nodes in the graph into $m'=\frac{2\dist_{G_{\phs}}(s,e)}{d_{\ths}}$ sets, $Z_0,Z_1, \ldots , Z_{m'-1}$, such that for all $0\leq i \leq m'-1$, $$Z_i = \left\{u : |u-e| \le |s-e| - \frac{id_{\ths}}{2}\right\}.$$ In the path discovery phase, all of the algorithms start from the node $s \in Z_0$ and then continue to visit other nodes $Z_1, Z_2,.., Z_{m'-1}$. Any node in the set $Z_{m'-1}$ is at most $\frac{d_{\ths}}{2}$ distance away from the destination $e$. In the proof we show that all of the algorithms discover a path from a node $u \in Z_i$ to another node $v \in Z_{i+1}$ (where $0 \leq i \leq m'-2$) using constant number of hops, i.e, $u$ can share an entangled link with $v$ using a constant number of entanglement swap operations. This proves that after $O(m')$ number of swaps $s$ can share an entangled link with a node $u$ such that $\dist_{G_{\phs}}(u,e) \leq \frac{d_{\ths}}{2}$. As $\dist_{G_{\phs}}(s,e) \leq \diam_{G_{\phs}} = O(n)$ for both $C_n$ and $Grid_{n \times n}$, this implies $m' = O(\frac{n}{d_{\ths}})$.

Next, we give upper bounds on the number of required entanglement swap operations for sharing an entangled link between a node $u\in Z_{m'-1}$ to the destination $e$.

For this, we partition all of the nodes in $Z_{m'-1}$ into $m$ sets, $X_0 \supset X_1 \supset \ldots \supset X_{m-1}$. For local best effort and modified greedy algorithm, for all $0\leq i \leq m-1$, 
\begin{align}
\label{greed}
X_i &= \left\{u : |u-e| \le \frac{d_{\ths}}{2^i}\right\}
\end{align}
For NoN local best effort algorithm, for all $0\leq i \leq m-1$,
\begin{align}
\label{non}
X_i & = \left\{u : |u-e| \le \frac{d_{\ths}}{(\log_2 d_{\ths})^i}\right\}.
\end{align}

\begin{itemize}
\item For the power-law virtual graphs, we show that, for all of the routing algorithms, a node $u' \in X_i$ reaches to another node $v' \in X_{i+1}$ (where $0 \leq i \leq m-2$) using constant number of hops, i.e, $u'$ can share an entangled link with $v'$ using a constant number of entanglement swap operations. This implies any node $u \in Z_{m'-1}$ can share an entangled link with a node $v' \in X_{m-1}$ using $O(m)$ number of entanglement swap operations. 
\begin{itemize}
\item For the local best effort and the modified greedy routing algorithms, if we choose $m = \log_2 d_{\ths}$, then according to the partitions in equation \ref{greed}, all the nodes $v' \in X_{m-1}$ are constant distance away from the destination $e$. This implies for both of the routing algorithms the total number of required entanglement swap is $O(\frac{n}{d_{\ths}}+ \log_2d_{\ths})$.
\item For the NoN local best effort routing algorithm, if we choose $m = \frac{\log_2 d_{\ths}}{\log_2 \log_2 d_{\ths}}$, then according to the partitions in equation \ref{non}, all of the nodes $v' \in X_{m-1}$ are constant distance away from the destination $e$. This implies for the NoN local best effort routing algorithms the total number of required entanglement swaps is $O(\frac{n}{d_{\ths}}+ \frac{\log_2 d_{\ths}}{\log_2 \log_2 d_{\ths}})$.
\end{itemize}   
\item For the uniform virtual graphs, for the local best effort and the modified greedy routing algorithms we show that a node $u' \in X_i$ reaches to another node $v' \in X_{i+1}$ (where $0 \leq i \leq m-2$) using $\frac{2^{i}}{\log_2 d_{\ths}}$ number of hops. If we choose $m=O\left(\log_2\left(\frac{d_{\ths}}{\log_2 d_{\ths}}\right)\right)$ then we get $\sum_{i=0}^{m-1} \frac{2^{i}}{\log_2 d_{\ths}} = O(\frac{d_{\ths}}{(\log_2d_{\ths})^2})$. This implies that both of the routing algorithms require $O(\frac{n}{d_{\ths}}+ \frac{d_{\ths}}{(\log_2d_{\ths})^2})$ entanglement swap operations for distributing entangled link between any source and destination pair.
\item For the uniform virtual graphs, for the NoN local best effort routing algorithm we show that a node $u' \in X_i$ reaches to another node $v' \in X_{i+1}$ (where $0 \leq i \leq m-2$) using $(\log_2 d_{\ths})^{i}$ number of hops. If we choose $m=O\left(\log_{\log_2 d_{\ths}}\left(\frac{d_{\ths}}{\log_2 d_{\ths}}\right)\right)$ then then we get $\sum_{i=0}^{m-1} \frac{2^{i}}{\log_2 d_{\ths}} = O(\frac{d_{\ths}}{(\log_2d_{\ths})^2})$. This implies that NoN local best effort routing algorithm require $O(\frac{n}{d_{\ths}}+ \frac{d_{\ths}}{(\log_2d_{\ths})^2})$ entanglement swap operations for distributing entangled link between any source and destination pair.
\end{itemize}

\qed

\begin{lemma}
\label{fid}
For a ring network $C_n$ and a grid network $Grid_{n\times n}$, in the continuous model, if $|D|=1$ and for some $s,e \in [0,n-1]$, if $D_{s,e}=1$ then for both local best effort and modified greedy routing algorithms the expected fidelity of the shared entangled link between $s,e$ is lower bounded by 
\begin{align}
\nonumber
&~~\text{Deterministic and Power Law Virtual Graphs}\\
&F^{O(\frac{n}{d_{\ths}} + \log d_{\ths})}\\ \nonumber
&~~\text{Uniform Virtual Graphs}\\
&F^{O(\frac{n}{d_{\ths}} + \frac{d_{\ths}}{(\log d_{\ths})^2})},
\end{align}
 where each of the pre-shared link has fidelity $F$. 
\end{lemma}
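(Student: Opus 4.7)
The plan is to reduce Lemma \ref{fid} to the expected-swap-count bounds already established in Theorem \ref{ent_swap_begin}, by combining the multiplicative fidelity rule for entanglement swapping on depolarising links (recalled in the introduction following \cite{BDCZ98}) with Jensen's inequality applied to the convex map $x \mapsto F^x$.

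First I would set up the pathwise fidelity bound. Let $X$ denote the random number of entanglement swap operations performed by the routing algorithm along the path it discovers from $s$ to $e$; the discovered path then consists of $X+1$ pre-shared entangled links, each of fidelity at least $F$ by the assumption of the section. By iterating the single-swap inequality $F(\rho_{\mathrm{swap}}) \geq F_1 F_2$ for the fidelity of the state obtained by swapping two links of fidelities $F_1, F_2$, a straightforward induction on the number of swaps already performed yields $F_{\mathrm{end}} \geq F^{X+1}$, independently of the binary-tree order in which the swaps are scheduled. Next, I would take expectations over the random choice of path. Since $0 \leq F \leq 1$, the function $g(x) = F^x = e^{x \ln F}$ has $g''(x) = (\ln F)^2 F^x \geq 0$, hence is convex on $[0,\infty)$, so Jensen's inequality gives
\begin{equation}
\e[F_{\mathrm{end}}] \;\geq\; \e[F^{X+1}] \;\geq\; F^{\,\e[X]+1}.
\end{equation}

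Finally I would substitute the bounds from Table \ref{ent_swap}: Theorem \ref{ent_swap_begin} gives $\e[X] = O(n/d_{\ths} + \log d_{\ths})$ for both the modified greedy and the local best effort algorithms on the deterministic and power-law virtual graphs, and $\e[X] = O(n/d_{\ths} + d_{\ths}/(\log d_{\ths})^2)$ on the uniform virtual graph (the ``Greedy Routing'' column of Table \ref{ent_swap} covers both algorithms, as noted in the proof sketch of Theorem \ref{ent_swap_begin}). Since $F \leq 1$, the map $y \mapsto F^{y+1}$ is monotonically non-increasing, so the $O(\cdot)$ upper bound on the exponent translates directly into the two claimed lower bounds on $\e[F_{\mathrm{end}}]$. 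The main technical point to verify is the pathwise inequality $F_{\mathrm{end}} \geq F^{X+1}$ for an arbitrary schedule of swaps rather than a single fixed order; this is essentially routine because the multiplicative bound composes correctly for the depolarising noise model (cf. \cite{BDCZ98}), so the bound is insensitive to the tree structure of swap operations and no additional probabilistic machinery is needed beyond the single Jensen step above.
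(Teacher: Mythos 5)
Your proposal is correct and follows essentially the same route as the paper: reduce the claim to the expected entanglement-swap counts of Table \ref{ent_swap} and apply the multiplicative fidelity rule $F_1F_2$ per swap, so that a path with $X$ swaps yields end-to-end fidelity at least $F^{X+1}$. The only difference is that you make the passage from the \emph{expected number of swaps} to the \emph{expected fidelity} explicit via Jensen's inequality for the convex map $x \mapsto F^{x}$, a step the paper's one-line proof leaves implicit but which is genuinely needed for the random virtual graphs where the swap count is a random variable.
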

\begin{proof}
The proof of this lemma directly follows from the upper bounds on the number of entanglement swap operation, given in table \ref{ent_swap} and the fact that the fidelity of the entangled state, generated after performing one entanglement swap operation between two links with fidelity $F$, is $F^2$. 
\end{proof}
We can get the following corollary by replacing $d_{\ths}$ by the diameter of $C_n$ and $Grid_{n\times n}$.
\begin{corollary}
\label{fid_corr}
For a ring network $C_n$ and a grid network $Grid_{n\times n}$, in the continuous model with $d_{\ths}=\lceil\frac{n}{2}\rceil$, if $|D|=1$ and for some $s,e \in [0,n-1]$, if $D_{s,e}=1$ then for both local best effort and modified greedy routing algorithms the expected fidelity of the shared entangled link between $s,e$ is lower bounded by 
\begin{align}
\nonumber
&~~\text{Deterministic and Power Law Virtual Graphs}\\
&F^{O( \log n)}\\ \nonumber
&~~\text{Uniform Virtual Graphs}\\
&F^{O(\frac{n}{(\log n)^2})},
\end{align}
 where each of the pre-shared link has fidelity $F$. 
\end{corollary}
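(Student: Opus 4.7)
The plan is essentially to invoke Lemma~\ref{fid} as a black box and simplify the two asymptotic expressions after substituting $d_{\ths} = \lceil n/2 \rceil$. There is no new combinatorial argument needed: the bounds in Lemma~\ref{fid} were stated for an arbitrary threshold $d_{\ths}$, and this corollary is the specialisation to the maximum useful value of $d_{\ths}$, namely the (physical) diameter of $C_n$ and $Grid_{n\times n}$, which is $\Theta(n)$ in both cases.

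First I would check that Lemma~\ref{fid} is applicable in this regime. The construction of the virtual graph in Section~\ref{qig_ring} (both the deterministic rule and the random rules using $\pchose$) only requires $\dist_{G_{\phs}}(u,v)\le d_{\ths}$, so taking $d_{\ths}=\lceil n/2\rceil$ is admissible and, in the ring case, every ordered pair $(u,v)$ is a candidate virtual neighbour. I would also note that the premise $d_{\ths}>2$ of Theorem~\ref{ent_swap_begin}, from which Lemma~\ref{fid} is derived, holds for all $n\ge 5$ (the small $n$ cases are trivial). Hence the lower bounds of Lemma~\ref{fid} on the expected end-to-end fidelity apply verbatim with $d_{\ths}=\lceil n/2\rceil$.

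Second, I would perform the two asymptotic simplifications. Writing $d_{\ths}=\Theta(n)$ gives $n/d_{\ths}=O(1)$, $\log d_{\ths} = \Theta(\log n)$, and $d_{\ths}/(\log d_{\ths})^2 = \Theta(n/(\log n)^2)$. Plugging these into the exponent $O\!\left(\tfrac{n}{d_{\ths}}+\log d_{\ths}\right)$ from the deterministic and power-law cases of Lemma~\ref{fid} yields $O(\log n)$; plugging them into the exponent $O\!\left(\tfrac{n}{d_{\ths}}+\tfrac{d_{\ths}}{(\log d_{\ths})^2}\right)$ from the uniform case yields $O(n/(\log n)^2)$. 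Since a fidelity lower bound of the form $F^{g(n)}$ with $g(n)$ increased to a larger asymptotic expression remains a valid lower bound, we obtain the two displayed bounds in the corollary.

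There is no serious obstacle here, since the corollary is purely a restatement of Lemma~\ref{fid} at a particular value of $d_{\ths}$. The only subtlety worth a one-line remark in the write-up is that, for the grid, the true diameter is $2(n-1)$ rather than $\lceil n/2\rceil$, but because $n/d_{\ths}$ only contributes $O(1)$ to the exponent for \emph{any} $d_{\ths}=\Theta(n)$, the choice $d_{\ths}=\lceil n/2\rceil$ is enough to drive the $n/d_{\ths}$ term into the constant regime, and the dominant terms $\log d_{\ths}$ respectively $d_{\ths}/(\log d_{\ths})^2$ behave identically up to constants for any such choice.
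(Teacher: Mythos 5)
Your proposal is correct and matches the paper's own justification, which is simply to substitute $d_{\ths}=\lceil n/2\rceil$ into Lemma~\ref{fid} and simplify $\frac{n}{d_{\ths}}=O(1)$, $\log d_{\ths}=\Theta(\log n)$, and $\frac{d_{\ths}}{(\log d_{\ths})^2}=\Theta(\frac{n}{(\log n)^2})$. Your additional remarks on the admissibility of this value of $d_{\ths}$ and on the grid diameter are sensible sanity checks but do not change the argument.
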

\subsection{Performance of Greedy Routing Algorithm with Multiple Source-Destination Pairs on Ring Network}

In this section, we study the performance of the greedy routing algorithm when we have multiple source-destination pairs. We evaluate the average entanglement distribution time for all of the models using MATLAB simulations. 

\subsubsection{Simulation Setup} 
For the simulation we choose $C_{32}$ and $Grid_{5 \times 5}$ as the physical graphs. We assume that the $\dist_{\phys}=10 km$, this implies, each time step is equivalent to $\frac{10}{c} \approx 0.00006$ seconds ($c$ denotes the speed of light in the optical fiber). In this paper we assume the threshold fidelity $F_{\ths}$ is $0.8$ and $p=0.9993$. If we plug in all the values of these parameters in equation \ref{ths} then we get $T_{\ths} \approx 1000$ steps, which is equivalent to $0.06$ seconds. Note that, the maximum classical communication time between any two nodes in both grid and ring network is of the order $10^{-5}$ seconds, which is much less than $T_{\ths}$. This makes the classical communication time negligible compare to the memory storage time. 

In present day practical implementations for basic heralded entanglement link generation the value of $P_0$ is more than of the order $10^{-4}$ \cite{SSMS14,RYGR18}. Here, for the analysis of worst case scenario we assume $P_0 = 0.0003$. If we plug in the values of all the parameters in the expression $(1-(1-P_0)^{T_{\ths}})$, then we get the probability of generating an entangled link between two physical neighbour nodes within the time window $T_{\ths}$ is at least $0.25$.

On top of the physical graph, we construct the virtual graph using the techniques proposed in section \ref{ring_const}. For each of the virtual graph, first we fix $|D|$ and then randomly generate the demand matrix $D$. Once we generate the demand matrix, then for each value of $D_{i,j}$ we compute the latency for entanglement distribution and update the topology of the virtual graph. For each value of $|D|$ we take $10000$ samples of $D$ and compute the \textit{AL} by taking the average over all the samples. In the simulation we also assume that $D_{i,j} \leq cap$. 

For routing in the continuous model, first the nodes discover a path from a source to a destination and reserve the entangled links along the path. If all of the required entangled links are available then the intermediate nodes just perform entanglement swap operation, otherwise the demand waits until all of the virtual links along the path are being generated. During the path discovery and entanglement reservation phase, we assume that there is no collision between two demands for reserving one link. One can easily overcome this assumption by adding some priority corresponding to each demand. The nodes distribute an entangled link between a source and a destination using the procedure proposed in section \ref{link_gen}. For each demand we compute the waiting time, which is composed of path discovery time, time for generating missing entangled links and communication time for informing all the nodes in the path for performing entanglement swap operation.

For the randomly chosen virtual neighbour models, we first sample the neighbours and construct the corresponding virtual graph from the physical graphs. We take $100$ such samples of virtual graphs and for each of such sample graphs, we compute the \textit{AL} by taking the average over all the samples of $|D|$ and virtual graphs. 
On the basis of this setup, in the next sections, we study the effect of different design parameters on the \textit{AL}. 

\subsubsection{Discussions on the Simulation Results}
\label{disc}
We begin our discussion with the simulations of the figures \ref {comp_ring1}, \ref {comp_ring2}, \ref {comp_ring3}. In all of the subfigures, we compare the proposed distributed routing algorithms. All of the figures basically show our intuition mentioned in section \ref{greed_route}, regarding the inefficiency of the classical greedy routing algorithms. The witness of inefficiency is more evident for $d_{\ths}=4$, as the penalty for an unwise decision of jumping increases with the distance. A closer observation of the figures also reveals another fact that local best effort performs better than the modified greedy algorithm when $|D|$ is smaller and the situation changes for the favour of modified greedy algorithms with an increase of $|D|$. The intuitive reason behind it is that the local best effort algorithm tries to use the existing entangled links as much as possible before generating a new one. This gives an advantage over the modified greedy algorithm. However, for large $|D|$ it quickly exhausts all of the pre-shared links and converges to the on-demand network model. 

Next, we focus on comparing the performance of both of the routing algorithms when the nodes pre-share four EPR pairs with each of its neighbours (see figure \ref{cap4dem2}) and each demand can ask for at most two EPR pairs. Intuitively, for these type of demands, continuous network model takes the advantage of the pre-shared entangled links and it outperforms the on-demand one.

In figure \ref{cap4dem4} we are interested in studying the behaviour of the routing algorithms when the nodes demand a high number (between two to four) of entangled links and each node pre-share four entangled links with each of its neighbours. For the ring network, in figure \ref{ringcap4dem4a}, \ref{ringcap4dem4c} we observe that for higher number of demands, deterministic virtual graphs are better compared to randomised one. However, figures \ref{gridcap4dem4a} and \ref{gridcap4dem4c} suggest that for the grid network it is wise to choose the virtual neighbours uniformly randomly. 

In the simulation corresponding to figure \ref{cap4largedist} all of the source destination pairs have distance at least $d_{\ths}$ and we observe that this scenario resembles with the simulations of figure \ref{cap4dem4}, where each source and destination pair asks for a high number of entangled links. We refer to such demands as long-distance demands. The intuitive reason for this similar behaviour is that here the source and destination pairs are generated randomly and as the size of the set of long-distance source and destination pairs are much larger, so most of the demands in the simulation of figure \ref{cap4dem4} are long distance demands. As the impact of the long-distance demands on latency is large compared to the short distance ones, so on average both of the simulations behave similarly. 

In all of the simulations we observe that for the grid network, the separation in AL between the on-demand and the continuous model is very high compare to the ring network. This can be explained by using the idea of lemma \ref{lem_cap}. Due to the higher connectivity among the nodes in the grid network it is easier to find edge disjoint paths for different demands.


\begin{figure*}[ht]
 \begin{subfigure}{4cm}
    \centering\includegraphics[width=8cm]{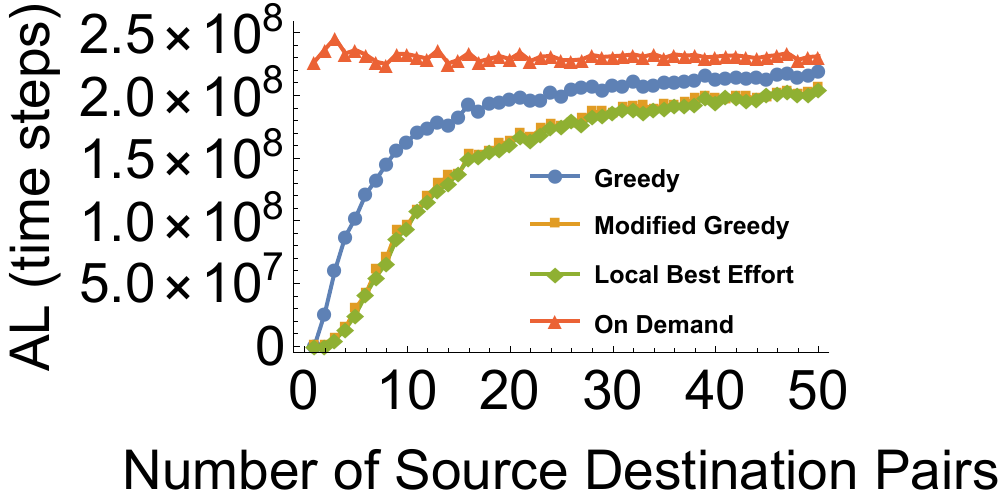}
    \caption{Ring,Deterministic $d_{\ths}=2$}
    \label{det_comp2}
  \end{subfigure}
  \hspace{5cm}
   \begin{subfigure}{4cm}
    \centering\includegraphics[width=8cm]{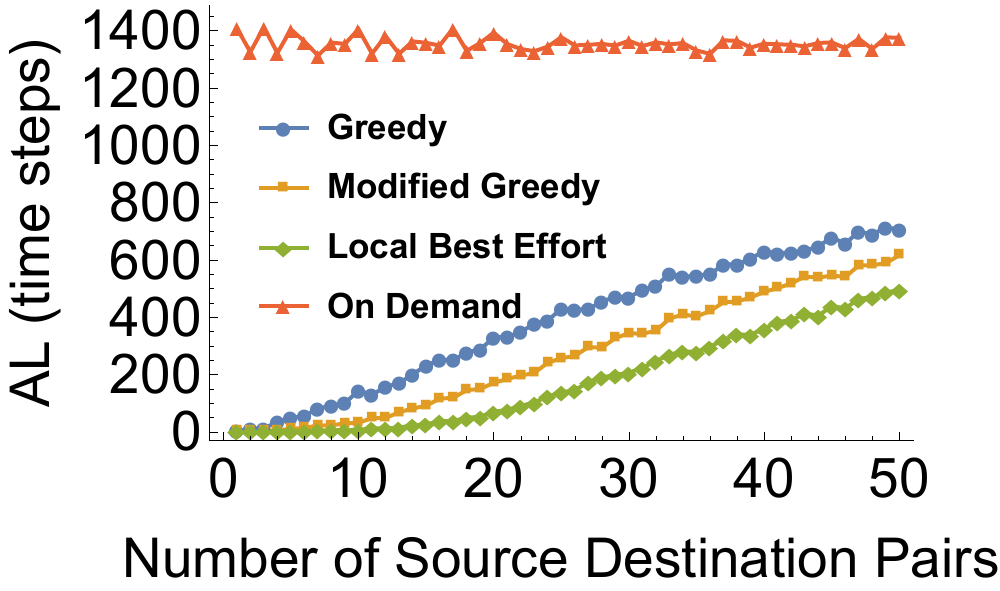}
    \caption{Grid, Deterministic $d_{\ths}=2$}
    \label{det_comp_grid2}
  \end{subfigure}

  \begin{subfigure}{4cm}  
   \centering \includegraphics[width=8cm]{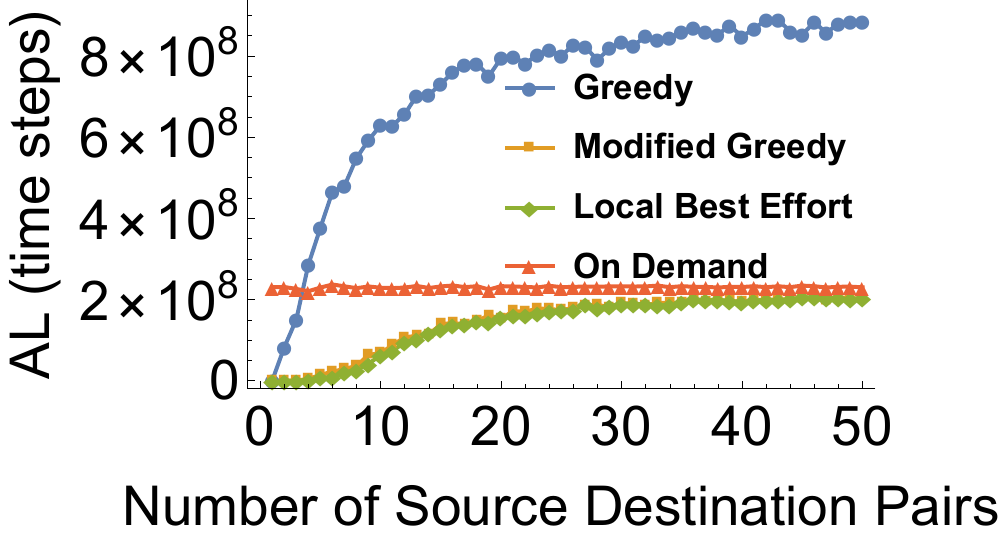}
    \caption{Ring,Deterministic $d_{\ths}=4$}
    \label{det_comp4}
  \end{subfigure}
 \hspace{5cm}
  \begin{subfigure}{4cm}
    \centering\includegraphics[width=8cm]{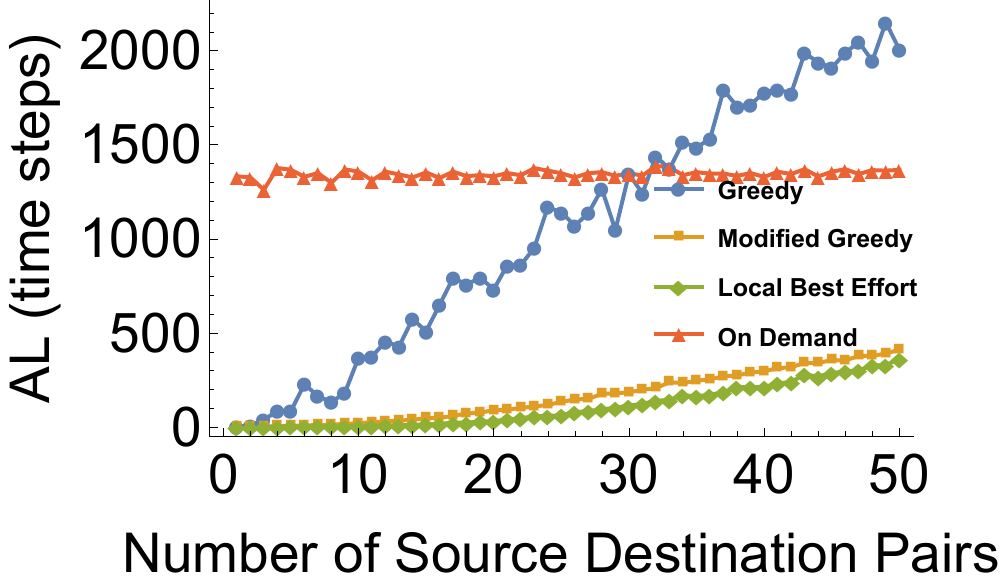}
    \caption{Grid, Deterministic $d_{\ths}=4$}
    \label{det_comp_grid4}
  \end{subfigure}
   \caption{Comparison of different routing algorithms on the deterministic virtual graphs, with $D_{i,j}=1$ and $cap=1$.}
\label{comp_ring1}
   \end{figure*}   
   
 \begin{figure*}[ht]
    \begin{subfigure}{4cm}
    \centering\includegraphics[width=8cm]{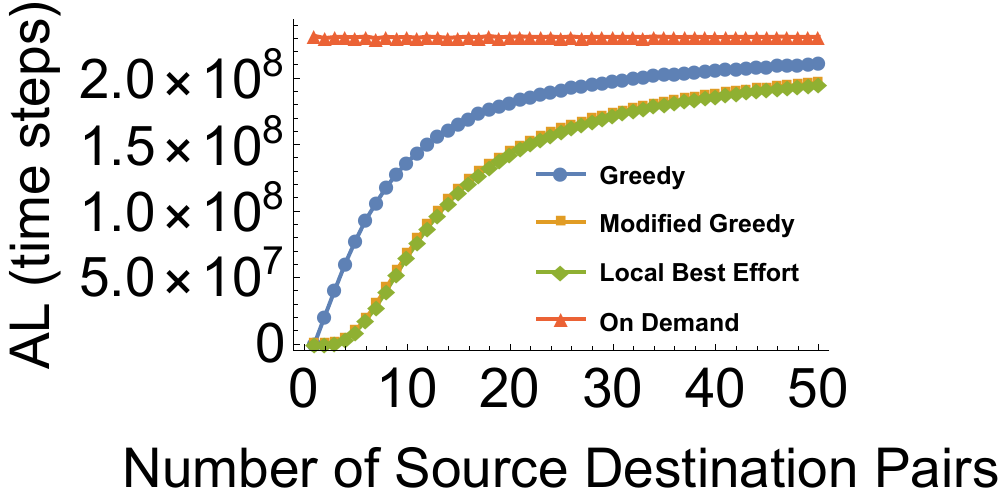}
    \caption{Ring, Power Law, $d_{\ths}=2$}
    \label{power_comp2}
  \end{subfigure}
   \hspace{5cm} 
  \begin{subfigure}{4cm}
    \centering\includegraphics[width=8cm]{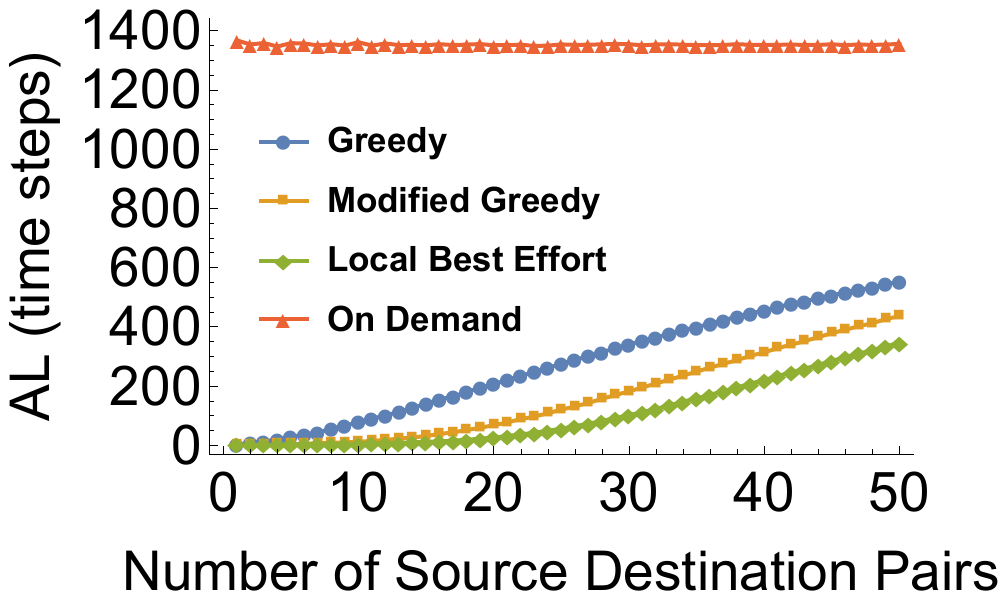}
    \caption{Grid, Power Law, $d_{\ths}=2$}
    \label{power_comp_grid2}
  \end{subfigure}

  \begin{subfigure}{4cm}
    \centering\includegraphics[width=8cm]{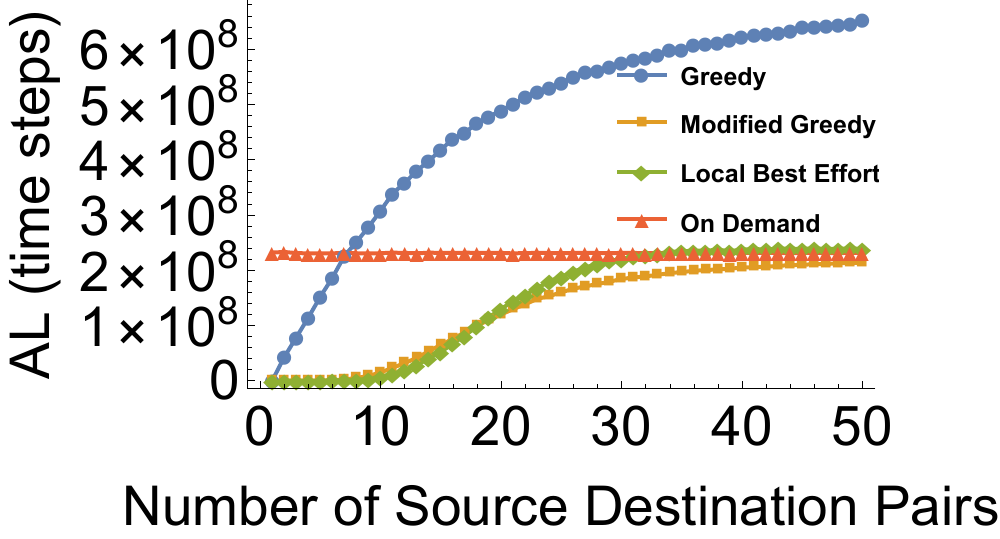}
    \caption{Ring, Power Law, $d_{\ths}=4$}
    \label{power_comp4}
  \end{subfigure}
   \hspace{5cm}
    \begin{subfigure}{4cm}
    \centering\includegraphics[width=8cm]{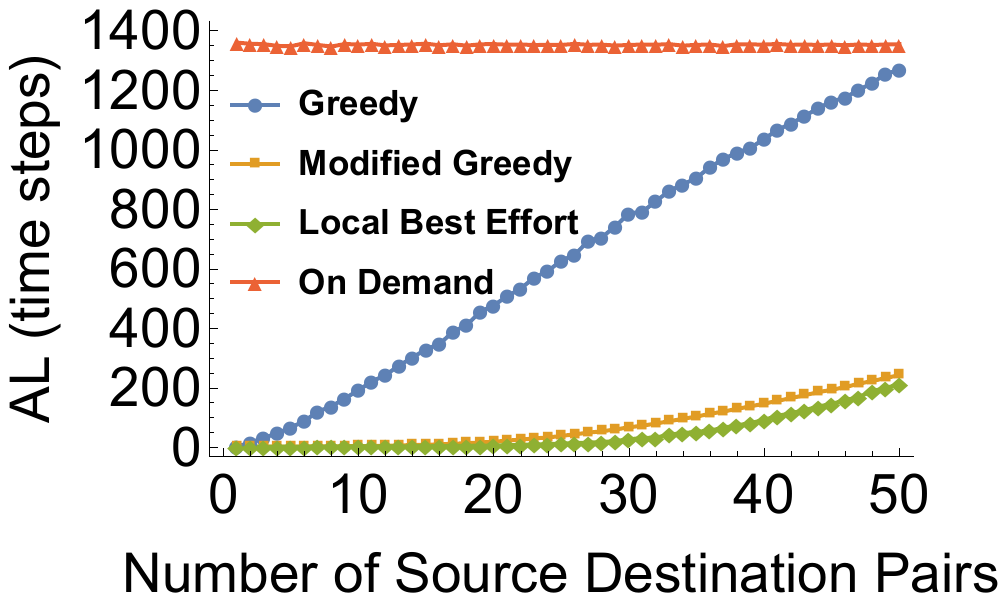}
    \caption{Grid, Power Law, $d_{\ths}=4$}
    \label{power_comp_grid4}
  \end{subfigure}
  \caption{Comparison of different routing algorithms on the power-law virtual graphs, with $D_{i,j}=1$ and $cap=1$.}
\label{comp_ring2}
 \end{figure*}
 
 In figure \ref{capnondist} we compare the performance of the NoN local best effort algorithm with other routing algorithms. In figure \ref{ringcap4nona} we can observe that NoN algorithms do not give any advantage in deterministic ring network. In table \ref{ent_swap} we also have this kind of observation. However, for random virtual graphs, in figure \ref{ringcap4nonb} and \ref{ringcap4nonc} we observe that this algorithm performs better than the other two algorithms for smaller number of demands. If the number of demands increases then we observe the worst performance for NoN local best effort algorithm and the best (among the three proposed algorithms) performance for the modified greedy routing algorithm. The main reason behind this is that the rate of consuming existing the existing entangled link for NoN local best effort algorithm is higher than the other ones. Due to this feature, for NoN local best effort algorithm the continuous model converges to the on-demand model very fast, hence after certain number of demands it shows worse performance.

 \begin{figure*}[ht]
   \begin{subfigure}{4cm}
    \centering\includegraphics[width=8cm]{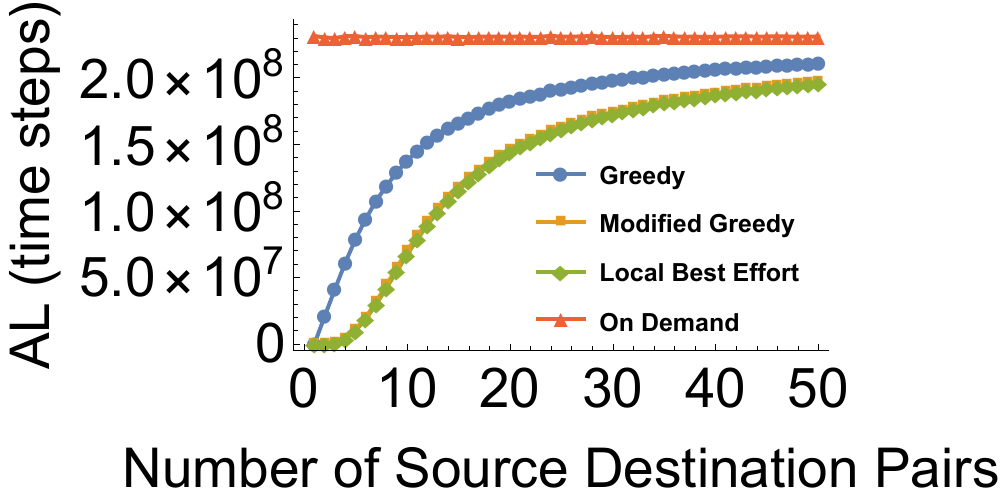}
    \caption{Ring, Uniform, $d_{\ths}=2$}
    \label{unif_comp2}
  \end{subfigure}
   \hspace{5cm}
   \begin{subfigure}{4cm}
    \centering\includegraphics[width=8cm]{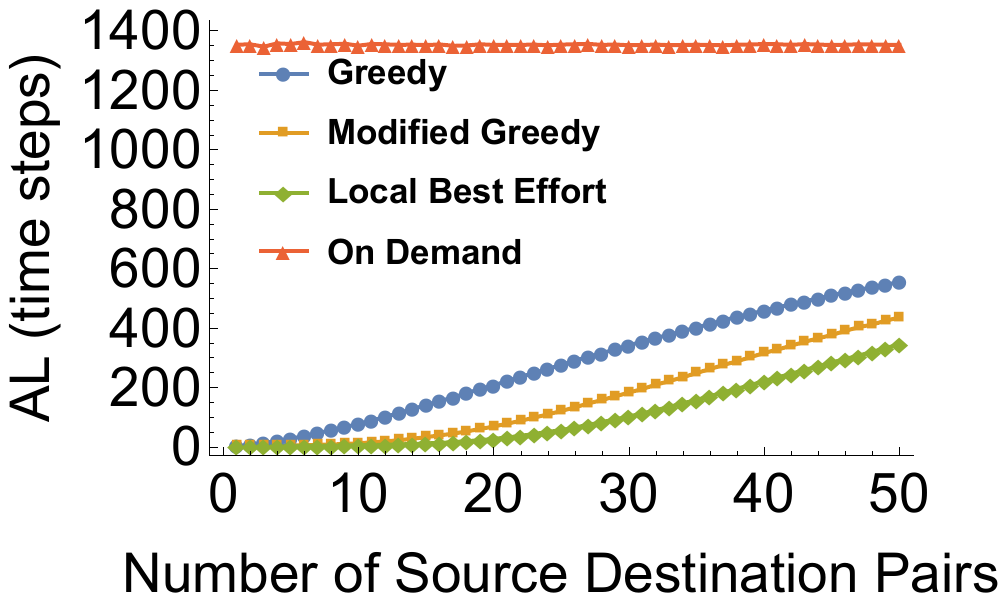}
    \caption{Grid, Uniform, $d_{\ths}=2$}
    \label{unif_comp_grid2}
  \end{subfigure}

 \begin{subfigure}{4cm}
    \centering\includegraphics[width=8cm]{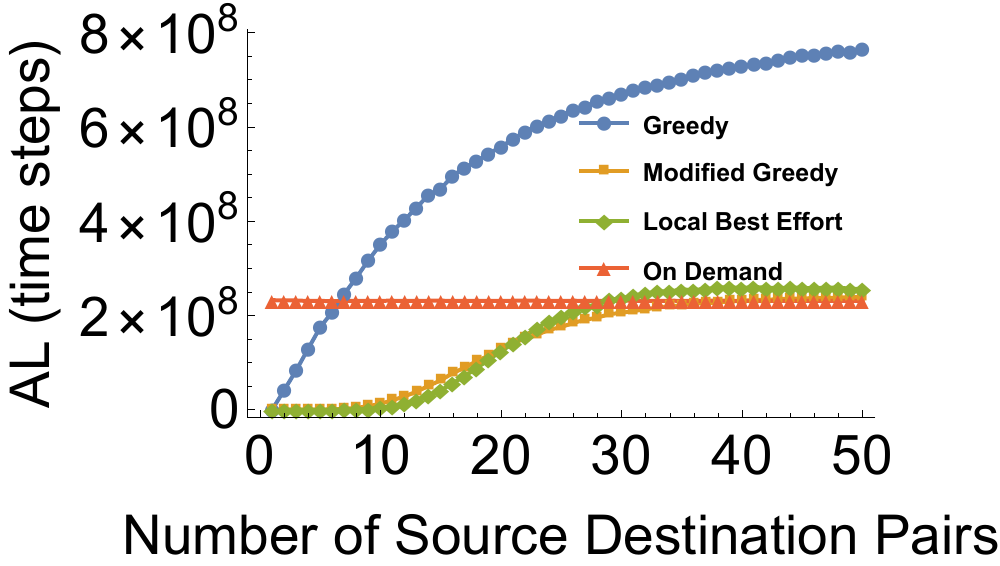}
    \caption{Ring, Uniform, $d_{\ths}=4$}
    \label{unif_comp4}
  \end{subfigure}
  \hspace{5cm}
  \begin{subfigure}{4cm}
    \centering\includegraphics[width=8cm]{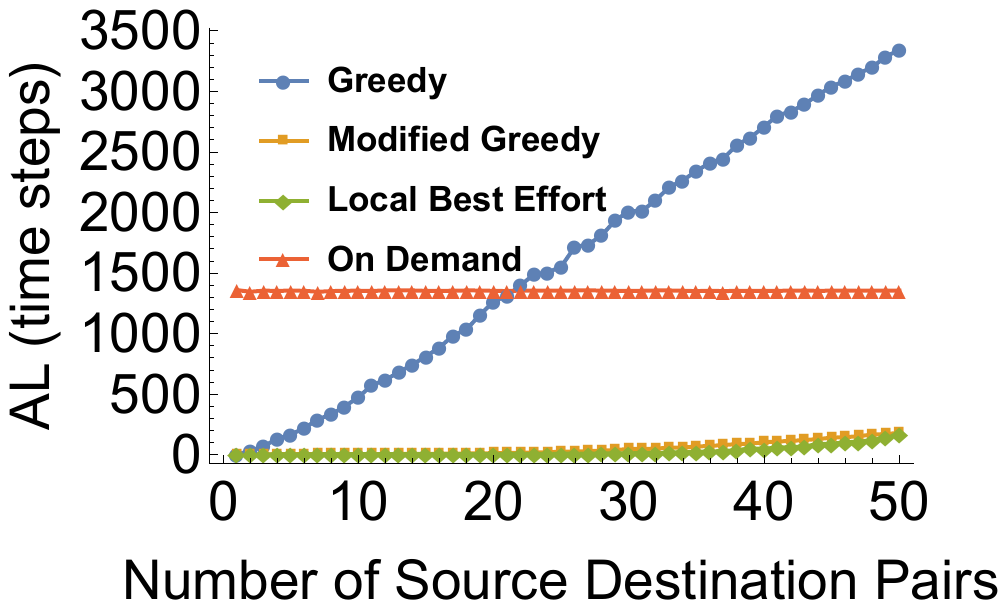}
    \caption{Grid, Uniform, $d_{\ths}=4$}
    \label{unif_comp_grid4}
  \end{subfigure}
\caption{Comparison of different routing algorithms on uniform virtual graphs, with $D_{i,j}=1$ and $cap=1$.}
\label{comp_ring3}
\end{figure*}  



\begin{figure*}[tp]
\begin{subfigure}{4cm}
    \centering
    \includegraphics[width=7cm]{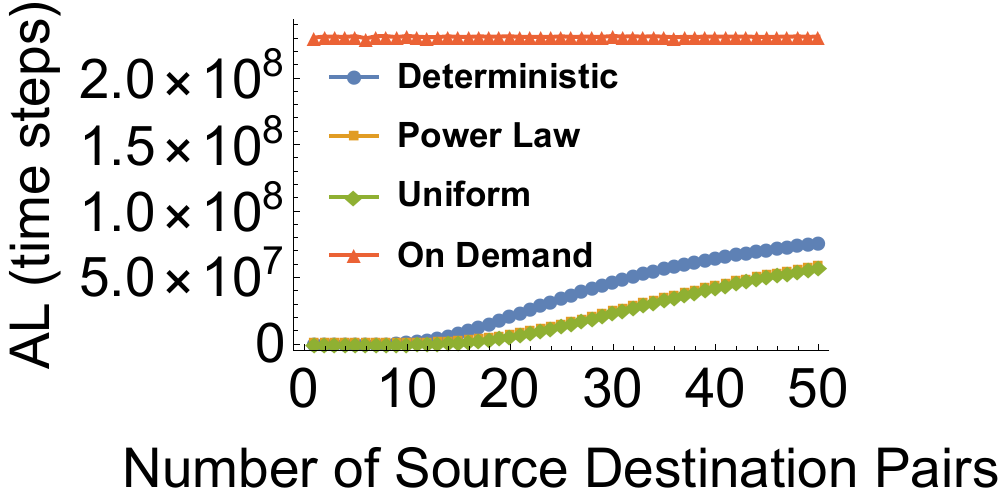}
    \caption{Ring,Modified Greedy, $d_{\ths}=2$}
 \label{ringcap4dem2a}
   \end{subfigure}
  \hspace{5cm}
  \begin{subfigure}{4cm}
    \centering
  \includegraphics[width=7cm]{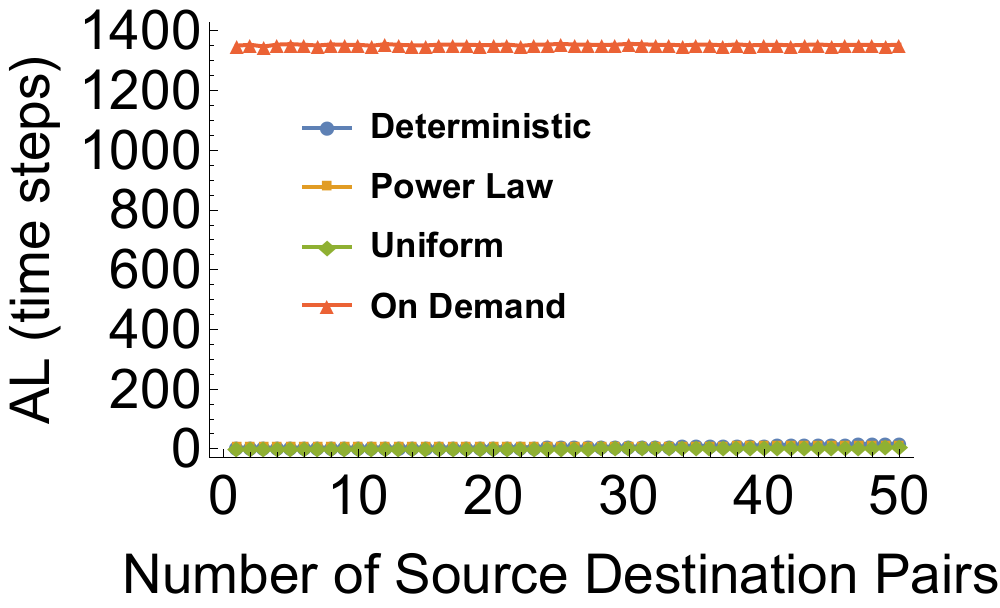}
    \caption{Grid, Modified Greedy, $d_{\ths}=2$}
 \label{gridcap4dem2a}
 \end{subfigure}

  \begin{subfigure}{4cm}
    \centering
    \includegraphics[width=7cm]{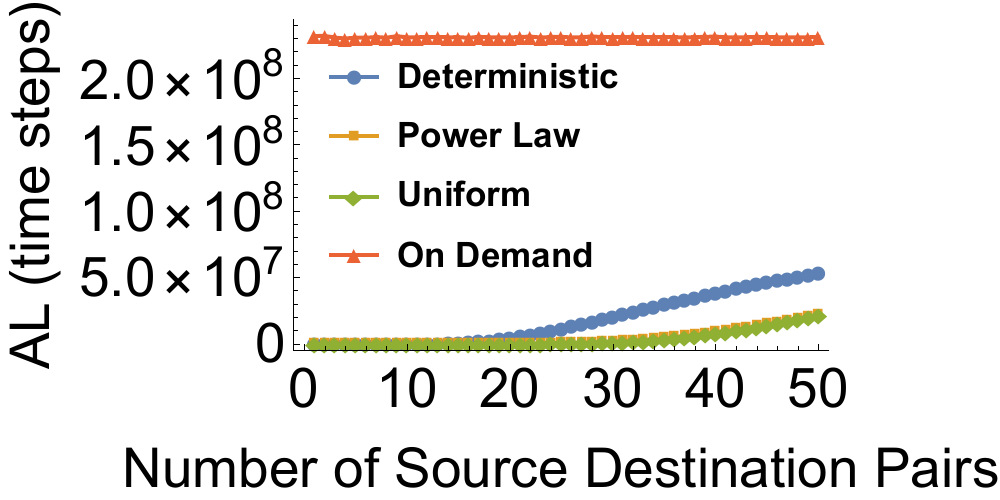}
     \caption{Ring, Modified Greedy, $d_{\ths}=4$}
 \label{ringcap4dem2b}
 \end{subfigure}
  \hspace{5cm}
\begin{subfigure}{4cm}
    \centering
   \includegraphics[width=7cm]{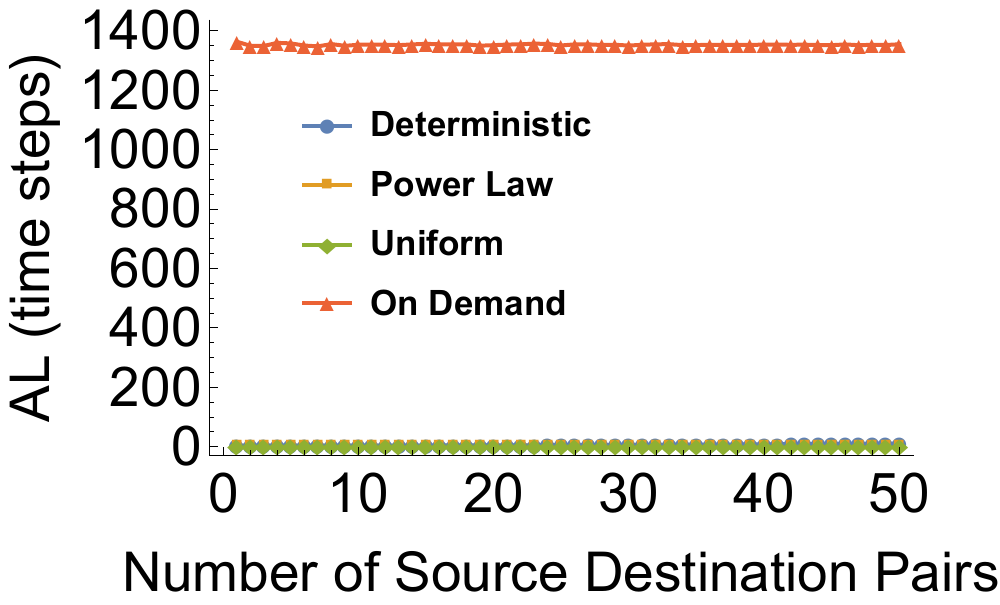}
      \caption{Grid, Modified Greedy, $d_{\ths}=4$}
 \label{gridcap4dem2b}
  \end{subfigure}

  \begin{subfigure}{4cm}
    \centering
    \includegraphics[width=7cm]{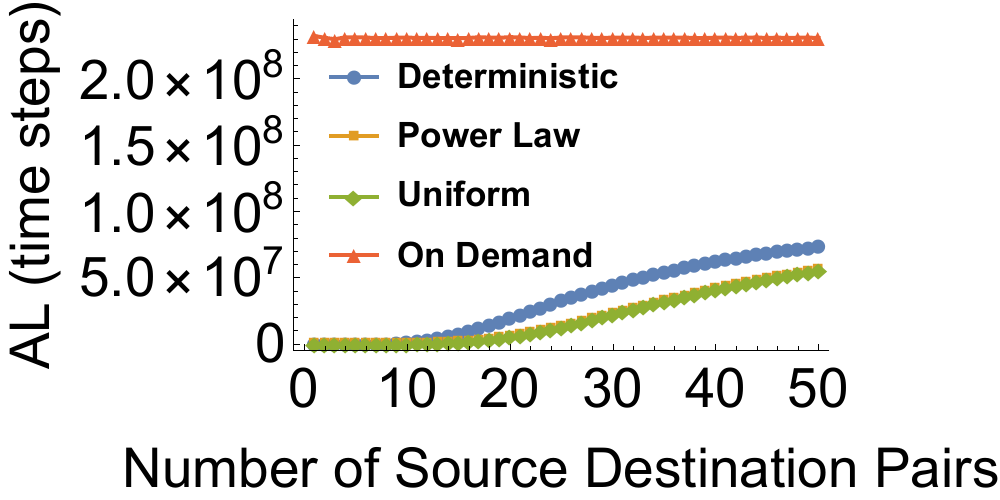}
    \caption{Ring, Local Best Effort, $d_{\ths}=2$}
 \label{ringcap4dem2c}
 \end{subfigure}
  \hspace{5cm}
  \begin{subfigure}{4cm}
    \centering
    \includegraphics[width=7cm]{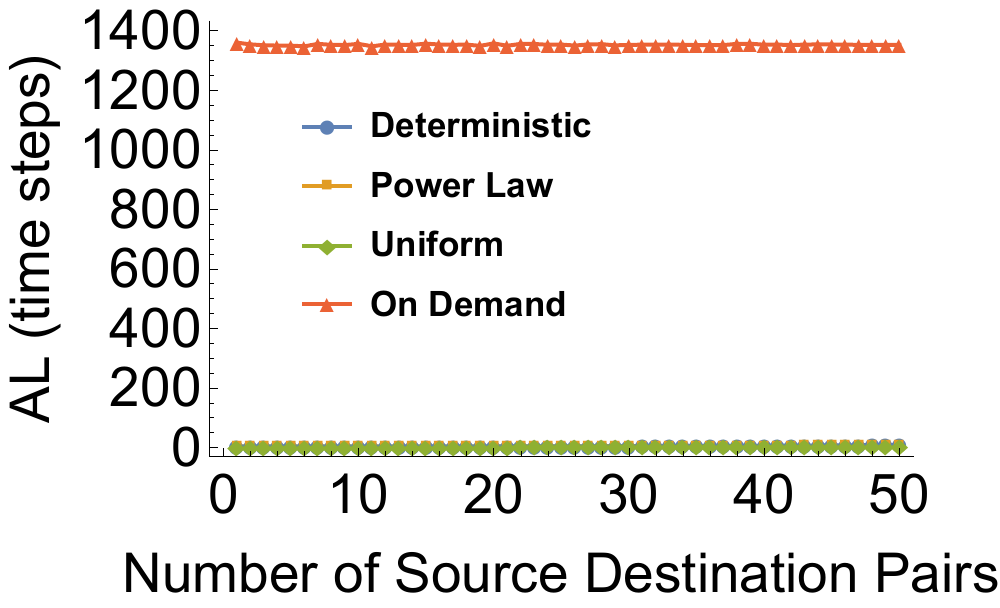}
    \caption{Grid, Local Best Effort, $d_{\ths}=2$}
 \label{gridcap4dem2c}
 \end{subfigure}

 \begin{subfigure}{4cm}
    \centering
    \includegraphics[width=7cm]{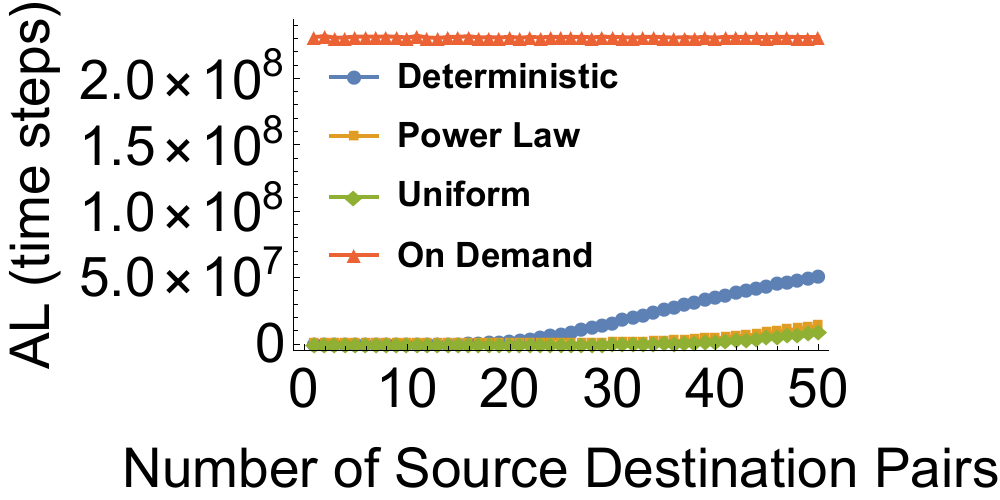}
     \caption{Ring, Local Best Effort, $d_{\ths}=4$}
 \label{ringcap4dem2d}
  \end{subfigure}
  \hspace{5cm}
 \begin{subfigure}{4cm}
    \centering
    \includegraphics[width=7cm]{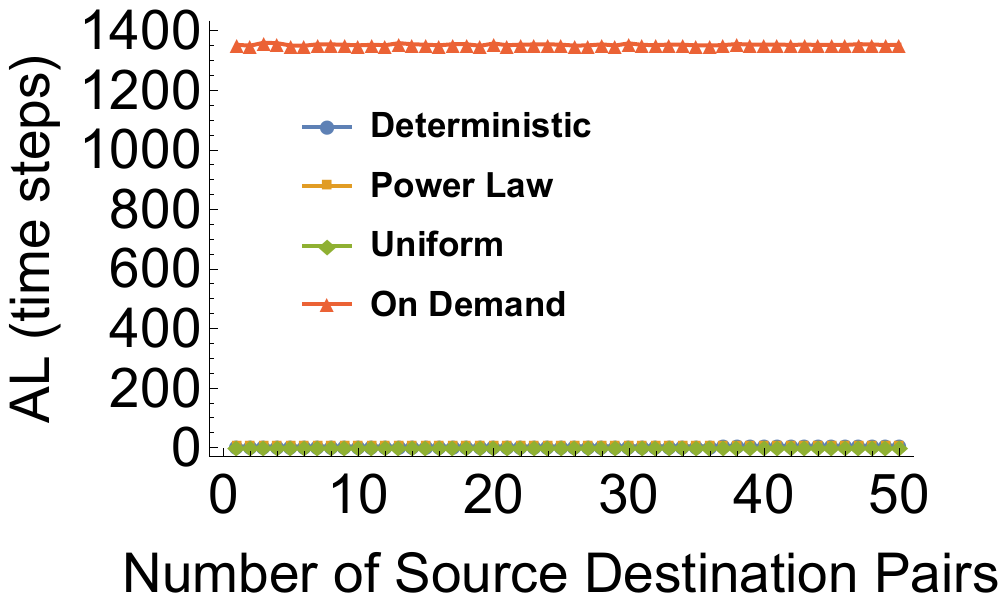}
     \caption{Grid, Local Best Effort, $d_{\ths}=4$}
 \label{compgridcap4dem2b}
 \end{subfigure}

\caption{Performance of the routing algorithms for different types of virtual graphs with $D_{i,j}\leq 2$ and $cap=4$.}
  \label{cap4dem2}
\end{figure*}


\begin{figure*}[tp]

 \begin{subfigure}{4cm}
    \centering
    \includegraphics[width=7cm]{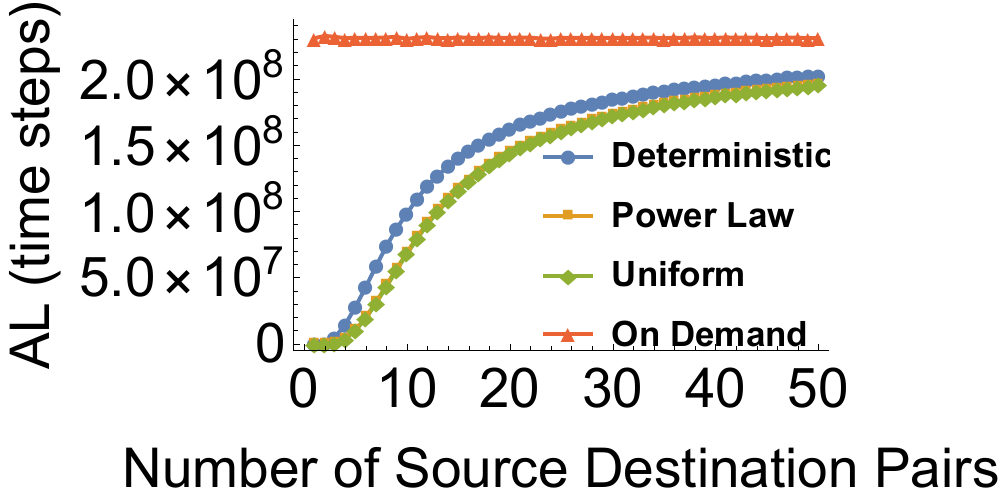}
    \caption{Ring, Modified Greedy, $d_{\ths}=2$}
 \label{ringcap4dem4a}
  \end{subfigure}
  \hspace{5cm}
 \begin{subfigure}{3.4cm}
    \centering
  \includegraphics[width=7cm]{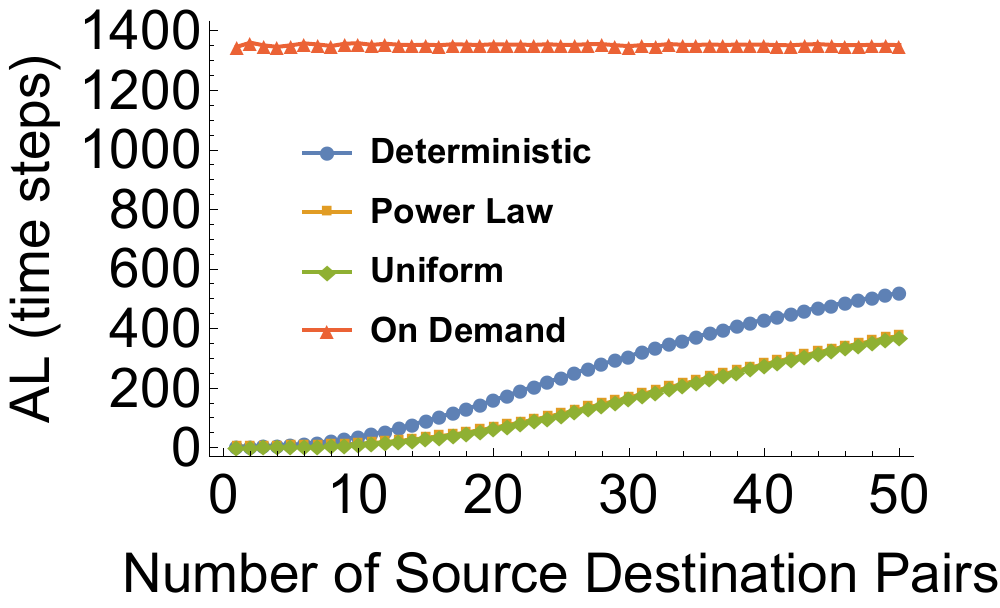}
    \caption{Grid, Modified Greedy, $d_{\ths}=2$}
 \label{gridcap4dem4a}
   \end{subfigure}
  \hspace{0.7cm}

 \begin{subfigure}{4cm}
    \centering
    \includegraphics[width=7cm]{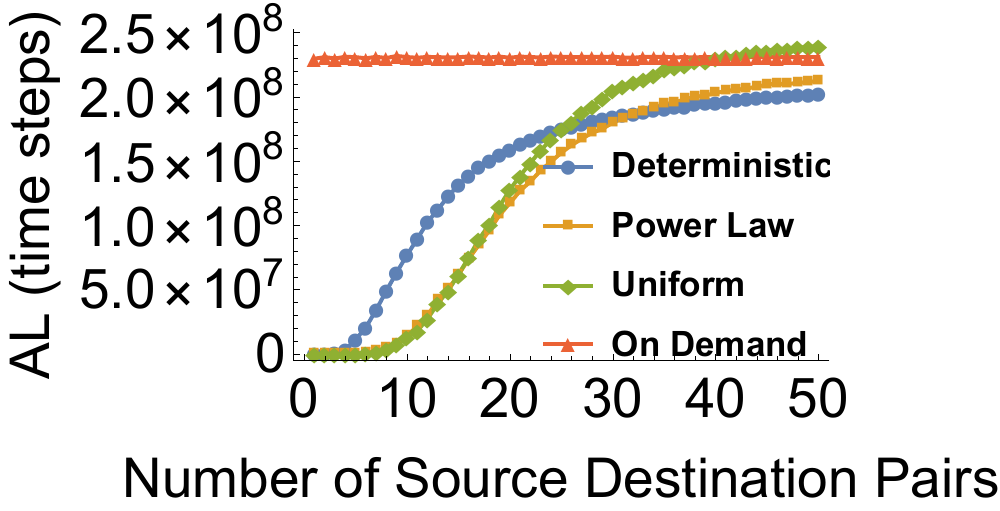}
     \caption{Ring, Modified Greedy, $d_{\ths}=4$}
 \label{ringcap4dem4b}
  \end{subfigure}
  \hspace{5cm}
 \begin{subfigure}{4cm}
    \centering
   \includegraphics[width=7cm]{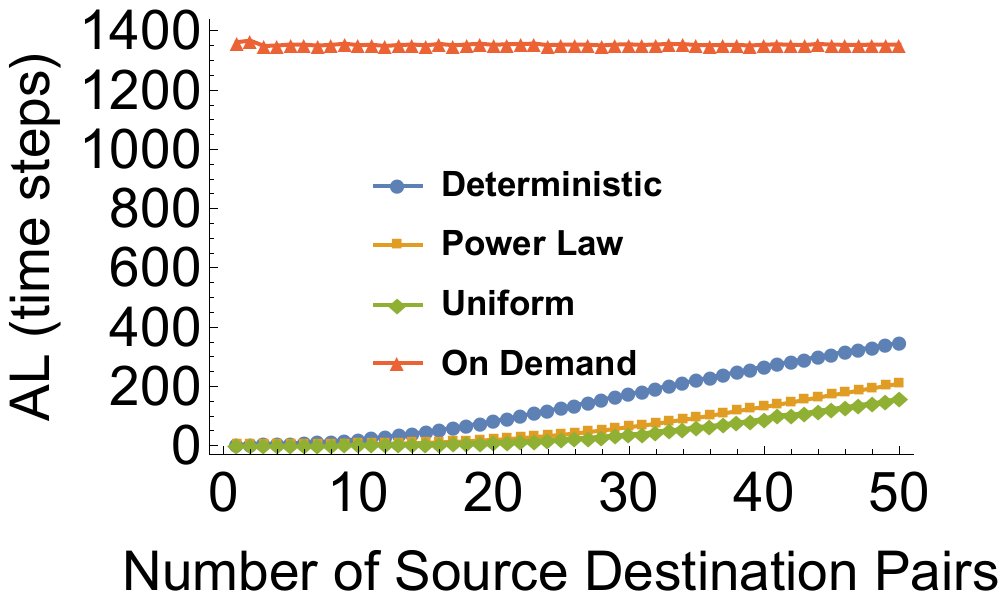}
      \caption{Grid, Modified Greedy, $d_{\ths}=4$}
 \label{gridcap4dem4b}
 \end{subfigure}

  \begin{subfigure}{4cm}
    \centering
    \includegraphics[width=7cm]{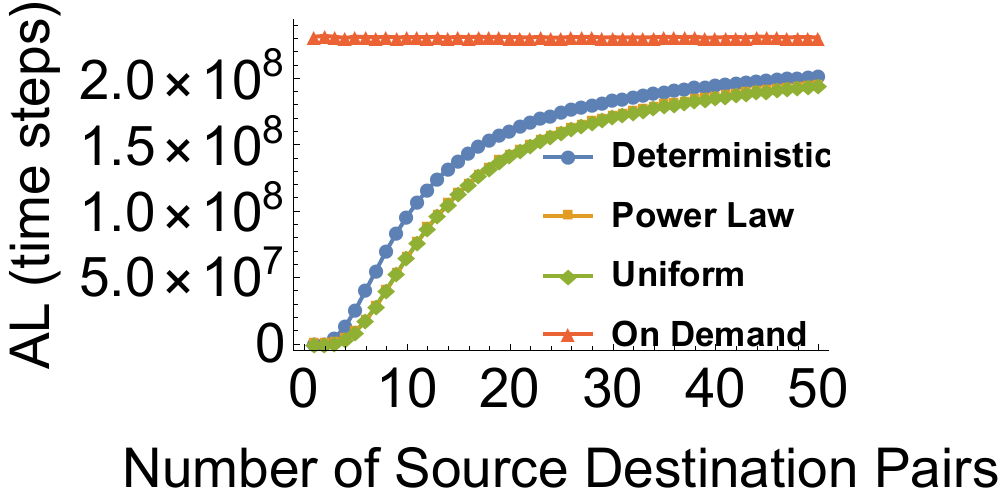}
    \subcaption{Ring, Local Best Effort, $d_{\ths}=2$}
 \label{ringcap4dem4c}
  \end{subfigure}
  \hspace{5cm}
   \begin{subfigure}{4cm}
    \centering
    \includegraphics[width=7cm]{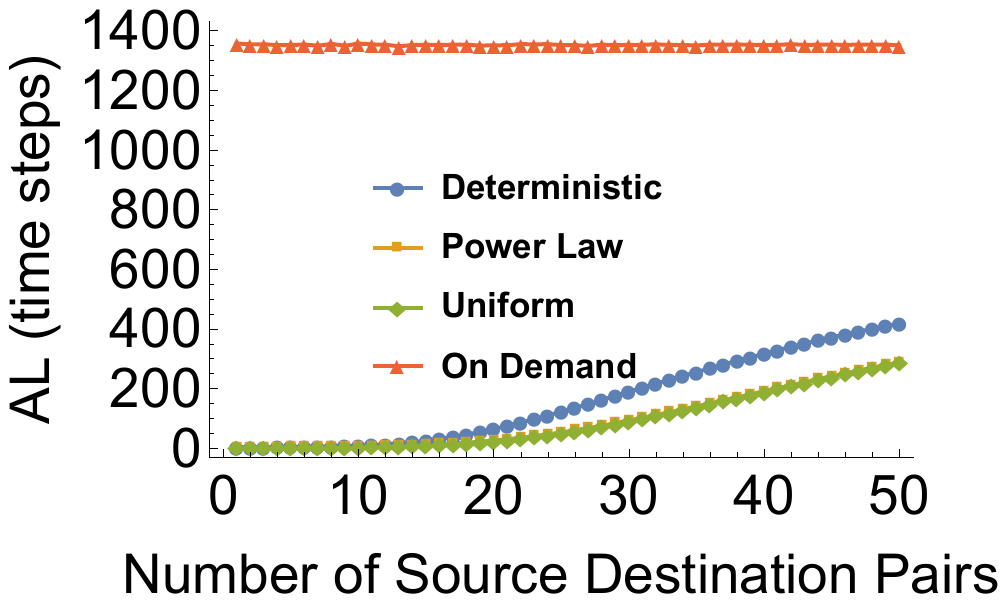}
    \caption{Grid, Local Best Effort, $d_{\ths}=2$}
 \label{gridcap4dem4c}
  \end{subfigure}
  \hspace{0.7cm}

   \begin{subfigure}{4cm}
    \centering
    \includegraphics[width=7cm]{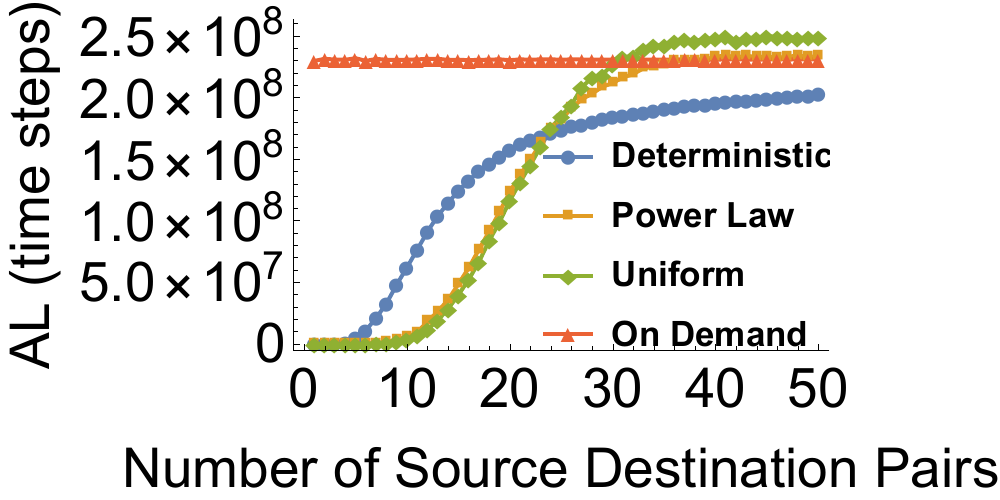}
     \caption{Ring, Local Best Effort, $d_{\ths}=4$}
 \label{ringcap4dem4d}
  \end{subfigure}
  \hspace{5cm}
 \begin{subfigure}{4cm}
    \centering
    \includegraphics[width=7cm]{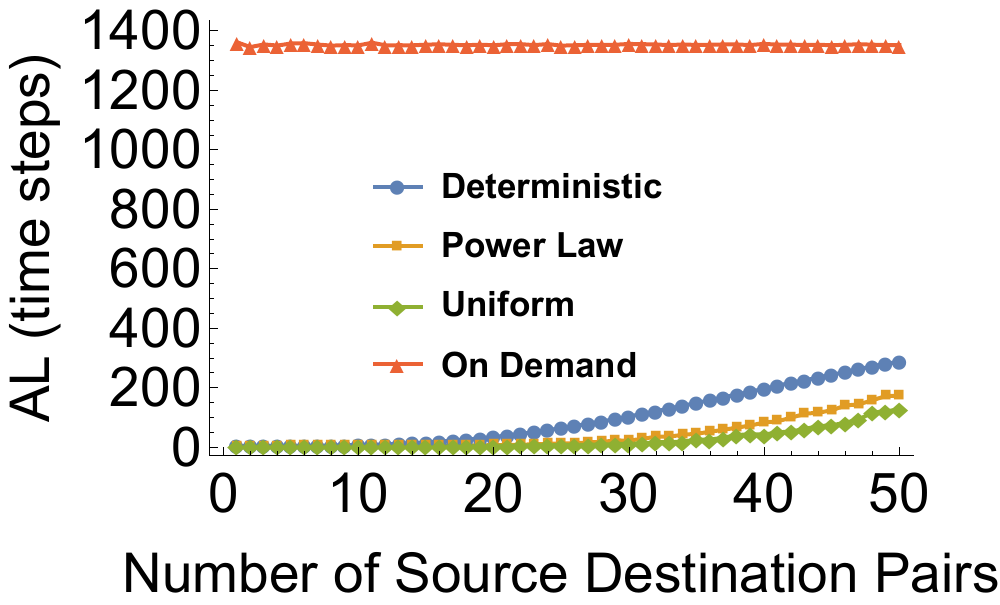}
     
 \caption{Grid, Local Best Effort, $d_{\ths}=4$}
 \label{gridcap4dem4d}
  \end{subfigure}

\caption{Performance of the different types of virtual graphs with $D_{i,j}\geq 2$ and $cap=4$.}

   \label{cap4dem4}
\end{figure*}


\begin{figure*}[tp]
\begin{subfigure}{4cm}
    \centering
    \includegraphics[width=7cm]{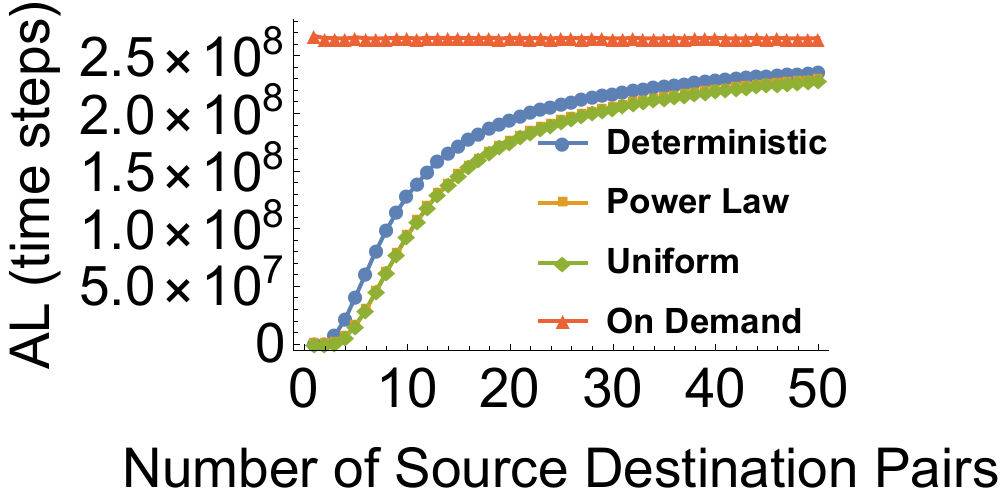}
    \caption{Ring, Modified Greedy, $d_{\ths}=2$}
 \label{ringcap4largedista}
 \end{subfigure}
  \hspace{5cm}
\begin{subfigure}{4cm}
    \centering
    \includegraphics[width=7cm]{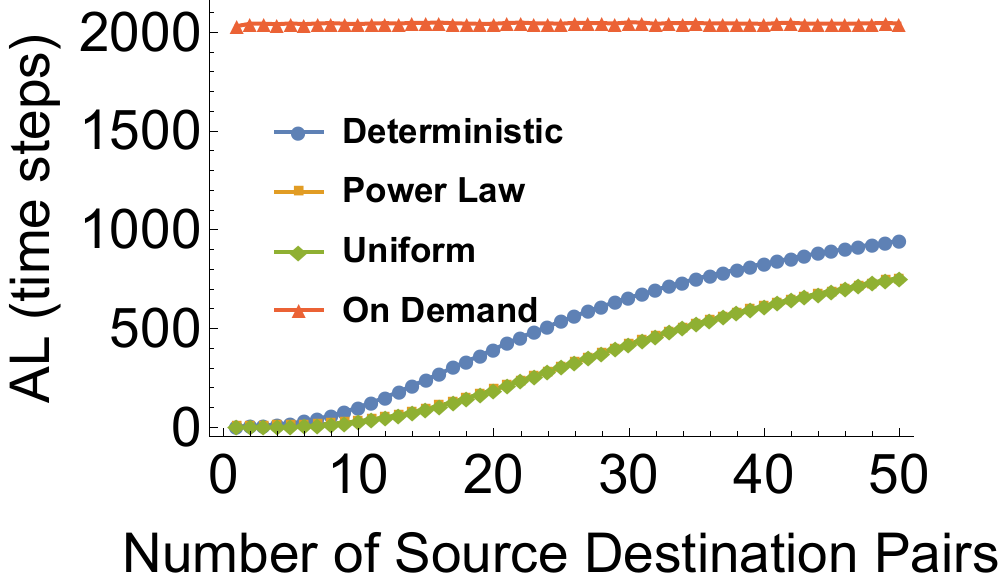}
    \caption{Grid, Modified Greedy, $d_{\ths}=2$}
 \label{gridcap4largeddista}
 \end{subfigure}

  \begin{subfigure}{4cm}
    \centering
   \includegraphics[width=7cm]{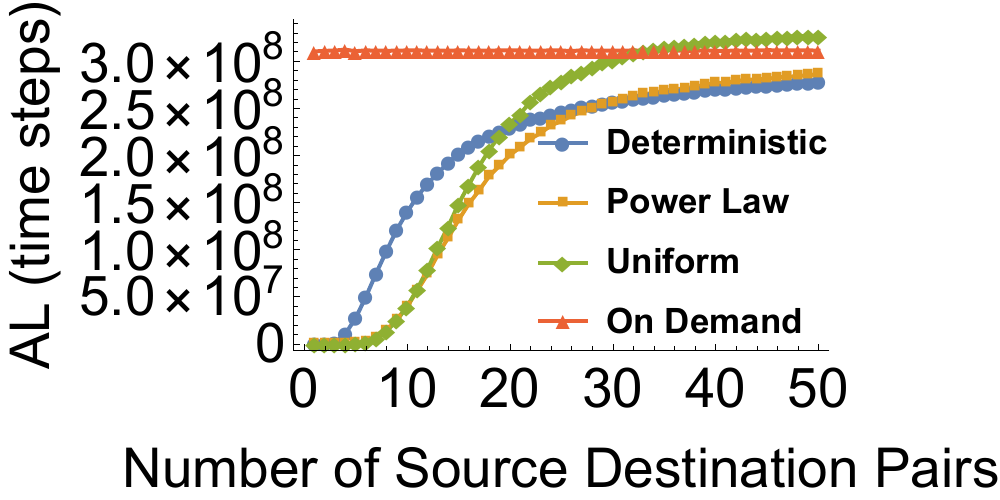}
  \caption{Ring, Modified Greedy, $d_{\ths}=4$}
 \label{ringcap4largedistb}
 \end{subfigure}
 \hspace{5cm}
  \begin{subfigure}{3.4cm}
    \centering
    \includegraphics[width=7cm]{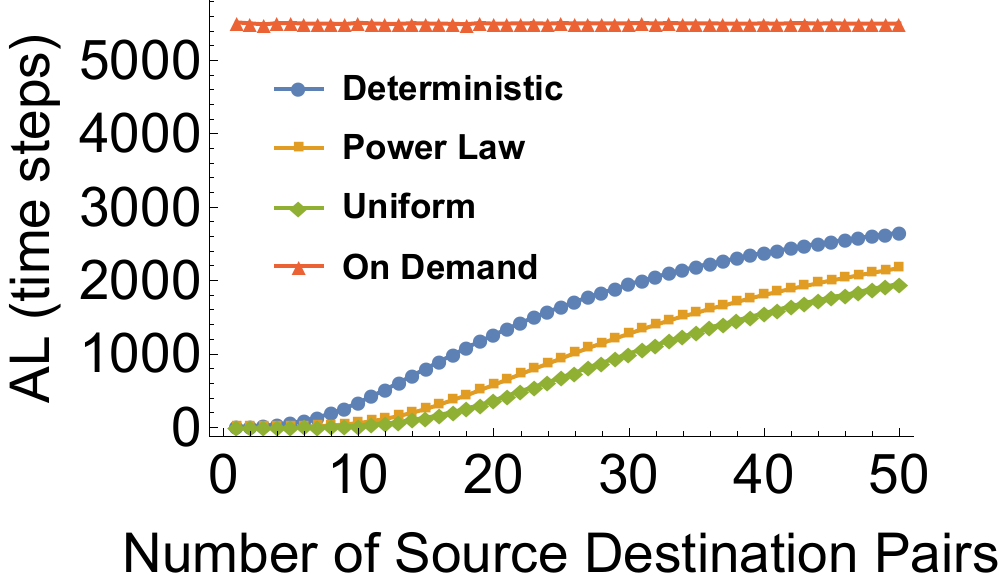}   
 \caption{Grid, Modified Greedy, $d_{\ths}=4$}
 \label{gridcap4largedistb}
  \end{subfigure}

 \begin{subfigure}{4cm}
    \centering
    \includegraphics[width=7cm]{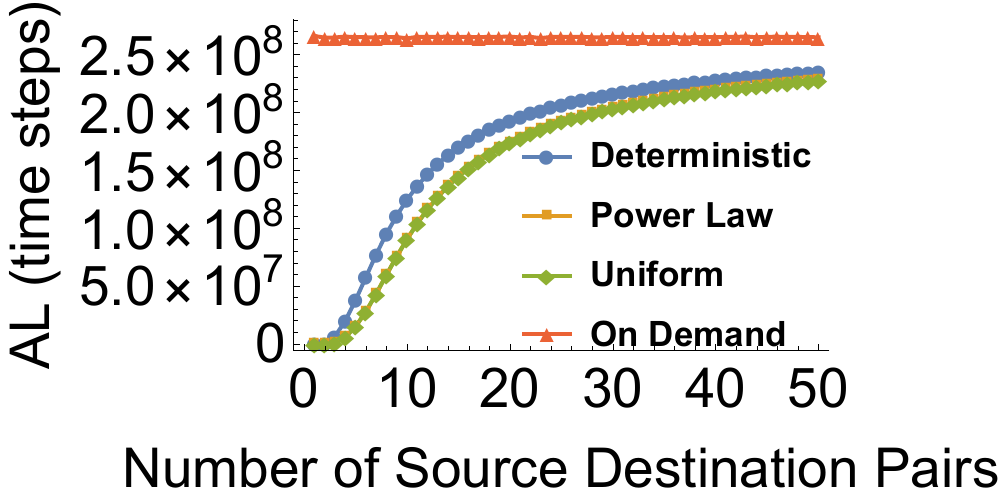}
    \caption{Ring, Local Best Effort, $d_{\ths}=2$}
 \label{ringcap4largedistc}
  \end{subfigure}
    \hspace{5cm}
    \begin{subfigure}{4cm}
    \centering
    \includegraphics[width=7cm]{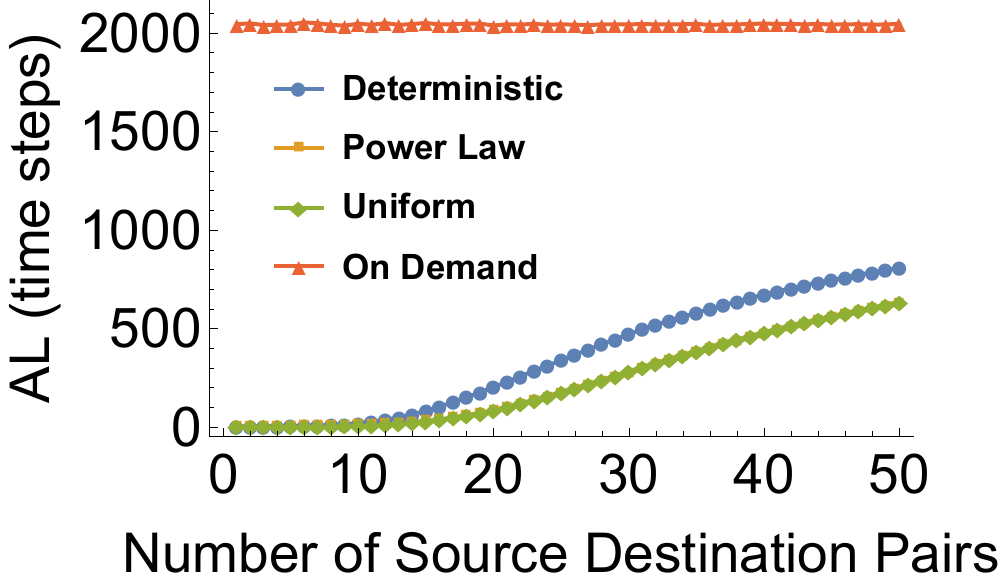}
    \caption{Grid, Local Best Effort, $d_{\ths}=2$}
 \label{gridcap4largedistc}
   \end{subfigure}

 \begin{subfigure}{4cm}
    \centering
    \includegraphics[width=7cm]{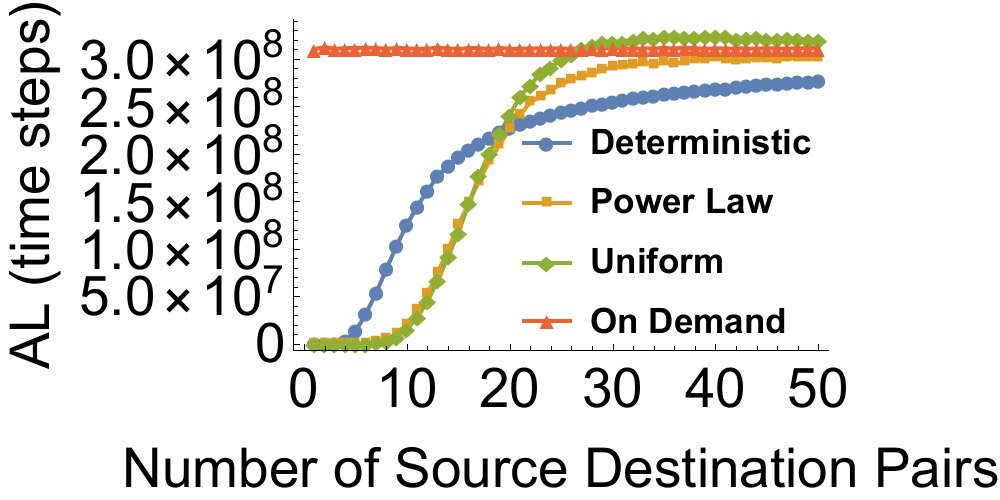}
     \caption{Ring, Local Best Effort, $d_{\ths}=4$}
 \label{ringcap4largedistd}
 \end{subfigure}
   \hspace{5cm}
  \begin{subfigure}{4cm}
    \centering
    \includegraphics[width=7cm]{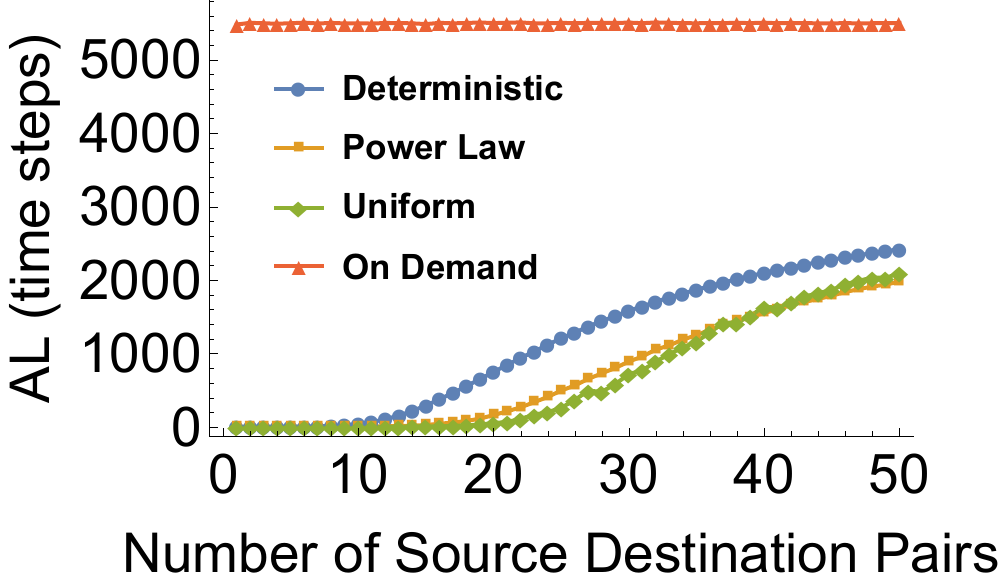}
     \caption{Grid, Local Best Effort, $d_{\ths}=4$}
 \label{gridcap4largeddistd}
 \end{subfigure}
  \caption{Performance of different types of virtual graphs with $D_{i,j}\geq 2$ and $\dist_{G_{\phs}}(i,j) \geq d_{\ths}$ and $cap=4$.}

   \label{cap4largedist}
\end{figure*}




\begin{figure*}[ht]
 \begin{subfigure}{5cm}
    \centering
    \includegraphics[width=7cm]{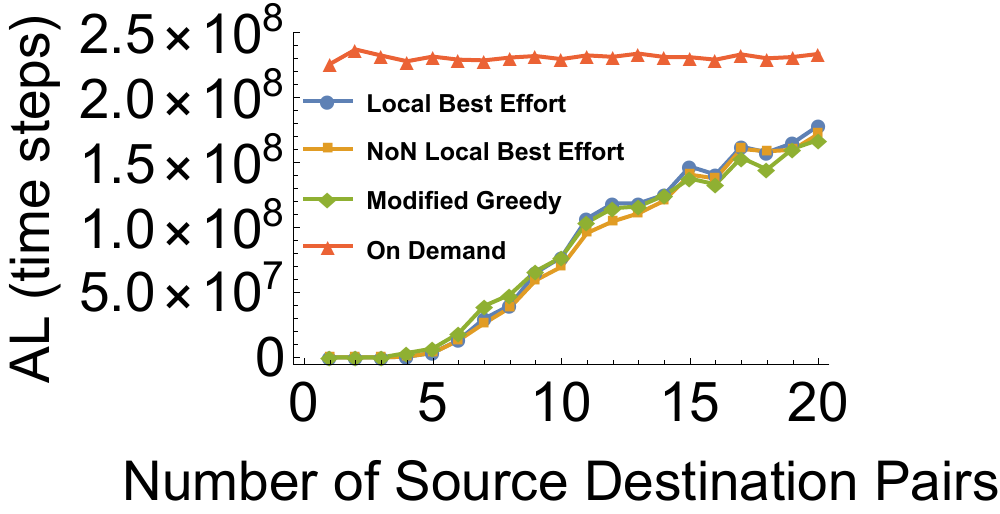}
    \caption{Ring, Deterministic, $d_{\ths}=4$}
 \label{ringcap4nona}
 \end{subfigure}
  \hspace{2cm}
  \begin{subfigure}{5cm}
    \centering
    \includegraphics[width=7cm]{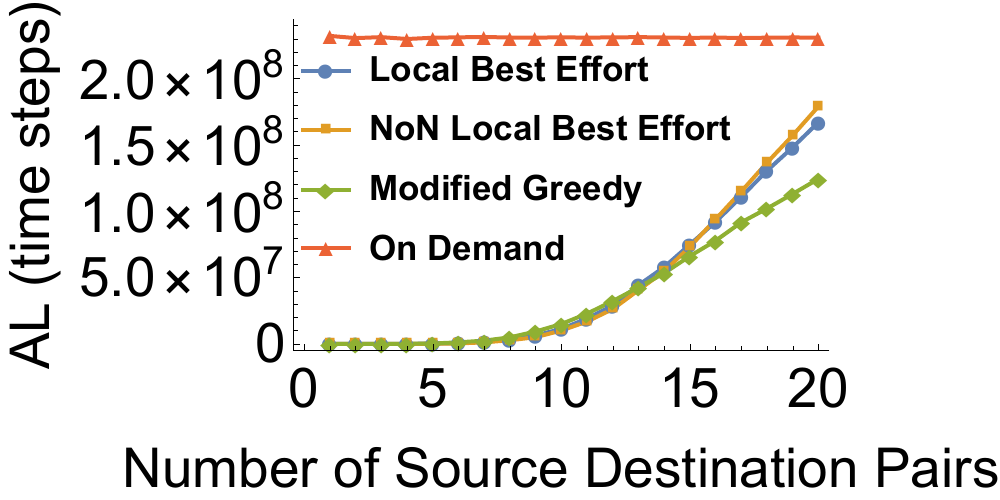}
    \caption{Ring, Power Law, $d_{\ths}=4$}
 \label{ringcap4nonb}
 \end{subfigure}

\begin{subfigure}{5cm}
    \centering
    \includegraphics[width=7cm]{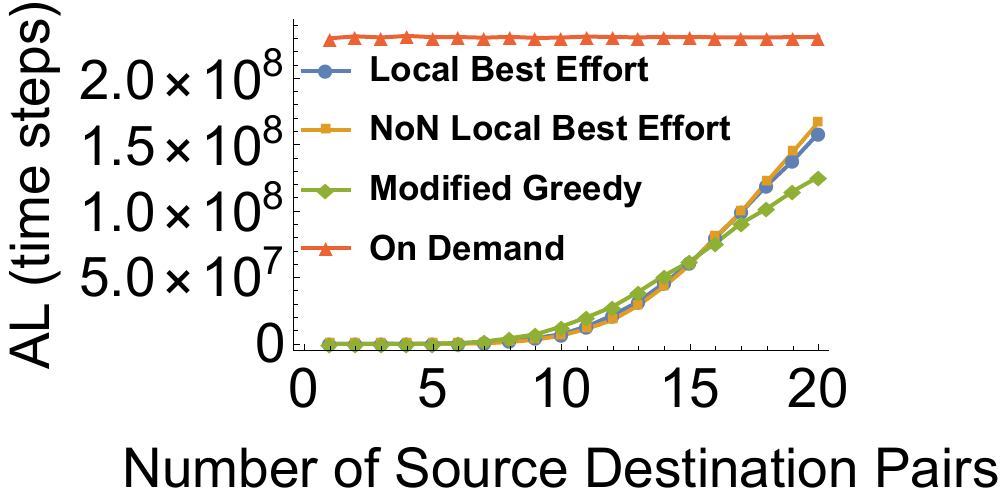}
    \caption{Ring, Uniform, $d_{\ths}=4$}
 \label{ringcap4nonc}
 \end{subfigure}

  \caption{Comparison between NoN best effort algorithm with other routing algorithms for different types of virtual graphs with $D_{i,j}= 1$ and $cap=1$.}

   \label{capnondist}
\end{figure*}

\section{Routing in more Generalised Graphs}
\label{rrgg}

\subsection{Recursively Generated Graphs (RGG)}
This section focuses on a further study of the proposed entanglement pre-sharing and routing models in more general graphs. The topology of a large real-life internet network is far more complex than a ring or grid and it always grows with time. However, in most of the cases, the global pattern of a complex system emerges from the patterns observed in smaller subsystems. These phenomena can also be observed in classical internet. The recursive graph is one of the best ways to model this kind of objects \cite{CFR04,HIK11}. In general, recursive graphs are being constructed from a base graph $G_0$ and at each recursive step, the current graph $G_t$ is being substituted by another graph $G_{t+1}$. The substitution rules remain the same for the entire evolution. Tensor product graphs are one example of such graphs \cite{HIK11}. In this paper, we study a model of the recursively generated graph, where a new graph is being generated by substituting the edges of the old graph. One can find an example of such graphs in \cite{SMIKW16}. Here, we use the term \textit{edge substitution} to describe this operation. It is defined as follows,

\begin{definition}[Edge Substitution of a Graph]
\label{edge_sub}
Let $G_0 = (V_0, E_0)$ and $H = (V_H, E_H)$ be two graphs. The edge substitution of a graph $G_0$ with respect to $H$ is an operation, where each edge $(u,v) \in E_0$ (where $u,v \in V_0$) is being substituted by another graph 
$(\{u,w_{uv}\},\{(u,w_{uv})\}) \cup G_{uv} \cup (\{v,w'_{uv}\},\{(v,w'_{uv})\})$, where $w_{uv},w'_{uv} \in G_{uv}$ and $G_{uv} = (V_{uv}, E_{uv})$ is isomorphic to $H$ and $\dist_{G_{uv}}(w_{uv},w'_{uv}) = \diam_H$. If $G_1 = (V_1, E_1)$ denotes the new substituted graph then,
\begin{align}
\label{edge_sub_graph}
V_1 &= V_0 \bigcup_{(u,v) \in E_0} V_{uv}\\ \nonumber
E_1 & = (E_0 \setminus (u,v)) \bigcup_{(u,v)\in E_0}[(u,w_{uv}) \cup E_{uv} \cup (v,w'_{uv})]. 
\end{align} 
 
\end{definition}

we construct the recursive graphs using following rules

\begin{enumerate}
\item The base graph $G_0= (V_0,E_0)$ represents quantum internet graph with physical links.
\item Let at recursive step $l$ the physical graph is $G_l = (V_l,E_l)$.
\item Suppose $\p$ be a set of graphs. If any subgraph $\tilde{G}_l \subseteq G_l$ is isomorphic to any graph in $\p$ then we perform edge substitution of $\tilde{G}_l$ with respect to $H = (V_H,E_H)$. Suppose $\tilde{G}_{l+1}= (\tilde{V}_{l+1},\tilde{E}_{l+1})$ denotes the substituted graph.
\item At recursive step $l+1$, if the physical graph with physical links is $G_{l+1} = (V_{l+1},E_{l+1})$ then
\begin{align*}
V_{l+1} &= V_l \bigcup_{\substack{\tilde{G}_l \subseteq G_l\setminus G_{l-1}\\ \tilde{G}_l \cong \p}} \tilde{V}_{l+1}.\\
E_{l+1} &= (E_{l}\setminus (\bigcup_{\substack{\tilde{G}_l \subseteq G_l\setminus G_{l-1}\\ \tilde{G}_l \cong \p}}\tilde{E}_l)) \bigcup_{\substack{\tilde{G}_l \subseteq G_l\setminus G_{l-1}\\ \tilde{G}_l \cong \p}} \tilde{E}_{l+1}.
\end{align*}
\end{enumerate}


\begin{definition}[Regular Recursively Generated Graphs (RRGG)]
A RGG with respect to an initial graph $G_0=(V_0,E_0)$, a set of graphs $\p$, a substitution graph $H = (V_H,E_H)$ is called regular if $G_0 = H$ is a regular graph and $\p = \{H\}$.
\end{definition}
\begin{figure*}[ht!]
\centering
 \begin{subfigure}{5cm}
    \centering
\includegraphics[width=4cm]{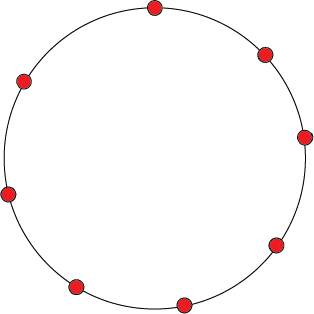}
\caption{RRGG with $G_0 = H = C_8$ and $\p = \{C_8\}$, at the $0$-th level of the recursive step.}
\label{rrgg_0}
\end{subfigure}
 \hspace{2cm}
 \begin{subfigure}{5cm}
\centering
\includegraphics[width=4cm]{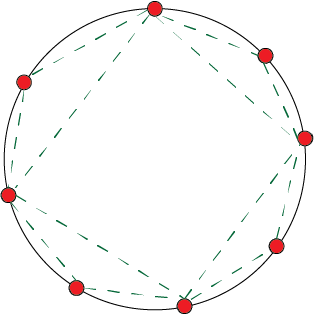}
\caption{Recursively generated deterministic virtual graph $\G_0$ with $d_{\ths}=2$ corresponding to the physical graph $G_0$ shown in figure \ref{rrgg_0}. Here green dotted lines are pre-shared entangled state.}
\label{rrgg_vir}
\end{subfigure}
\caption{Recursively generated physical graph and virtual graph at the $0$-th level of the recursive step.}
\label{rrgg_pv0}
\end{figure*}

\begin{figure*}[h!]
\centering
 \begin{subfigure}{5cm}
\centering
 \includegraphics[width=6cm]{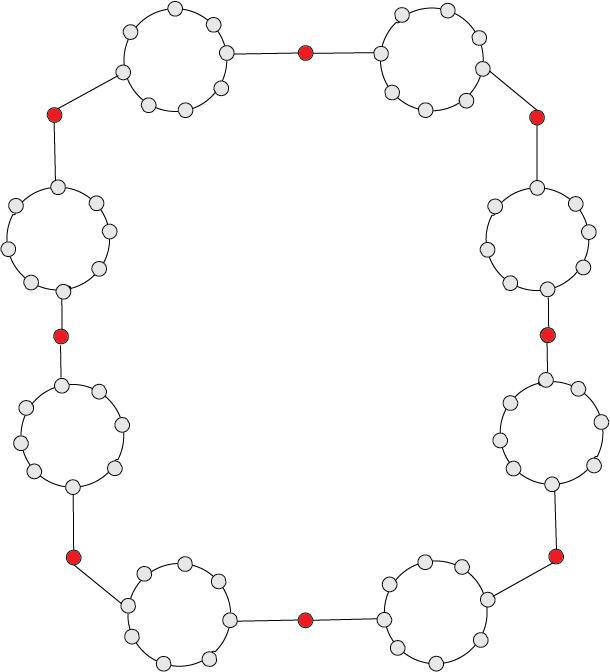}
\caption{RRGG with $G_0 =(V_0,E_0) = H = C_8$ and $\p = \{C_8\}$, at the $1$-st level of the recursive step. Here, red nodes correspond to the nodes from the graph $G_0$, in figure \ref{rrgg_pv0}. }
\label{rrgg_1}
\end{subfigure}
\hspace{2cm}
 \begin{subfigure}{5cm}
\centering
\includegraphics[width=6cm]{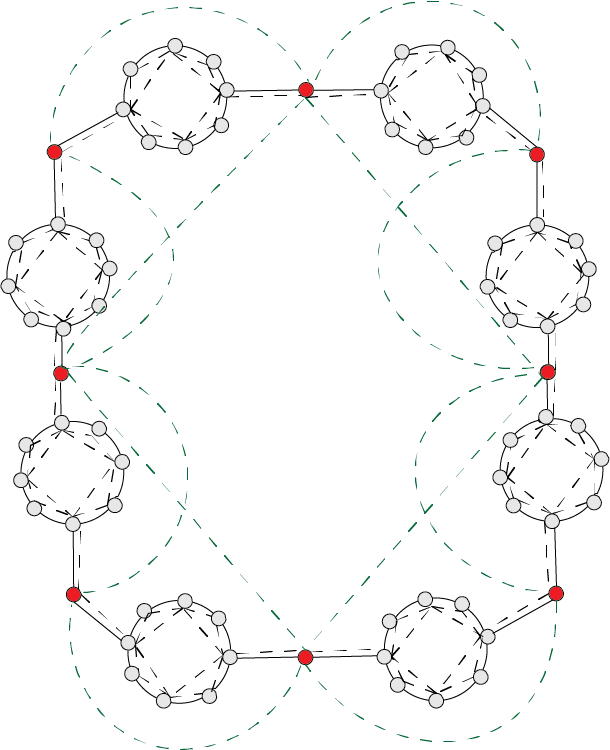}
\caption{Recursively generated deterministic virtual graph $\G_1$ with $d_{\ths}=2$ corresponding to the physical graph $G_0$ shown in figure \ref{rrgg_1}. Here green dotted lines denote the pre-shared entangled links from $\G_0$ and black dotted lines denote the newly generated pre-shared entangled links.}
\label{rrgg_vir1}
\end{subfigure}
\caption{Recursively generated physical graph and virtual graph at the $1$-st level of the recursive step.}
\label{rrgg_pv1}
\end{figure*}

In this paper we focus on studying RRGG with $G_0 = H = C_n$ and $\p = \{C_n\}$.

\subsection{Recursively Generated Virtual Graph}

The main motivation of studying recursive graph is to understand the behaviour of a hierarchical topology for a quantum network. In this paper we are also interested in studying the quantum internet if the nodes from different levels of the hierarchy have different entanglement generation capabilities. We assume that for any RRGG $G_l = (V_l,E_l)$ if any node $v \in V_{l'}\setminus V_{l'-1}$, for some $0\leq l' \leq l-1$, then it can pre-share entangled link with any node within $d_{\ths}(\diam_{G_{l-l'-1}}+2)$ distance from itself. The nodes $v \in G_l\setminus G_{l-1}$ can create entangled links within $d_{\ths}$ distance from itself.

The detailed construction of the virtual graph $\G_l = (V_l,\E_l)$ is given below. 
\begin{enumerate}
\item Let $G_0 = (V_0,E_0)$ be a physical graph, which is isomorphic to $C_n$. We construct the corresponding virtual graph $\G_0 = (V_0,\E_0)$ using the same procedure, proposed in section \ref{qig_ring}.
\item Let $G_{l-1} = (V_{l-1},E_{l-1})$ denote the physical graph at $(l-1)$-th level of recursion.
\item At $l$-th level, if any subgraph $\tilde{G}_{l-1}$ of $G_{l-1}$ is isomorphic to $C_n$ then for all of the edges of $\tilde{G}_{l-1}$ we perform edge substitution with respect to the graph $\G_0$. Let $\tilde{\G}_{l} = (V_{l},\tilde{\E}_{l})$ denotes the substituted graph. 
\item This implies, if at $l$-th level the virtual graph is denoted by $\G_l = (V_l,\E_l)$ then
\begin{align*}
\E_{l} &= \E_{l-1} \bigcup_{\substack{\tilde{G}_{l-1} \subseteq G_{l-1}\setminus G_{l-2}\\ \tilde{G}_{l-1} \cong \p}}\tilde{\E}_{l}.
\end{align*}
 \end{enumerate}
 
\textbf{Example :} In figure \ref{rrgg_pv0} and \ref{rrgg_pv1} we give examples of the RRGG with $G_0 = H = C_8$ and $\p=\{C_8\}$. The graph corresponds to the figure \ref{rrgg_0} denote the physical graph at the recursive step $0$. From there we construct the deterministic virtual graph $\G_0$ with $d_{\ths}=2$ using the techniques proposed in section \ref{qig_ring}. Later in figure \ref{rrgg_1} we construct the physical graph by performing edge substitution corresponding to each of the edges in $G_0$. For a better understanding, in figure \ref{rrgg_1} we colour the nodes at $G_0$ red. One can note that, as a result of the edge substitution, between two red nodes there is one $C_8$. The virtual graph $\G_1$ in figure \ref{rrgg_vir1} is being constructed from $\G_0$ and $G_1$. In figure \ref{rrgg_vir1} all of the edges of $G_0$ is being substituted by a graph isomorphic to $\G_0$. Here also we put red colour on the nodes from $\G_0$. Beside this, in $\G_1$, we keep the virtual links between the red nodes from $\G_0$. In figure \ref{rrgg_vir1} we denote these links by green dotted lines.

\section{Properties of the RRGG}
Here we discuss the properties of this type of network which are useful to get an upper bound on the latency for distributing an entangled link between a source and destination pair using all of the routing algorithms, described in the section \ref{routing}.
In the previous section we assume that for any RRGG $G_l = (V_l,E_l)$ if any node $v \in V_{l'}\setminus V_{l'-1}$, for some $0\leq l' \leq l-1$, then it can pre-share entangled link with any node within $d_{\ths}(\diam_{G_{l-l'-1}}+2)$ distance from itself. In the next lemma we show that our assumption holds for the proposed recursively generated virtual graphs. 

\begin{lemma}
\label{dist_node}
In the recursively generated virtual graph $\G_l = (V_l,\E_l)$, generated from a RRGG $G_l = (V_l,E_l)$, if for any two nodes $u,v \in V_l$, $\dist_{\G_l}(u,v)=1$ then,
\begin{align*}
\dist_{G_l}(u,v) &\leq d_{\ths}(\diam_{G_{l-l'-1}}+2) ~\text{if }  \\
& \forall 1\leq l' \leq l-1,  v \in V_{l'}\setminus V_{l'-1} \\
\dist_{G_l}(u,v) & \leq d_{\ths}~~\text{if } v \in V_{l}\setminus V_{l-1}.
\end{align*} 
\end{lemma}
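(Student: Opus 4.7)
The plan is to identify the recursive level at which the virtual edge $(u,v)$ was first created, and then track the physical distance between $u$ and $v$ under the edge substitutions applied at all subsequent levels. The key observation is that $\E_l\supseteq\E_{l-1}$ and any freshly added virtual edge at level $k$ must have at least one endpoint in $V_k\setminus V_{k-1}$, because the level-$k$ substitution only injects a new copy of $\G_0$ into an already substituted block and connects to the pre-existing graph exclusively through the two anchor nodes of the replaced edge. Consequently, if $v\in V_{l'}\setminus V_{l'-1}$ the edge $(u,v)$ is born at level $l'$ inside a single substituted copy of $\G_0$ that replaces some edge $(a,b)\in E_{l'-1}$. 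Both $u$ and $v$ therefore lie in that block, and by the construction of $\G_0$ from $G_0$ in section \ref{qig_ring}, the virtual edge $(u,v)$ is realised in $G_{l'}$ by a physical path $P$ of length at most $d_{\ths}$ inside the $G_0$-shaped block.

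For $l'=l$ no further substitution has acted on the block, so $P$ survives verbatim in $G_l$ and the bound $\dist_{G_l}(u,v)\le d_{\ths}$ follows at once. For $1\le l'\le l-1$, the plan is to trace each edge $e=(c,d)$ of $P$ through the $l-l'$ later substitution steps using the following auxiliary claim, proved by a short induction on $k=l-l'$: the block that replaces $e$ has, at level $l'+k$, interior graph isomorphic to $G_{k-1}$. At $k=1$, the substitution of $e$ installs an interior $G_{cd}\cong G_0$; at the inductive step, the fresh $C_n$-subgraphs inside the block are exactly the last round of $G_0$-copies introduced at the previous recursive step, and these are precisely the $C_n$-subgraphs of $G_{l'+k}\setminus G_{l'+k-1}$ that the next substitution rule acts on, so the interior advances from $G_{k-1}$ to $G_k$. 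The anchor points $c,d$ remain attached to this interior through the two unchanged bridge edges $(c,w)$ and $(w',d)$, with $w,w'$ at distance at most $\diam_{G_{k-1}}$ inside the interior, hence $\dist_{G_l}(c,d)\le 1+\diam_{G_{k-1}}+1 = \diam_{G_{l-l'-1}}+2$.

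Summing this per-edge estimate over the at most $d_{\ths}$ consecutive edges of $P$, whose blocks share anchor points with their neighbours so that no node is overcounted, yields $\dist_{G_l}(u,v)\le d_{\ths}(\diam_{G_{l-l'-1}}+2)$, as required. The main technical obstacle I anticipate is the inductive claim about block interiors, specifically verifying that the substitution rule at each later level selects exactly the $C_n$-copies inherited from the previous step, so the recursion on interiors closes cleanly to $G_{k-1}$ rather than producing a larger subgraph. Since only an upper bound on $\dist_{G_l}(u,v)$ is needed, any external shortcuts through neighbouring blocks can only help the inequality and need not be separately analysed.
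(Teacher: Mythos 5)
Your proposal is correct and follows essentially the same route as the paper's proof: identify the recursion level at which the virtual edge is born, bound the realising physical path there by $d_{\ths}$, and multiply by the per-edge stretch factor $\diam_{G_{l-l'-1}}+2$ accumulated over the remaining substitution steps. The only difference is one of rigour — the paper simply asserts the stretch relation $\dist_{G_l}(u,v)=\dist_{G_{l-l'-1}}(u,v)(\diam_{G_{l-l'-1}}+2)$, whereas you make explicit the induction on block interiors that justifies it.
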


\begin{proof}
 For the case $l'=0$ we have $v \in V_l\setminus V_{l-1}$. All such nodes are being generated after the edge substitution operation with respect to $\G_0$. This implies that all the virtual neighbours of such nodes $v$ are within $d_{\ths}$ perimeter. 
 
In case of $l' > 0$ all the nodes $v \in V_{l-l'-1}\setminus V_{l-l'-2}$ are being generated after the edge substitution operation with respect to $\G_0$. This implies, if any two nodes $u,v \in V_{l-l'-1}\setminus V_{l-l'-2}$ have distance $\dist_{G_{l-l'-1}}(u,v)$, then after the edge substitution their distance would become $\dist_{G_{l-l'}}(u,v)=\dist_{G_{l-l'-1}}(\diam_{G_0} +2)$ and $\dist_{G_{l}}(u,v)=\dist_{G_{l-l'-1}}(\diam_{G_{l-l'-1}} +2)$. If two nodes $u,v \in V_{l-l'-1}\setminus V_{l-l'-2}$ are neighbours in $\G_{l-l'-1}$ then $\dist_{G_{l-l'-1}}(u,v) \leq d_{\ths}$. According to the construction, those two nodes would still be neighbours in $\G_l$. This implies their distances in $G_l$ would be upper bounded by $d_{\ths}(\diam_{G_{l-l'-1}}+2)$.

\end{proof}

According to lemma \ref{routing_time}, in the continuous model, if a single source and destination pair would like to generate a single entangled link then the latency is the order of the the diameter of the physical graph. In the next lemma we are interested in computing the diameter of a RRGG at the $l$-th step of the recursion. 
\begin{lemma}
\label{diam_rrgg}
For any RRGG $G_l = (V_l,E_l)$ where $l \geq 1$ and $G_0 = (V_0,E_0) = C_n$ is the base graph and $\p = \{C_n\}$, 
\begin{equation}
\diam_{G_l} = \diam_{G_{0}}(\diam_{G_{l-1}}+2).
\end{equation}
\end{lemma}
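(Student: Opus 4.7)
I would prove this by induction on $l$, relying on the structural observation that the recursive construction, when unrolled, amounts to edge-substituting each original edge of $G_0$ with a copy of $G_{l-1}$. To see this, note that at step $k$ only the $C_n$-subgraphs introduced at step $k-1$ get substituted, and each such substitution promotes a fresh $C_n$ into a $G_1$-copy; iterating, each original $C_n$ introduced at step $1$ (one per edge of $G_0$) has become a $G_{l-1}$-copy by step $l$.

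For the base case $l=1$, the only $C_n$-subgraph of $G_0=C_n$ is $G_0$ itself, so by Definition \ref{edge_sub} every edge $(u,v)\in E_0$ is replaced in $G_1$ by the path $u - w_{uv} - \cdots - w'_{uv} - v$ of length $1+\diam_H+1=\diam_{G_0}+2$, where the middle segment lives inside $G_{uv}$ and has length $\dist_{G_{uv}}(w_{uv},w'_{uv})=\diam_H$. A shortest $G_0$-path of length $\diam_{G_0}$ between two $V_0$-vertices therefore inflates to a $G_1$-path of length $\diam_{G_0}(\diam_{G_0}+2)$, and this pair attains the diameter of $G_1$, yielding $\diam_{G_1}=\diam_{G_0}(\diam_{G_0}+2)$.

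For the inductive step, assume the claim at level $l-1$. Using the structural observation, every edge $(u,v)\in E_0$ is inflated in $G_l$ into a substructure $u - w_{uv} - [G_{l-1}\text{-copy}] - w'_{uv} - v$, and the marked boundary vertices $w_{uv},w'_{uv}$ form a diametrical pair in that $G_{l-1}$-copy; this latter fact is itself established by a short subsidiary induction showing that at every level the pairs realising the diameter are exactly the pairs of $V_0$-skeleton vertices at maximum $G_0$-distance, so $w_{uv},w'_{uv}$ (which are diametrical in the initial $C_n$) stay diametrical after every subsequent substitution. Consequently $\dist_{G_l}(u,v)=1+\diam_{G_{l-1}}+1=\diam_{G_{l-1}}+2$ for adjacent $u,v\in V_0$, and concatenating $\diam_{G_0}$ such inflated hops shows $\diam_{G_l}\geq \diam_{G_0}(\diam_{G_{l-1}}+2)$.

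For the matching upper bound, any vertex of $G_l$ sits inside some inflated copy attached to an edge $(u,v)\in E_0$, and by the inductive hypothesis that copy has diameter $\diam_{G_{l-1}}$, so each vertex is within $\diam_{G_{l-1}}$ of one of its boundary vertices $w_{uv},w'_{uv}$. Routing two arbitrary vertices to their respective boundaries, then across the inflated $V_0$-skeleton, gives $\diam_{G_l}\leq \diam_{G_0}(\diam_{G_{l-1}}+2)$. I expect the main obstacle to be precisely the subsidiary claim that $(w_{uv},w'_{uv})$ remains diametrical in the inflated substructure at every level: without it the inductive step only produces an inequality, and tracking exactly which pairs attain the diameter across one round of edge substitution (and verifying it survives into the next) is what promotes the bound to the stated equality.
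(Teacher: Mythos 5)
Your proposal follows essentially the same route as the paper: induction on $l$, driven by the unrolling observation that $G_l$ is obtained from $G_0$ by replacing each edge with a $G_{l-1}$-copy plus two connecting edges (the paper states this explicitly as ``one can also construct $G_{l'+1}$ by replacing each edge of $G_0$ by a $G_{l'}$''), and your subsidiary claim that $w_{uv},w'_{uv}$ stay diametrical is sound and plays the role of the paper's equality case. The one step that would fail as literally written is the upper bound: routing each endpoint to its nearest boundary vertex costs up to $\diam_{G_{l-1}}+1$ each, and if the two host edges can be at skeleton distance $\diam_{G_0}$ the naive total $2(\diam_{G_{l-1}}+1)+\diam_{G_0}(\diam_{G_{l-1}}+2)$ overshoots the claimed bound. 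The paper closes this by a three-case analysis: when both endpoints are interior to inserted copies one can pick the closest pair of attachment vertices, whose $G_0$-distance is then at most $\diam_{G_0}-2$ (and at most $\diam_{G_0}-1$ when only one endpoint is interior), and this saving of two (resp.\ one) skeleton hops exactly compensates the boundary-routing overhead. Adding that case distinction to your upper-bound paragraph makes the argument complete.
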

\begin{proof}
We prove this lemma using induction. For the base case, $l=1$ we construct $G_1$ by performing the edge substitution operations with respect to $C_n$ on all the edges of $G_0$. Each of the edge substitution operations increase the distance between two neighbour nodes in $G_0$ by $\diam_{G_0}+2$. For any two nodes $u,v \in V_1$ there are three following cases.

\begin{itemize}
\item \textbf{Case 1 : If $u,v \in V_0$ }then $\dist_{G_{0}}(u,v) \leq \diam_{G_0}$. This implies, $\dist_{G_{1}}(u,v) \leq \diam_{G_0}(\diam_{G_{0}}+2)$. The equality holds if $\dist_{G_0}(u,v) = \diam_{G_0}$. 

\item \textbf{Case 2 : If $u,v \in V_1\setminus V_0$ }then they are part of $G_1\setminus G_0$. Suppose $u$ has been generated after the edge substitution of $(u',u'') \in E_0$ and $v$ has been generated after the edge substitution of $(v',v'') \in E_0$. Without any loss of generality we can assume that $\dist_{G_0}(u',v') = \min\{\dist_{G_0}(u',v'), \dist_{G_0}(u',v''),$ $\dist_{G_0}(u'',v'),\dist_{G_0}(u'',v'')\}.$ This implies, $\dist_{G_0}(u'',v'') = \dist_{G_0}(u',v') + 2$ and $\dist_{G_0}(u',v') \leq \diam_{G_0}-2$. Hence, $\dist_{G_1}(u,v) = \dist_{G_1}(u,u') + \dist_{G_1}(v,v') + \dist_{G_1}(u',v') \leq 2\diam_{G_0} + \dist_{G_0}(u',v')(\diam_{G_0}+2) \leq \diam_{G_0}(\diam_{G_{0}}+2)$.

\item \textbf{Case 3 : If $u \in V_0$ and $v \in V_1 \setminus V_0$ }then let $v$ has been generated after the edge substitution of the edge $(v',v'') \in E_0$. Without any loss of generality we can assume that $\dist_{G_0}(u,v') = \min\{\dist_{G_0}(u,v') , \dist_{G_0}(u,v'')\}$. This implies $\dist_{G_0}(u,v'') = \dist_{G_0}(u,v') + 1$ and $\dist_{G_0}(u,v') \leq \diam_{G_0}-1$. Hence, $\dist_{G_1}(u,v) = \dist_{G_1}(v,v') + \dist_{G_1}(u,v') \leq \diam_{G_0} + \dist_{G_0}(u,v')(\diam_{G_0}+2) \leq \diam_{G_0}(\diam_{G_{0}}+2)$.
\end{itemize}

The base case of the inductive hypothesis is true. Suppose the inductive hypothesis is true for some $l=l'$. This implies $\diam_{G_{l'}} = \diam_{G_0}(\diam_{G_{l'-1}}+2)$. For $l=l'+1$, the graph is $G_{l'+1}$. It has been generated by performing the edge substitution operations on all of the edges of the subgraphs $H_{l'} \subseteq G_{l'}$ which are isomorphic to $C_n$. One can also construct $G_{l'+1}$ by replacing each edge of $G_0$ by a $G_{l'}$. For any two nodes in $G_{l'+1}$ we have following three cases.

\begin{itemize}
\item \textbf{Case 1 : If $u,v \in V_{l'}$ }then $\dist_{G_{l'}}(u,v) \leq \diam_{G_{l'}}$. This implies, $\dist_{G_{l'+1}}(u,v) \leq \diam_{G_0}(\diam_{G_{l'}}+2)$. The equality holds if $\dist_{G_{l'}}(u,v) = \diam_{G_{l'}}$. 

\item \textbf{Case 2 : If $u,v \in V_{l'+1}\setminus V_{l'}$ }then they are part of $G_{l'+1}\setminus G_{l'}$. Suppose $u$ has been generated after the edge substitution of $(u',u'') \in E_{l'}$ and $v$ has been generated after the edge substitution of $(v',v'') \in E_{l'}$. Without any loss of generality we can assume that $\dist_{G_{l'}}(u',v') = \min\{\dist_{G_{l'}}(u',v'), \dist_{G_{l'}}(u',v''),$ $\dist_{G_{l'}}(u'',v'),\dist_{G_{l'}}(u'',v'')\}.$ This implies, $\dist_{G_{l'}}(u'',v'') = \dist_{G_{l'}}(u',v') + 2$ and $\dist_{G_{l'}}(u',v') \leq \diam_{G_{l'}}-2$. Hence, $\dist_{G_{l'+1}}(u,v) = \dist_{G_{l'+1}}(u,u') + \dist_{G_{l'+1}}(v,v') + \dist_{G_{l'+1}}(u',v') \leq 2\diam_{G_0} + \dist_{G_{l'}}(u',v')(\diam_{G_0}+2) \leq \diam_{G_0}(\diam_{G_{l'}}+2)$.

\item \textbf{Case 3 : If $u \in V_{l'}$ and $v \in V_{l'+1} \setminus V_{l'}$ }then let $v$ has been generated after the edge substitution of the edge $(v',v'') \in E_{l'}$. Without any loss of generality we can assume that $\dist_{G_{l'}}(u,v') = \min\{\dist_{G_{l'}}(u,v') , \dist_{G_{l'}}(u,v'')\}$. This implies $\dist_{G_{l'}}(u,v'') = \dist_{G_{l'}}(u,v') + 1$ and $\dist_{G_{l'}}(u,v') \leq \diam_{G_{l'}}-1$. Hence, $\dist_{G_{l'+1}}(u,v) = \dist_{G_{l'+1}}(v,v') + \dist_{G_{l'+1}}(u,v') \leq \diam_{G_{l'}} + \dist_{G_{l'}}(u,v')(\diam_{G_{l'}}+2) \leq \diam_{G_0}(\diam_{G_{l'}}+2)$.

\end{itemize}
This implies, if the inductive hypothesis is true for any $l=l' \geq 1$ then it is also true for $l=l'+1$. From the principle of mathematical induction one concludes that $\diam_{G_l} = \diam_{G_{0}}(\diam_{G_{l-1}}+2)$ for any integer $l \geq 1$.
\end{proof}

By using the recursive relation in the above lemma we get the following upper bound on the latency for entanglement distribution in RRGG. 

\begin{corollary}
\label{}
In the continuous model, for a virtual graph $\G_l = (V_l,\E_l)$ corresponding to any RRGG $G_l = (V_l,E_l)$ with $G_0 = (V_0,E_0) = C_n$ as base graph and $\p = \{C_n\}$, if $|D|=1$ and for all $i,j \in V_l$ and if $D_{i,j}\in \{0,1\}$ then for any source destination pair $s,e$, the latencies of the routing algorithms \ref{path_greed}, \ref{path_loc_best_eff} and \ref{path_non_local_best_effort} are upper bounded by,
\begin{align}
&O\left(\left(\frac{n}{2}\right)^{l+1}\right).
\end{align}
\end{corollary}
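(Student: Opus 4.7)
The plan is to combine Lemma \ref{routing_time} with the recursive diameter formula from Lemma \ref{diam_rrgg}. Since we are in the continuous model with a single demand ($|D|=1$ and $D_{s,e}=1$), Lemma \ref{routing_time} immediately gives that the latency of the modified greedy and local best effort algorithms on $\G_l$ is upper bounded by $O(d_{\ths} \diam_{G_l})$. The same argument from that lemma carries over to the NoN local best effort routing algorithm, since its path-discovery procedure also strictly decreases the physical distance to the destination at each step (this is enforced inside algorithm \ref{path_non_local_best_effort} by the conditions $\dist_{G_{\phs}}(u_{\curr},e) > \dist_{G_{\phs}}(u,e)$), so the path length in $\G_l$ is again at most $\diam_{G_l}$ hops and each hop is a virtual link of physical length at most $d_{\ths}(\diam_{G_{l-l'-1}}+2)$ by Lemma \ref{dist_node}, which is absorbed in the big-O once $l$ is fixed and $d_{\ths}$ is treated as a constant.

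It remains to bound $\diam_{G_l}$. I would solve the recurrence from Lemma \ref{diam_rrgg},
\begin{equation}
\diam_{G_l} = \diam_{G_0}\,(\diam_{G_{l-1}}+2),
\end{equation}
with base case $\diam_{G_0} = \lceil n/2 \rceil$, by induction on $l$. The inductive claim is $\diam_{G_l} = O((n/2)^{l+1})$. For the base case, $\diam_{G_0} = \lceil n/2\rceil = O(n/2)$. For the inductive step, assuming $\diam_{G_{l-1}} = O((n/2)^{l})$, we get
\begin{equation}
\diam_{G_l} \leq \frac{n}{2}\bigl(O((n/2)^{l}) + 2\bigr) = O((n/2)^{l+1}),
\end{equation}
since the additive constant $2$ is dominated by the $(n/2)^l$ term for any $n \geq 4$.

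Plugging this diameter bound back into Lemma \ref{routing_time} yields a latency of $O(d_{\ths}(n/2)^{l+1}) = O((n/2)^{l+1})$, where the factor $d_{\ths}$ is absorbed into the big-O as a constant of the model. The main (small) subtlety I would flag is the extension of Lemma \ref{routing_time} to the NoN routing algorithm: one must observe that algorithm \ref{path_non_local_best_effort} also guarantees strict progress toward the destination at every hop, and that the physical length of each virtual edge traversed is bounded (via Lemma \ref{dist_node}) so that the total path length in $G_l$ scales like $\diam_{G_l}$ up to a constant factor. Once this is noted, the rest of the argument is a one-line combination of the two lemmas and the solved recurrence.
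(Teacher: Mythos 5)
Your proposal is correct and follows essentially the same route the paper intends: the corollary is obtained by combining the continuous-model latency bound $O(d_{\ths}\diam_{G_{\phs}})$ from Lemma \ref{routing_time} with the diameter recurrence of Lemma \ref{diam_rrgg}, whose solution with base case $\diam_{G_0}=\lceil n/2\rceil$ gives $\diam_{G_l}=O((n/2)^{l+1})$. Your added remarks on extending the argument to the NoN algorithm and on bounding the physical length of virtual edges via Lemma \ref{dist_node} are consistent with the paper's framework and do not change the argument.
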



\section{Conclusion and Open Problems}
\label{concl}

In this paper, by combining ideas from complex network theory and quantum information theory, we propose a special kind of quantum network with pre-shared entangled links, which can reduce the latency of the network. As expected, we observe that the network with pre-shared entanglement is useful when the network has a small number of demands. If the network has a very high number of demands then the continuous model does not give many advantages compared to the on-demand model. Here we study the performance of our routing algorithms on some specific network topologies. We leave the investigation of the performance of the proposed entanglement pre-sharing strategies and routing algorithms in more generalised networks (Especially, the networks where the physical neighbour nodes are not deployed with equal distances) as future work. 

 For the continuous model, here we construct the virtual graph using the ideas from complex networks. This type of approaches reduce the diameter of the virtual graph and the routing algorithms require a lesser number of entanglement swap operations. This implies if each of the pre-shared links has very high fidelity then after the routing, the source and destination can share a very high fidelity entangled state. In table \ref{ent_swap} we give upper bounds on the number of required entanglement swap operations for a different type of virtual graphs. This upper bound depends on the threshold distance of $d_{\ths}$. Usually, higher $d_{\ths}$ helps to reduce the network latency but also decreases network throughput. It is always interesting to find a tradeoff between these two quantities. We leave this as future research. Other than this, in the continuous model, all of the proposed entanglement pre-sharing schemes are independent of the network traffic. One possible future research direction would be to use the tools from traffic engineering for predicting the behaviour of the network and study how one can choose the virtual neighbours according to the traffic. 
 
In this paper, we consider a very simple model for entanglement generation. For example, here we consider that due to the presence of noise in the quantum memory the fidelity of the stored state decays with time. As a result, one can store the state for a limited number of time steps ($T_{\ths}$) steps. This limited storage time can increase the latency of the network in the continuous model. One can increase the value of $T_{\ths}$ using the concept of entanglement distillation. Entanglement distillation process consumes two or more low fidelity entangled states and produces a new high fidelity entangled link. This might give some advantage in both reducing the latency and increasing the end to end fidelity at the expense of a higher setup cost. We leave the study of the advantages and disadvantages of entanglement distillation in the continuous model for future work.  

In our model, during the path discovery, first, we reserve the existing entangled links along the path. The nodes start performing the entanglement swap operations when the entire path has been established. This makes the routing algorithm stateful, i.e, each of the nodes needs to keep track of all of the paths which are passing through it. Besides this, it might increase the latency of the network, as the existing links might become useless while discovering the path. In our next approach, we would like to come up with a more dynamic version of the routing algorithm, where the nodes can perform the swap operation while discovering the path.

We also show that existing classical greedy routing algorithms are not capable of taking the full advantage of the pre-shared entangled links. In this paper we propose two new distributed routing algorithms and using numerical simulations on ring and grid topology we show that both of them perform better than the classical one. Simulations in figure \ref{comp_ring1}, \ref{comp_ring2}, \ref{comp_ring3} and \ref{capnondist} show that the routing algorithms which utilises the existing entangled links as much as possible, performs better for small number of demands. However, this way they might exhaust the pre-shared links and the network might converge to on-demand model very fast. For a high number of demands among the three routing algorithms modified greedy performs better than both local best effort and NoN local best effort algorithm. This observation is quite counterintuitive compared to the classical routing algorithms. The reason is that usually if the nodes have more information about the updated network topology, then the routing algorithm performs better than the case where the nodes have less information about updated network topology. It gives an insight for designing new routing algorithms.

In this paper, we study the advantages and disadvantages of the three types of virtual networks. From the simulation results, we observe that in the ring network, if we use local best effort and modified greedy routing algorithms then deterministic virtual graphs have low latency compared to random virtual graphs. However, for the grid network, the uniform and power-law virtual graphs show better performance. On the other hand, if we consider the end to end fidelity of the shared state, then table \ref{ent_swap} suggests to use the power-law virtual graphs.

 In the end, we propose another type of continuous model for a hierarchical network with a small diameter. Here, we show that all of the proposed routing algorithms can distribute an entangled link between a source and a destination using a very small number of entanglement swap operations. We leave the performance analysis of the routing algorithms on such networks for multiple source, destination pairs for future work.
 
 \section*{Acknowledgement}
We would like to acknowledge S. Bauml, K. Goodenough, A. Szava for many stimulating discussions on routing in a quantum internet. We would also like to thank ERC starting grant, NWO VIDI grant and Quantum internet Alliance (QIA) for providing the financial supports.
\bibliographystyle{IEEEtran}
\bibliography{IEEEabrv,mybibnet}

\appendix
\appendices
\label{proof_thm1}
The entire appendix is focused on giving the detailed proof of theorem \ref{ent_swap_begin}. Before going to the detailed proof, in the next appendix first we define some of the used notations again. One can find the more details about theorem \ref{ent_swap_begin} and the outline of the proofs in appendix \ref{over}.

\section{Notations}
\label{notation}
In this section, for the reader's convenience we define the some of used notations again.

\begin{itemize}
\item $G_{\phs} = (V,E_{\phs})$ : Physical graph, where $V$ denotes the set of nodes and $E_{\phs}$ denotes the set of edges in the physical graph.
\item $C_n$ : Ring network with $n$ nodes.
\item $\G$ : Virtual Graph.
\item $\dist_{G_{\phs}}(u,v)$ : The hop distance between two nodes $u,v$ in a physical graph $G_{\phs}$. 
\item $\diam_{G_{\phs}}$ : The diameter of the physical graph $G_{\phs}$. 
\item $d_{\ths}$ : The maximum threshold distance (in the physical graph) between two virtual neighbour.
\item $T_{\ths}$ : The threshold storage time. 
\item $\pchose(u,v)$ : Probability that two nodes $u,v$ are virtual neighbours of each other.
\begin{itemize}
\item For the power-law virtual graphs,
\begin{equation}
\label{pchoseapp}
\pchose(u,v) := 
\begin{cases}
\frac{1}{\beta_{u}}\frac{1}{\dist_{C_n}(u,v)},~\dist_{C_n}(u,v) \leq d_{\ths}\\
0~~~\text{Otherwise},
\end{cases}
\end{equation}
where $\beta_u = \sum_{\substack{v' \in V \\ 0< \dist(u,v') \leq d_{\ths}}} \frac{1}{\dist_{C_n}(u,v)}$.
\item For the uniform virtual graphs,
\begin{equation}
\label{pchoseunifapp}
\pchose(u,v) := 
\begin{cases}
\frac{1}{N_{\leq d_{\ths}}(G_{\phs})},~\dist_{G_{\phs}}(u,v) \leq d_{\ths}\\
 0~~~\text{Otherwise},
\end{cases}
\end{equation}
where $N_{\leq d_{\ths}}(G)$ denotes the number of nodes at a distance at most $d_{\ths}$ from a node $u$.
\end{itemize}
\item $s$ : Source node.
\item $e$ : Destination node.
\item $D$ : The demand matrix.
\item $D_{i,j}$ : (i,j)-th entry of the demand matrix $D$. It signifies, how many EPR pairs the source $i$ and the destination $j$ wants to create.
\item $|D|$ : The number of non zero entries in the demand matrix $D$.
\item $PathDisc(s,e,\G,D_{s,e})$ : The subroutine, used by all of the proposed routing algorithms (modified greedy (\ref{path_greed}), local best effort (\ref{path_loc_best_eff}) and NoN local best effort (\ref{path_non_local_best_effort})) for discovering paths between a source and a destination pair. 
\item $CommPath_{s,e}$ : The path, returned by the subroutine $PathDisc(s,e,\G,D_{s,e})$.
\end{itemize}

\section{Overview}
\label{over}
In the main paper, in theorem \ref{ent_swap_begin} we give the upper bound on the number of required entanglement swap operations for distributing entanglement between any two nodes in a ring and a grid network. For the ease for the reader's reading we restate the theorem again.

\textbf{Theorem 1:} \textit{In the continuous model with $d_{\ths} > 2$, if the virtual graphs are constructed from a physical network ($G_{\phs}$) like a ring ($C_n$) or a grid ($Grid_{n \times n}$) topology, if $|D|=1$ and for all $i,j \in [0,n-1]$ and if $D_{i,j}\in \{0,1\}$ then for any source destination pair $s,e$, the expected number of required entangled swap operations for sharing an entangled link between $s,e$ is as given in table \ref{ent_swap_app}.}

\begin{table*}[h!]
\centering
\begin{tabular}{| m{8em} | m{6cm}| m{6cm} | } 
 \hline
 Models &  Modified Greedy and Local-Best Effort Routing & NoN Local-Best Effort Routing \\ 
  \hline
Deterministically chosen virtual links & $O(\frac{n}{d_{\ths}} + \log d_{\ths})$ & $O(\frac{n}{d_{\ths}}+\log d_{\ths})$  \\
 \hline
Virtual links chosen with power-law distribution & $O\left(\frac{n}{d_{\ths}} + \log d_{\ths}\right)$ & $O\left(\frac{n}{d_{\ths}} + \frac{\log d_{\ths}}{ \log \log d_{\ths}}\right)$\\
 \hline
Virtual links chosen with uniform distribution & $O\left(\frac{n}{d_{\ths}} + \frac{d_{\ths}}{(\log d_{\ths})^2}\right)$ & $O\left(\frac{n}{d_{\ths}} + \frac{d_{\ths}}{(\log d_{\ths})^2}\right)$ \\
 \hline
\end{tabular}
\vspace{0.2in}
\caption{Expected number of entanglement swap for ring and grid network with single source-destination pair.}
\label{ent_swap_app}
\end{table*}

In this paper, we only give the detailed proofs of the theorem \ref{ent_swap_begin} for the ring network with $n$ nodes ($C_n$). The same proof technique can be used for proving the theorem for the grid networks. Note that, in the table \ref{ent_swap_app} the number of required entanglement swap operations changes with the type of the virtual graphs. In this paper, we study deterministic, power-law and uniform virtual graphs. The structure of the proofs for all of these graphs is organised in the following manner.

\begin{enumerate}

\item In the first row of the table \ref{ent_swap_app} we have the upper bound on the number of swap operations for the deterministic virtual graphs. For these type of virtual graphs, the required number of swap operations, presented in the table \ref{ent_swap_app}, remains the same for all of the routing algorithms and the proofs directly follow from \cite{SMIKW16}. Hence, we do not include the proofs for such virtual graphs.

\item In the second row of the table \ref{ent_swap_app}, we have the upper bound on the number of swap operations for the power-law virtual graphs. One can find the detailed proofs of the bounds in the second row of the table \ref{ent_swap_app} in appendix \ref{proof_power}. More precisely, using lemma \ref{ndth} and lemma \ref{y_greed} we prove the upper bound on the number of entanglement swap operations that are required by the modified greedy and the local best effort routing algorithms for distributing entanglement in a power law virtual graph $\left(\text{The upper bound is }O\left(\frac{n}{d_{\ths}} + \log d_{\ths}\right)\right)$. Similarly, using lemma \ref{ndth} and lemma \ref{y_non_greed} we prove the upper bound on how many entanglement swap operations the NoN local best effort algorithm takes for distributing an entangled link between any two nodes in a power law virtual graph $\left(\text{The upper bound is }O\left(\frac{n}{d_{\ths}} + \frac{\log d_{\ths}}{ \log \log d_{\ths}}\right)\right)$. One can find lemma \ref{ndth} , \ref{y_greed} and \ref{y_non_greed} in appendix \ref{proof_power}.

\item In the third row of the table \ref{ent_swap_app}, we have the upper bound on the number of swap operations for the uniform virtual graphs. One can find the detailed proofs of the bounds in the third row of the table \ref{ent_swap_app} in appendix \ref{proof_unif}. Note that, in the table \ref{ent_swap_app}, both of the upper bounds in the third row are the same. However, due to the different behaviour of the routing algorithms, we need to use different proof techniques. For the uniform virtual graphs, by combining the results from lemma \ref{ndth_unif} and lemma \ref{y_greed_unif} we give the proof on the upper bound for the modified greedy and local best effort routing algorithm $\left(\text{The upper bound is }O\left(\frac{n}{d_{\ths}} + \frac{d_{\ths}}{(\log d_{\ths})^2}\right)\right)$. One can get the proof of the upper bound on the entanglement swap operations for the NoN local best effort algorithm by combining the results from lemma \ref{ndth_unif} and lemma \ref{y_non_unif} $\left(\text{The upper bound is }O\left(\frac{n}{d_{\ths}} + \frac{d_{\ths}}{(\log d_{\ths})^2}\right)\right)$.
 \end{enumerate}

\subsection{\textbf{Detailed Outline of the Proofs}}
In the routing algorithm \ref{algo_greed_dem}, the total number of required entanglement swap operations to share an entangled link between a source node $s$ and a destination node $e$ is the same as the size of the set $CommPath_{s,e}$ (i.e., $|CommPath_{s,e}|$). In theorem \ref{ent_swap_begin} we would like to give an upper bound on the following quantity, 

\begin{equation}
\label{target}
\max_{s,e \in C_n}E[|CommPath_{s,e}|].
\end{equation}

Here, the maximum is taken over all possible source destination pairs $(s,e)$ in $C_n$.

In algorithm \ref{algo_greed_dem}, the subroutine $PathDisc(s,e,\G,D_{s,e})$ computes the routing path $CommPath_{s,e}$. In this paper we consider three different types of $PathDisc(s,e,\G,D_{s,e})$ subroutines (See algorithms \ref{path_greed}, \ref{path_loc_best_eff} and \ref{path_non_local_best_effort}) and each of them are greedy and distributed in nature. This implies, the path has been discovered in a hop by hop fashion. In this discovery process the current node $u$ chooses one of its neighbours $v$ as the next hop of the path if $\dist_{C_n}(u,e) > \dist_{C_n}(v,e)$. In the physical graph $C_n$, for two nodes $u,v$, $\dist_{C_n}(u,v) = \min(|u-v|, n-|u-v|)$. For simplicity we assume that $\min(|s-e|, n-|s-e|) = |s-e|$. 

In order to give an upper bound on the equation \ref{target}, first we partition the set of all nodes in $C_n$ into $m'$ nonempty sets, $Z^{(s,e)}_0, \ldots , Z^{(s,e)}_{(m'-1)}$, where $m' = \left\lceil\frac{2|s-e|}{d_{\ths}}\right\rceil$. Here for all $0\leq i \leq m'-1$, $$Z^{(s,e)}_i := \left\{u : |u-e| \le |s-e| - \frac{id_{\ths}}{2}\right\}.$$ As all of the $PathDisc(s,e,\G,D_{s,e})$ subroutines are greedy in nature, so they start by constructing the set $CommPath_{s,e}$ from the source node $s \in Z^{(s,e)}_0$ and they discover the path through the nodes in the sets $Z^{(s,e)}_1, \ldots , Z^{(s,e)}_{m'-1}$. Each node $u \in Z^{(s,e)}_{i}$ is connected to a node in $v \in Z^{(s,e)}_{i+1}$ with probability $\pchose(u,v)$. Here note that if the algorithms discover a node $u\in Z^{(s,e)}_{m'-1}$ then $|u-e| \leq \frac{d_{\ths}}{2}$. Let $\Z^{(s,e)}_i$ denotes the number of nodes the path discovery algorithm visits to discover a node $v$ in the set $Z^{(s,e)}_{i+1}$ from a node $u$ in the set $Z^{(s,e)}_i$.


 
 According to the algorithms \ref{path_greed}, \ref{path_loc_best_eff} and \ref{path_non_local_best_effort} the total number of required entanglement swap is same as the length of the discovered path ($|CommPath_{s,e}|$). This implies,
\begin{equation}
\label{comm_len}
|CommPath_{s,e}| = \sum_{i=0}^{m'-2} \Z^{(s,e)}_i + Y_{s,e}.
\end{equation}
where $Y_{s,e}$ denotes the length of the discovered path from a node $u \in \Z^{(s,e)}_{m'-2}$ to $e$. By taking expectation on both sides of equation \ref{comm_len} we get,
\begin{align*}
E[|CommPath_{s,e}|] &=   E\left[\sum_{i=0}^{m'-2} \Z^{(s,e)}_i + Y_{s,e}\right]\\
& = \sum_{i=0}^{m'-2} E[\Z^{(s,e)}_i] + E[Y_{s,e}],
\end{align*}
In the theorem \ref{ent_swap_begin} we prove the upper bound on the following quantity,

\begin{align*}
\max_{s,e \in C_n}E[|CommPath_{s,e}|]  =  \max_{s,e}&\left[\sum_{i=0}^{m'-2} E[\Z^{(s,e)}_i] + E[Y_{s,e}]\right]\\
\leq  \max_{s,e \in C_n}\left[\sum_{i=0}^{m'-2} E[\Z^{(s,e)}_i]\right]& +  \max_{s,e}E[Y_{s,e}].
\end{align*} 

In the appendix \ref{proof_power} we give the detailed proof of the upper bound on $\max_{s,e}E[|CommPath_{s,e}|]$ for the power-law network. The proof has been subdivided into two parts. In the first part we focus on giving an upper bound on $\sum_{i=0}^{m'-2} E[\Z^{(s,e)}_i]$ (See lemma \ref{ndth}). The proof for this part is the same for all of the proposed $PathDisc(s,e,\G,D_{s,e})$ subroutines. However, in the second part, the proof of the upper bound on $E[Y_{s,e}]$ changes with the $PathDisc(s,e,\G,D_{s,e})$ subroutines. We prove the upper bounds on $E[Y_{s,e}]$ for algorithms \ref{path_greed} and \ref{path_loc_best_eff} in lemma \ref{y_greed}. We give the proof of the upper bound on $E[Y_{s,e}]$ for algorithm \ref{path_non_local_best_effort} in lemma \ref{y_non_greed}.

Similarly, appendix \ref{proof_unif} contains the detailed proof of the upper bound on $\max_{s,e}E[|CommPath_{s,e}|]$ for the uniform virtual graph. Lemma \ref{ndth_unif} contains the proof of the upper bound on $\sum_{i=0}^{m'-2} E[\Z^{(s,e)}_i]$. The proof of the upper bound on $E[Y_{s,e}]$ for the $PathDisc(s,e,\G,D_{s,e})$ subroutines \ref{path_greed} and \ref{path_loc_best_eff} is given in lemma \ref{y_greed_unif}. For the proof of the upper bound on $E[Y_{s,e}]$ for the path discovery algorithm \ref{path_non_local_best_effort} we refer to lemma \ref{y_non_unif}.



\section{Upper Bound on the number of Entanglement Swap Operations for Power Law Virtual Graphs}
\label{proof_power}

In the continuous model, the maximum distance (distance in the physical graph) between two virtual neighbour nodes is $d_{\ths}$. Due to this upper bound, in the worst case two nodes need to perform at least $\frac{\diam_{G_{\phs}}}{d_{\ths}}$ swap operations for sharing an entangled state. In lemma \ref{ndth} we show that, in worst case (when the distance between a source $s$ and a destination $e$ is $\diam_{G_{\phs}}$), all of the proposed routing algorithms take $O(\frac{n}{d_{\ths}})$ number of swap operations for distributing an entangled link between, a source $s$ and a node $u$, such that $\dist_{G_{\phs}}(u,e) \leq \frac{d_{\ths}}{2}$. Later, in lemma \ref{y_greed} we show that, both local best effort and modified greedy routing algorithms take $O(\log_2 d_{\ths})$ number of swap operations to create an entangled state between $u$ and the destination $e$. By combining the results of lemma \ref{ndth} and \ref{y_greed} we prove the following upper bound.

\begin{equation}
\max_{s,e \in C_n}E[|CommPath_{s,e}|] \leq O\left(\frac{n}{d_{\ths}} + \log_2 d_{\ths}\right)
\end{equation}

In lemma \ref{y_non_greed} we show that using the NoN greedy routing algorithm, the node $u$ can share an entanglement link with $e$ using $O(\frac{\log_2 d_{\ths}}{\log_2 \log_2 d_{\ths}})$ number of swap operations. Using the results of lemma \ref{ndth} and \ref{y_non_greed} we prove the following upper bound,

\begin{equation}
\max_{s,e\in C_n}E[|CommPath_{s,e}|] \leq O\left(\frac{n}{d_{\ths}} +\frac{\log_2 d_{\ths}}{\log_2 \log_2 d_{\ths}}\right)
\end{equation}


For the power-law virtual graphs, a node $u$ choses another node $v$ as virtual neighbour with probability $\pchose(u,v)$. According to the definition of $\pchose$ (see equation \ref{pchoseapp}) we have,

\begin{equation}
\label{pchoseapp_again}
\pchose(u,v) := 
\begin{cases}
\frac{1}{\beta_{u}}\frac{1}{\dist_{C_n}(u,v)},~\dist_{C_n}(u,v) \leq d_{\ths}\\
0~~~\text{Otherwise},
\end{cases}
\end{equation}
where $\beta_u = \sum_{\substack{v' \in V \\ 0< \dist(u,v') \leq d_{\ths}}} \frac{1}{\dist_{C_n}(u,v)}$.

In the next lemma, we give an upper bound on $\beta_u$ for the physical graph $C_n$.

\begin{lemma}
\label{upp_beta}
In a power law virtual graph, which has been constructed from a physical graph $C_n$, if any two nodes $u,v$ are virtual neighbours of each other with probability, $\pchose(u,v)$, such that,

\begin{equation}
\label{pchoseappagg}
\pchose(u,v) := 
\begin{cases}
\frac{1}{\beta_{u}}\frac{1}{\dist_{C_n}(u,v)},~\dist_{C_n}(u,v) \leq d_{\ths}\\
0~~~\text{Otherwise},
\end{cases}
\end{equation}
where $\beta_u = \sum_{\substack{v' \in V \\ 0< \dist(u,v') \leq d_{\ths}}} \frac{1}{\dist_{C_n}(u,v)}$, then for any $u$, the value of $\beta_u \leq 4\log_2 d_{\ths}$.

\end{lemma}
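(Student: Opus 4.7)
The plan is to bound $\beta_u$ by exploiting the simple geometry of the cycle $C_n$: from any vertex $u$, there are at most two vertices at each given hop-distance, so the sum over neighbours weighted by $1/\dist_{C_n}(u,\cdot)$ collapses into (twice) a harmonic sum truncated at $d_{\ths}$.

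Concretely, I would first observe that for every integer $k$ with $1 \le k \le d_{\ths}$, the set $\{v' \in V : \dist_{C_n}(u,v') = k\}$ has cardinality at most $2$ (exactly one vertex on each side of $u$ along the ring, with possible coincidence when $k = n/2$). Grouping the terms of $\beta_u$ by their common distance value therefore yields
\begin{equation*}
\beta_u \;=\; \sum_{\substack{v' \in V \\ 0 < \dist_{C_n}(u,v') \le d_{\ths}}} \frac{1}{\dist_{C_n}(u,v')} \;\le\; \sum_{k=1}^{d_{\ths}} \frac{2}{k} \;=\; 2 H_{d_{\ths}},
\end{equation*}
where $H_m = \sum_{k=1}^{m} 1/k$ is the $m$-th harmonic number.

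The second step is to invoke a standard estimate on harmonic numbers. I would use $H_m \le 1 + \ln m$ (easily proved by comparing the sum to the integral $\int_1^m dx/x$), and then convert to a base-$2$ logarithm. Since $d_{\ths} > 2$ by the hypothesis of Theorem~\ref{ent_swap_begin}, one has $1 \le \log_2 d_{\ths}$ and $\ln d_{\ths} = (\ln 2)\log_2 d_{\ths} \le \log_2 d_{\ths}$, so $H_{d_{\ths}} \le 2\log_2 d_{\ths}$. Plugging this into the inequality above gives $\beta_u \le 4\log_2 d_{\ths}$, which is the claim.

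I do not anticipate any real obstacle here; the proof is essentially bookkeeping. The only point worth a sanity check is the degenerate case in which $d_{\ths}$ is so large relative to $n$ that some distance values are attained by only one vertex (namely $k = n/2$ when $n$ is even), but this can only decrease the sum and thus strengthens the bound. Likewise, the constant $4$ is not tight and could be sharpened with a more careful harmonic estimate, but the statement of the lemma only requires the $O(\log d_{\ths})$-type bound with constant $4$, which the argument above delivers directly.
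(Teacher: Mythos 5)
Your proof is correct and follows essentially the same route as the paper: group the terms of $\beta_u$ by hop-distance, use that $C_n$ has at most two vertices at each distance from $u$ to reduce the sum to $2H_{d_{\ths}}$, and then bound the truncated harmonic sum by $2\log_2 d_{\ths}$ using $d_{\ths}>2$. The only cosmetic difference is that you pass through $\ln$ before converting to $\log_2$, whereas the paper states the bound $\sum_{j=1}^{d_{\ths}}1/j \le \log_2 d_{\ths}+1 < 2\log_2 d_{\ths}$ directly.
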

\begin{proof}
According to the definition of $\beta_u$ we have,
\begin{align}
 \nonumber
\beta_u &= \sum_{\substack{v' \in V \\ 0< \dist(u,v') \leq d_{\ths}}} \frac{1}{\dist_{C_n}(u,v')}\\ \nonumber
& = \sum_{j=1}^{d_{\ths}} \sum_{\substack{v' \in V\\\dist_{C_n}(u,v') = j}}  \frac{1}{j}\\ \nonumber
& \leq \sum_{j=1}^{d_{\ths}} \frac{2}{j}\\ \nonumber
&\text{As }\sum_{j=1}^{d_{\ths}}\frac{1}{j} \leq \log_{2}d_{\ths} +1 < 2\log_{2}d_{\ths}, \text{this implies}\\  
\label{upper_beta}
\beta_u & \leq 4\log_{2}d_{\ths}.
\end{align} 

\end{proof}

\begin{lemma}
\label{ndth}

For the power-law virtual graphs, constructed from the physical graph $C_n$, if $|D|=1$ and for all $i,j \in [0,n-1]$ if $D_{i,j}\in \{0,1\}$, then for any source destination pair $(s,e)$, all of the algorithms \ref{path_greed}, \ref{path_loc_best_eff} and \ref{path_non_local_best_effort} take $O(\frac{n}{d_{\ths}})$ number of entanglement swap operations for distributing an entangled link between $s$ and a node $u$ such that $\dist_{C_n}(u,e) \leq \frac{d_{\ths}}{2}$. 
\end{lemma}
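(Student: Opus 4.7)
The plan is to make rigorous the slicing argument outlined in the sketch that follows Theorem~\ref{ent_swap_begin}. I would partition $C_n$ into the nested slices $Z_i = \{v \in C_n : |v-e|\leq \ell_0 - id_{\ths}/2\}$ for $i=0,\dots,m'-1$, where $\ell_0 = |s-e|$ and $m' = \lceil 2\ell_0/d_{\ths}\rceil = O(n/d_{\ths})$. Then $s\in Z_0$, and every vertex in $Z_{m'-1}$ is at hop-distance at most $d_{\ths}/2$ from $e$, which is exactly the target region in the statement. Letting $T_i$ denote the number of hops the path-discovery routine spends inside $Z_i\setminus Z_{i+1}$ before first entering $Z_{i+1}$, the number of entanglement swaps used to reach the target equals $\sum_{i=0}^{m'-2} T_i$, and by linearity of expectation it suffices to prove that $E[T_i]$ is bounded above by an absolute constant, uniformly in $i$ and in which of the three routing algorithms is used.

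The key step would be a single-hop progress estimate. Fix $u\in Z_i\setminus Z_{i+1}$ and let $\ell=|u-e|$; the worst case is $\ell=\ell_0-id_{\ths}/2$, since otherwise the ``good'' arc is only larger. In that worst case, any virtual neighbour $v$ on the arc towards $e$ at hop-distance $d\in[d_{\ths}/2,d_{\ths}]$ from $u$ lands in $Z_{i+1}$. The probability that a single power-law sample produces such a $v$ is at least
\[
\sum_{d=d_{\ths}/2}^{d_{\ths}} \frac{1}{\beta_u\, d} \;\geq\; \frac{d_{\ths}/2}{\beta_u\, d_{\ths}} \;\geq\; \frac{1}{8\log_2 d_{\ths}},
\]
where the last inequality invokes Lemma~\ref{upp_beta}. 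Since each node samples $k=\log_2 d_{\ths}$ virtual neighbours, the probability that at least one lies in this good arc is at least $1-\bigl(1-\tfrac{1}{8\log_2 d_{\ths}}\bigr)^{\log_2 d_{\ths}}\geq 1-e^{-1/8}=:c>0$, an absolute constant independent of $d_{\ths}$.

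Next I would argue that all three algorithms actually exploit such a neighbour when it exists. Because $|D|=1$ and $D_{s,e}=1$, the continuous model keeps every pre-shared link available, so the ``enough links available'' filters in Algorithms~\ref{path_greed} and~\ref{path_loc_best_eff} are vacuous and both subroutines pick the virtual neighbour closest to $e$; that neighbour is at least as close to $e$ as any specific good neighbour, and hence itself lies in $Z_{i+1}$. The NoN subroutine (Algorithm~\ref{path_non_local_best_effort}) strictly enlarges the candidate set by also inspecting two-hop neighbours, so the same event ``a good neighbour exists'' still forces a move into $Z_{i+1}$. Thus in every algorithm, $T_i$ is stochastically dominated by a geometric random variable with success probability $c$, giving $E[T_i]\leq 1/c = O(1)$ and, after summation, the claimed bound $\sum_{i=0}^{m'-2}E[T_i]= O(n/d_{\ths})$.

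The main technical obstacles I anticipate are three boundary-type issues. First, the $k$ virtual neighbours are drawn \emph{without} replacement, so the $1-(1-p)^k$ estimate needs the standard coupling observation that sampling without replacement only increases the probability of hitting any fixed target set. Second, one must verify that ring wrap-around does not shorten the good arc; this holds under the standing convention $|s-e|\leq n/2$ together with $d_{\ths}\leq \ell_0$, ensuring all good positions remain on the short arc to $e$. Third, the final transition $Z_{m'-2}\to Z_{m'-1}$ deserves a separate check because $\ell_0$ need not be an exact multiple of $d_{\ths}/2$, but any such slack only enlarges the good arc and therefore only strengthens the per-slice bound.
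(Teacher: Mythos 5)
Your proposal is correct and follows essentially the same route as the paper's own proof: the same slicing of $C_n$ into sets $Z^{(s,e)}_i$ of width $d_{\ths}/2$, the same lower bound of $\frac{1}{8\log_2 d_{\ths}}$ on the probability mass a single power-law sample places on the "good" region (via Lemma~\ref{upp_beta} and the fact that this region contains at least $d_{\ths}/2$ nodes each within distance $d_{\ths}$), the same $1-e^{-1/8}$ per-hop success probability over $k=\log_2 d_{\ths}$ samples, and the same geometric-variable conclusion $E[\tilde{Z}^{(s,e)}_i]=O(1)$ summed over $m'=O(n/d_{\ths})$ slices. The boundary issues you flag (sampling without replacement, wrap-around, the final slice) are real but minor, and your proposed resolutions are sound; the paper glosses over them.
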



\textit{Proof Outline :} For the proof of this lemma, we divide the set of nodes of $C_n$ into $m'$ nonempty sets, $Z^{(s,e)}_0, \ldots , Z^{(s,e)}_{m'-1}$, where 
\begin{equation}
\label{eqm}
m' := \left\lceil\frac{2|s-e|}{d_{\ths}}\right\rceil.
\end{equation}
 Here for all $0\leq i \leq m'-1$, 
 \begin{equation}
 \label{eq_zpower}
 Z^{(s,e)}_i := \left\{w : |w-e| \leq |s-e|-\frac{id_{\ths}}{2}\right\}.
 \end{equation}
  The greedy path discovery subroutines, start constructing the path from $s \in Z^{(s,e)}_0$. In this process if a node $w \in Z^{(s,e)}_i$, chooses $v$ as a next hop in the path, then either $v \in Z^{(s,e)}_i$ or $v \in Z^{(s,e)}_{i+1}$ ($0 \leq i \leq m'-2$). Note that all the nodes $u \in Z^{(s,e)}_{m'-1}$ are at most $\frac{d_{\ths}}{2}$ distance away from the destination $e$. for all $0\leq i \leq m'-2$, $\Z^{(s,e)}_i$ denotes the number of nodes the path discovery algorithms visit to discover a node in the set $Z^{(s,e)}_{i+1}$ from a node in the set $Z^{(s,e)}_i$, then in this lemma we would like to prove,

\begin{equation}
\sum_{i=0}^{m'-2} E[\Z^{(s,e)}_i] \leq O\left(\frac{n}{d_{\ths}}\right).
\end{equation}

In the proof, first we show that each of $E[\Z^{(s,e)}_i]$ is upper bounded by a constant number, which is independent of $n$ and $d_{\ths}$. As a result, we get $\sum_{i=0}^{m'-2} E[\Z^{(s,e)}_i] \leq O(m')$. Substituting the value of $m'$ from equation \ref{eqm} we get $\sum_{i=0}^{m'-2} E[\Z^{(s,e)}_i] \leq O(m') = O\left(\left\lceil\frac{2|s-e|}{d_{\ths}}\right\rceil \right)$. As, $|s-e| \leq \diam_{C_n} = \lceil\frac{n}{2}\rceil$. This implies, $\sum_{i=0}^{m'-2} E[\Z^{(s,e)}_i] \leq O(m') \leq O\left(\frac{n}{d_{\ths}}\right)$. The detailed proof is given below. 

\begin{proof}   

We consider the situation where, for any $0\leq i \leq m'-1$ the path discovery algorithms discover the path from $s$ to a node $w \in Z^{(s,e)}_i$. In the next step of the path discovery algorithm, $w$ choses the next hop $v$ from his neighbour nodes, such that $\dist_{C_n}(v,e)< \dist_{C_n}(w,e)$. We are interested in computing the total number of required entanglement swap operations for creating an entangled link with a node $v \in Z^{(s,e)}_{i+1}$ from any node $w \in Z^{(s,e)}_{i}$. Let $\Pr[w \rightarrow Z^{(s,e)}_{i+1}]$ denotes the probability that $w$ has at least one neighbour in the set $Z^{(s,e)}_{i+1}$. From the definition of $\pchose$ (see equation \ref{pchoseapp}) we have,
\begin{align*}
\pchose(w,v)  &= \frac{1}{\beta_{w}\dist_{C_n}(w,v)}~~\text{if}~|w-v|\leq d_{\ths}\\
& \geq \frac{1}{\beta_{w}d_{\ths}},~~\text{As}~|w-v|\leq d_{\ths}\\
& = 0 ~~~~\text{Otherwise}.
\end{align*}

Let $Z'^{(s,e)}_{i+1} \subseteq Z^{(s,e)}_{i+1}$ denotes the set of nodes $v$ such that $\forall v \in Z'^{(s,e)}_{i+1}$, $|w-v| \leq d_{\ths}$. Each of the nodes in $Z'^{(s,e)}_{i+1}$ choose $w$ as a virtual neighbour with probability $\pchose(w,v)$. This implies $w$ has at least one virtual neighbour in $Z^{(s,e)}_{i+1}$ with probability $|Z'^{(s,e)}_{i+1}|\pchose(w,v)$. In section \ref{power_ring} we mention that in the virtual graph, $w$ has at least $\log_2 d_{\ths}$ such virtual neighbours and each of them belongs to the set $Z^{(s,e)}_{i+1}$ with probability $|Z'^{(s,e)}_{i+1}|\pchose(w,v)$. This implies,
\begin{equation}
\label{temp_w1}
\Pr[w \rightarrow Z^{(s,e)}_{i+1}]  \geq 1- (1-|Z'^{(s,e)}_{i+1}|\pchose(w,v))^{\log_2 d_{\ths}},
\end{equation}
For all real $x$ and $r> 0 $, we have $(1-x)^r \leq \exp(-xr)$. In the above equation if we consider $x = |Z'^{(s,e)}_{i+1}|\pchose(w,v)$ and $r = \log_2 d_{\ths}$, then we get, $(1-|Z'^{(s,e)}_{i+1}|\pchose(w,v))^{\log_2 d_{\ths}} \leq \exp(-|Z'^{(s,e)}_{i+1}|\pchose(w,v)\log_2 d_{\ths})$. Substituting this inequality in equation \ref{temp_w1} we get,
\begin{align}
\nonumber 
\Pr[w \rightarrow Z^{(s,e)}_{i+1}]& \geq 1- \exp(-|Z'^{(s,e)}_{i+1}|\pchose(w,v)\log_2 d_{\ths}).
\end{align}
Substituting the value of $\pchose(w,v)$ we get,
\begin{align}
\nonumber 
\Pr[w \rightarrow Z^{(s,e)}_{i+1}]& \geq 1- \exp\left(-\frac{|Z'^{(s,e)}_{i+1}|\log_2 d_{\ths}}{\beta_w d_{\ths}}\right).
\end{align}
Substituting the value of $\beta_w$ from lemma \ref{upp_beta} we get,
\begin{align}
\nonumber
\Pr[w \rightarrow Z^{(s,e)}_{i+1}]& \geq 1- \exp\left(-\frac{|Z'^{(s,e)}_{i+1}|\log_2 d_{\ths}}{4d_{\ths}\log_2 d_{\ths} }\right)\\
\label{eqmain}
\Pr[w \rightarrow Z^{(s,e)}_{i+1}]& \geq 1- \exp\left(-\frac{|Z'^{(s,e)}_{i+1}|}{4d_{\ths}}\right).
\end{align}

Now we focus on giving a lower bound on $|Z'^{(s,e)}_{i+1}|$. For the lower bound, we consider the nodes in $Z^{(s,e)}_{i}$ which are the furthest from the set $Z'^{(s,e)}_{i+1}$. According to the definition (see equation \ref{eq_zpower}), all of the nodes in $Z'^{(s,e)}_i$ are at most $|s-e| - \frac{id_{\ths}}{2}$ distance away from the destination node $e$ and all of the nodes in $Z'^{(s,e)}_{i+1}$ are at most $|s-e| - \frac{(i+1)d_{\ths}}{2}$ away from the destination $e$. The maximum distance between a node $\omega \in Z^{(s,e)}_{i}$ and the set $Z'^{(s,e)_{i+1}}$ is computed in the following equation, 

\begin{equation}
\max_{\omega \in Z^{(s,e)}_i} \min_{v' \in Z^{(s,e)}_{i+1}} |\omega - v'|.
\end{equation}
From the triangle inequality we have, for any three nodes $\omega, v', e$, $|\omega - v'| \geq |\omega - e| - |v' - e|$. This implies, 
\begin{align}
\nonumber
\max_{\omega \in Z^{(s,e)}_i} \min_{v' \in Z^{(s,e)}_{i+1}} |\omega - v'| &\geq \max_{\omega \in Z^{(s,e)}_i} \min_{v' \in Z^{(s,e)}_{i+1}} |\omega - e| - |v' - e|\\ \label{temp_z}
& = \max_{\omega \in Z^{(s,e)}_i} |\omega - e| - \min_{v' \in Z^{(s,e)}_{i+1}}   |v' - e|
\end{align}
As $ \omega \in Z^{(s,e)}_i$ this implies, $|\omega - e| \geq |s-e| - \frac{id_{\ths}}{2}$. Similarly, as $v' \in Z^{(s,e)}_{i+1}$, this implies $|v'-e| \leq |s-e| - \frac{(i+1)d_{\ths}}{2}$. Substituting the values of $|\omega - e|$ and $|v' - e|$ in equation \ref{temp_z} we get, 
\begin{align*}
\max_{\omega \in Z^{(s,e)}_i} \min_{v' \in Z^{(s,e)}_{i+1}} |\omega - v'| & \geq |s-e| - \frac{id_{\ths}}{2} - |s-e| + \frac{(i+1)d_{\ths}}{2}\\
&= \frac{d_{\ths}}{2}.
\end{align*}

The above derivation implies that in the worst case all of the nodes $Z'^{(s,e)}_{i+1}$ are at least $\frac{d_{\ths}}{2}$ distance away from $w$. According to the definition of $Z'^{(s,e)}_{i+1}$, each of the nodes in this set are at most $d_{\ths}$ distance away from the node $w$. This implies, the maximum distance between any two nodes in $Z'^{(s,e)}_{i+1}$ is at least $\frac{d_{\ths}}{2}$. As, in $C_n$, there are at least $d$ number of nodes within $d$ distance from any node $w$. This implies, 

\begin{equation}
\label{set_size}
|Z'^{(s,e)}_{i+1}| \geq \frac{d_{\ths}}{2}. 
\end{equation}
Substituting the value of $|Z'^{(s,e)}_{i+1}|$ in equation \ref{eqmain} we get,
\begin{align}
\nonumber
\Pr[w \rightarrow Z^{(s,e)}_{i+1}] &\geq   1- \exp\left(-\frac{|Z'^{(s,e)}_{i+1}|}{8d_{\ths}}\right)\\
\label{low_bound_power}
& \geq 1- \exp\left( -\frac{1}{8}\right).
\end{align}

Here, $\Z^{(s,e)}_{i+1}$ denotes the number of required swap operations for distributing an entangled link between a node in $ Z^{(s,e)}_{i+1}$ from a node in $ Z^{(s,e)}_{i}$. From the equation \ref{low_bound_power} we have that each of such node $w \in Z^{(s,e)}_{i}$ is connected to a link $v \in  Z^{(s,e)}_{i+1}$ with probability at least $1- \exp\left( -\frac{1}{8}\right)$. This implies, each swap operation manages to create an entangled link with a node $v \in Z^{(s,e)}_{i+1}$ with probability at least $1- \exp\left( -\frac{1}{8}\right)$. The phenomenon of this entanglement can be modelled as a sequence of independent trials, where each trial succeeds with probability at least $1- \exp\left( -\frac{1}{8}\right)$. This implies, $\Z^{(s,e)}_{i+1}$ follows a geometric distribution with parameter $1- \exp\left( -\frac{1}{8}\right)$. So, we have, for all $0\leq i \leq m'-2$, $E[\Z^{(s,e)}_{i+1}]\leq \left(1- \exp\left( \frac{1}{8}\right)\right)^{-1} = O(1)$. This implies, 

\begin{align}
\nonumber
\sum_{i=0}^{m'-2} E[\Z^{(s,e)}_i] &\leq  \sum_{i=0}^{m'-2} O(1) .
\end{align}
Substituting the value of $m'$ from equation \ref{eqm} in the above equation we get,

\begin{align}
\label{temp_upp1}
\sum_{i=0}^{m'-2} E[\Z^{(s,e)}_i]& \leq O\left(\left\lceil\frac{2|s-e|}{d_{\ths}}\right\rceil \right).
\end{align}
As in a ring network with $n$ nodes ($C_n$) the maximum distance between any two nodes is at most $\frac{n}{2}$, this implies, for any $s,e \in V$, $|s-e| \leq \frac{n}{2}$. Substituting the upper bound on $|s-e|$ in equation \ref{temp_upp1} we get,
\begin{align}
\label{upp_stand}
\sum_{i=0}^{m'-2} E[\Z^{(s,e)}_i] \leq O\left(\frac{n}{d_{\ths}}\right).
\end{align}
This concludes the proof.
\end{proof}
Substituting the upper bound on $\sum_{i=0}^{m'-2} E[\Z^{(s,e)}_i]$ to equation \ref{target} we get,
\begin{equation}
\label{target_sofar}
\max_{s,e}E[|CommPath_{s,e}|] \leq O\left(\frac{n}{d_{\ths}} \right) + \max_{s,e}E[Y_{s,e}].
\end{equation}

Up to this part, the proofs are the same for all of the proposed algorithms. The analysis changes when we compute the upper bounds of $E[Y_{s,e}]$.

\subsection{\textbf{Proof of the upper bound on the swap operations for both local best effort and modified greedy algorithm}}

In equation \ref{target_sofar}, the term $Y_{s,e}$ is related to the term $\max_{s,e}E[|CommPath_{s,e}|]$. For a source destination pair $(s,e)$, the term $Y_{s,e}$ denotes how many more entanglement swap operations are required for creating an entangled link between $s$ and $e$, given that $s$ already has created an entangled link with a node $u$ such that $\dist_{C_n}(u,e) \leq \frac{d_{\ths}}{2}$. In lemma \ref{y_greed} we show that the modified greedy and the local best effort algorithm can share an entangled state between any two nodes $u,e$ such that $\dist_{C_n}(u,e) \leq \frac{d_{\ths}}{2}$, using only $ O(\log_2 d_{\ths})$ swap operations. This gives us the required bound for $\max_{s,e}E[Y_{s,e}] \leq O(\log_2 d_{\ths})$. The proof technique is just a simple adaptation of the proof given in \cite{klein99}. For completeness, here we include the proof.

\begin{lemma}
\label{y_greed}
For the power-law virtual graphs, constructed from the physical graph $C_n$, if $|D|=1$ and for some $u,e \in [0,n-1]$ if $D_{u,e} = 1$ and if $\dist_{C_n}(u,e) \leq \frac{d_{\ths}}{2}$ then the algorithms \ref{path_greed}, \ref{path_loc_best_eff} take $O(\log_2 d_{\ths})$ number of entanglement swap operations for distributing an entangled link between $u$ and $e$.
\end{lemma}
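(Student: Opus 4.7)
The plan is to adapt Kleinberg's classical analysis of greedy routing in small-world networks \cite{klein99,klein00} to our setting. First I would define a nested sequence of neighbourhoods around the destination: for $0 \le i \le m-1$ with $m = \lceil \log_2 d_{\ths}\rceil$, set
\[
X_i = \Big\{ w \in V : |w-e| \le \tfrac{d_{\ths}}{2^{i+1}} \Big\}.
\]
By hypothesis $u \in X_0$, and every node in $X_{m-1}$ is within constant physical-graph distance of $e$. If $Y_i$ denotes the number of hops the path-discovery procedure spends in $X_i \setminus X_{i+1}$, then $Y_{s,e} \le \sum_{i=0}^{m-1} Y_i + O(1)$, so it suffices to bound each $E[Y_i]$ by a constant.

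Next I would show that for any current node $w \in X_i\setminus X_{i+1}$, a single sampled virtual neighbour lands in $X_{i+1}$ with probability $\Omega(1/\log_2 d_{\ths})$. The key observation is that every node $v\in X_{i+1}$ satisfies $|w-v| \le |w-e| + |v-e| \le d_{\ths}/2^i + d_{\ths}/2^{i+1} \le d_{\ths}$, so the power-law expression (\ref{pchoseapp}) applies with $\pchose(w,v) \ge \tfrac{1}{\beta_w \cdot d_{\ths}}$. Using lemma \ref{upp_beta} to bound $\beta_w \le 4\log_2 d_{\ths}$, and counting $|X_{i+1}| = \Theta(d_{\ths}/2^{i+1})$ candidate targets on the ring, a union bound yields
\[
\Pr[\text{one sample lands in }X_{i+1}] \;\ge\; \frac{|X_{i+1}|}{\beta_w\, d_{\ths}} \;=\; \Omega\!\left(\frac{1}{2^{i+1}\log_2 d_{\ths}}\right).
\]
Since each node draws $\log_2 d_{\ths}$ independent virtual neighbours, the probability that at least one of them lies in $X_{i+1}$ is $1 - (1-p)^{\log_2 d_{\ths}} = \Omega(1)$ after absorbing the $\log_2 d_{\ths}$ factor; the classical Kleinberg doubling trick (focusing on the innermost half of $X_i$) makes this constant independent of $i$.

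Once the one-step progress probability is a constant, $Y_i$ is stochastically dominated by a geometric random variable, giving $E[Y_i] = O(1)$ and hence $E[Y_{s,e}] = O(m) = O(\log_2 d_{\ths})$ as claimed. Two small points must be checked. First, both algorithms are strictly monotone in $\dist_{C_n}(\cdot,e)$, so once we enter $X_{i+1}$ we never leave it; this is immediate from step $4$ of algorithm \ref{path_greed} and step $4$ of algorithm \ref{path_loc_best_eff}. Second, under the single-demand hypothesis $|D|=1$, $D_{u,e}=1$, every pre-shared link in $\G$ is available, so the fallback branches (step $5$ of \ref{path_greed} and step $5$ of \ref{path_loc_best_eff}) never trigger on the grounds of missing capacity; in other words, both algorithms reduce to pure greedy routing on $\G$. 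The main obstacle is book-keeping around the boundary of each $X_i$: the candidate set $X_{i+1}$ can be as small as $\Theta(d_{\ths}/2^{i+1})$ while potential targets can be up to distance $d_{\ths}$ away, exactly where the power-law density is weakest. Getting the $\log_2 d_{\ths}$ factor to cancel cleanly (rather than leaving a residual $\log$ per phase, which would blow up to $\log^2 d_{\ths}$) is the delicate part, and it relies precisely on matching the $\beta_w = \Theta(\log_2 d_{\ths})$ normalisation with the $\log_2 d_{\ths}$ samples per node.
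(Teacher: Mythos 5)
Your overall scaffolding (nested balls $X_i$ halving in radius, per-phase geometric bounds, monotonicity of both algorithms, and the observation that with $|D|=1$ the capacity fallback never fires) matches the paper's proof, but the central quantitative step is broken as written. Having correctly derived $|w-v|\le |w-e|+|v-e|\le d_{\ths}/2^{i}+d_{\ths}/2^{i+1}$, you then discard this and only use the weaker consequence $|w-v|\le d_{\ths}$ to bound $\pchose(w,v)\ge \frac{1}{\beta_w d_{\ths}}$. With $|X_{i+1}|=\Theta(d_{\ths}/2^{i+1})$ this gives a per-sample success probability of $\Omega\bigl(\frac{1}{2^{i+1}\log_2 d_{\ths}}\bigr)$, and after $\log_2 d_{\ths}$ independent samples the per-hop progress probability is
\[
1-\Bigl(1-\tfrac{1}{2^{i+1}\log_2 d_{\ths}}\Bigr)^{\log_2 d_{\ths}}=\Theta\bigl(2^{-(i+1)}\bigr),
\]
which is \emph{not} $\Omega(1)$: it decays with the phase index, so $E[Y_i]=O(2^{i+1})$ and the phases sum to $O(d_{\ths})$ rather than $O(\log_2 d_{\ths})$. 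The appeal to a ``Kleinberg doubling trick on the innermost half of $X_i$'' does not repair this, and your closing diagnosis that the delicate point is cancelling the $\log_2 d_{\ths}$ factors misidentifies the problem --- the residual factor in your chain is $2^{i+1}$, not a logarithm.

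The fix, which is exactly what the paper does, is to keep the phase-dependent distance bound: since $|w-v|\le 2d_{\ths}/2^{i}$, the power-law weight gives $\pchose(w,v)\ge \frac{1}{\beta_w}\cdot\frac{2^{i}}{2d_{\ths}}$, so the $2^{i}$ in the numerator cancels the $2^{-(i+1)}$ in $|X_{i+1}|$ and the aggregate probability that $w$ has a contact in $X_{i+1}$ is $\Omega(1/\log_2 d_{\ths})$ \emph{uniformly in $i$}; only then do the $k=\log_2 d_{\ths}$ samples lift this to a constant, making each $Y_i$ geometric with constant parameter and yielding the claimed $O(\log_2 d_{\ths})$ total. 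You have all the ingredients on the page --- you simply must not weaken the triangle-inequality bound to $d_{\ths}$ before substituting it into equation (\ref{pchoseapp}).
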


\begin{proof}
Let us consider a hierarchy of of $m+1$ nonempty sets, $X^{(u,e)}_0 \supset X^{(u,e)}_1 \supset X^{(u,e)}_2 \supset \ldots \supset X^{(u,e)}_m$, where 
\begin{equation}
\label{upp_m}
m := \log_2 d_{\ths},
\end{equation}
and for all $0\leq i \leq m$, $$X^{(u,e)}_i := \left\{u' : |u'-e| \le \frac{d_{\ths}}{2^i}\right\}.$$
Note that, all the nodes in $X^{(u,e)}_m$ are at most one distance (in the physical graph) away from the destination node $e$. This implies, all of the nodes in $X^{(s,e)}_m$ are physical neighbours of $e$.

In a ring network, $C_n$, one can verify that for all $0 \leq i \leq m$,

\begin{equation}
\label{eqsize_power1}
|X^{(u,e)}_i| \geq \frac{d_{\ths}}{2^i}.
\end{equation}

The algorithms start to discover a path from a node in $u \in X^{(u,e)}_0$ to a node in the set $X^{(u,e)}_{m}$ through the nodes in the sets $X^{(u,e)}_{1}, \ldots , X^{(u,e)}_{m-1}$. The algorithms stop when it discovers the path up to the destination node in $X^{(u,e)}_m$. Let the algorithm spends $Y^{(u,e)}_i$ iterations in the set $X^{(u,e)}_i$. According to Algorithm \ref{path_greed} and \ref{path_loc_best_eff} the length of the path is,
\begin{equation}
\label{sumY_greed}
 \sum_{i=0}^{m} Y^{(u,e)}_i.
\end{equation}



Here, first we show that for each $0 \leq i \leq m$, $E[Y^{(u,e)}_i]$ is upper bounded by a constant number. This gives us the proof that, $\sum_{i=0}^m E[Y^{(u,e)}_i] \leq O(m)$. Substituting the value of $m$ from equation \ref{upp_m} we have $\sum_{i=0}^m E[Y^{(u,e)}_i] \leq O(m) \leq O(\log_2 d_{\ths})$. The detailed proof is given below.

For any $0 \leq i \leq m-1$, suppose the path discovery algorithm has discovered a path from $u \in X^{(u,e)}_0$ to a node $u'$ such that $u' \in X^{(u,e)}_i$ but $u' \not\in X^{(u,e)}_{i+1}$. Let $u' \rightarrow X^{(u,e)}_{i+1}$ denotes the event that $u'$ has a virtual neighbour in the set $X^{(u,e)}_{i+1}$. In a similar way, by the notation $u' \not\rightarrow X^{(u,e)}_{i+1}$ we denote the event that $u'$ has no virtual neighbour in $X^{(u,e)}_{i+1}$. Let $v \in X^{(u,e)}_{i+1}$, then from equation \ref{pchoseapp} we have $\pchose(u',v) \geq \frac{1}{\beta_v} \frac{1}{|u'-v|}$. From lemma \ref{upp_beta} we have, $\beta_v \leq 4\log_2 d_{\ths}$. This implies 
\begin{equation}
\label{pchoselem}
\pchose(u',v) \geq \frac{1}{4\log_2 d_{\ths}} \frac{1}{|u'-v|}. 
\end{equation}
As $v\in X^{(u,e)}_{i+1}$ so $|v-e| \leq \frac{d_{\ths}}{2^{i+1}}$. From the triangle inequality of the distance function we have,

\begin{align*}
|u'-v| &\leq |u'-e| + |v-e|,
\end{align*}
As, $u' \in X^{(s,e)}_i$, this implies $|u'-e| \leq \frac{d_{\ths}}{2^{i}}$. Similarly, as $v\in  X'^{(s,e)}_{i+1}$, this implies, $|v-e| \leq \frac{d_{\ths}}{2^{i}}$. Substituting $|u'-e|$ and $|v-e|$ in the above equation we get,
\begin{align*}
|u'-v|& \leq \frac{2d_{\ths}}{2^{i}}.
\end{align*}

By substituting the upper bound on of $|u'-v|$ in the expression of $\pchose(u',v)$ in equation \ref{pchoselem} we get,

\begin{equation}
\label{low_p}
\pchose(u',v) \geq  \frac{1}{8\log_2 d_{\ths}} \frac{2^i}{d_{\ths}}.
\end{equation}

In the power-law virtual graph, each of the nodes $v \in X^{(s,e)}_{i+1}$ is a virtual neighbour of $u'$ with probability $\pchose(u',v)$. So, the probability that $u'$ has a virtual neighbour in $X^{(s,e)}_{i+1}$ is, $$\sum_{v \in X^{(u,e)}_{i+1}} \pchose(u',v).$$
In the above expression, substituting the value of $\pchose(u',v)$ from equation \ref{low_p} we get,

\begin{align*}
\sum_{v \in X^{(u,e)}_{i+1}} \pchose(u',v) &\geq \sum_{v \in X^{(u,e)}_{i+1}}\frac{1}{8\log_2 d_{\ths}} \frac{2^i}{d_{\ths}} \\
&\geq |X^{(u,e)}_{i+1}|\frac{1}{8\log_2 d_{\ths}} \frac{2^i}{d_{\ths}}.
\end{align*}
Substituting $|X^{(u,e)}_{i+1}|$ from equation \ref{eqsize_power1} we get,
\begin{align*}
\sum_{v \in X^{(u,e)}_{i+1}} \pchose(u',v) &\geq \frac{d_{\ths}}{2^{i+1}}\frac{1}{16\log_2 d_{\ths}} \frac{2^i}{d_{\ths}}\\
& = \frac{1}{16\log_2 d_{\ths}}.
\end{align*}

As $u'$ has $k=\log_2 d_{\ths}$ such virtual neighbours and each of them is identical and independently distributed according to $\pchose$. This implies,  

\begin{align*}
\Pr[u' \not\rightarrow X^{(u,e)}_{i+1}] &\leq (1 - |X^{(u,e)}_{i+1}|\pchose(u',v))^k
\end{align*}
As for all real $x$ and $r>0$ we have $(1-x)^r \leq \exp(-xr)$, this implies

\begin{align*}
\Pr[u' \not\rightarrow X^{(u,e)}_{i+1}] & \leq \exp(-k|X_{i+1}|\pchose(u',v)).
\end{align*}

Substituting the values of $\pchose$ and $|X_{i+1}|$ from equation \ref{low_p} and  \ref{eqsize_power1} we get,

\begin{align*}
\Pr[u' \not\rightarrow X^{(u,e)}_{i+1}] & \leq \exp(-\frac{k}{16\log_2 d_{\ths}}).
\end{align*}
Substituting $k = \log_2 d_{\ths}$ we get,
\begin{align*}
\Pr[u' \not\rightarrow X^{(u,e)}_{i+1}]  & \leq \exp(-\frac{1}{16})\\
1- \Pr[u' \not\rightarrow X^{(u,e)}_{i+1}] &\geq 1- \exp(-\frac{1}{16})
\end{align*}
As for any $0\leq x \leq 1$, $1-e^{-x} \geq \frac{x}{2}$, this implies,
\begin{align*}
\Pr[u' \rightarrow X^{(u,e)}_{i+1}] & \geq \frac{1}{32}.
\end{align*}

If the algorithms discover the path to a node $u' \in X^{(u,e)}_i$, then with probability at least $\frac{1}{32}$, the algorithms discover another node $v \in X^{(u,e)}_{i+1}$. This implies, each step of the algorithms manages to find a neighbour node $v \in X^{(s,e)}_{i+1}$ with probability at least $\frac{1}{32}$. The phenomenon of this finding a node $v\in X^{(s,e)}_{i+1}$ can be modelled as a sequence of independent trials, where each trial has two outcomes (success and failure) and each trial succeeds with probability at least $\frac{1}{32}$. This implies, $Y^{(u,e)}_i$ follows a geometric distribution with parameter $\frac{1}{32}$. This implies,  

\begin{equation}
E[Y^{(u,e)}_i] \leq 32.
\end{equation}

Substituting the value of $E[Y^{(u,e)}_i]$ in Equation \ref{sumY_greed} we get,

\begin{align}
\nonumber
\sum_{i=0}^m E[Y^{(u,e)}_i] & \leq \sum_{i=0}^m 32\\\nonumber
& = O(m)\\\nonumber
\text{Substituting the value of } & m \text{ from equation \ref{upp_m} we get }\\ 
\label{upp_y_greed}
\sum_{i=0}^m E[Y^{(u,e)}_i]&\leq O\left(\log_2 d_{\ths}\right).
\end{align} 
This concludes the proof.
\end{proof}

As the result of lemma \ref{y_greed} holds for all possible source destination pair $(u,e)$, such that $\dist_{C_n}(u,e) \leq \frac{d_{\ths}}{2}$, so we can use this result to get the following upper bound on $\max_{s,e}E[Y_{s,e}]$,

\begin{equation}
E[Y_{s,e}] \leq O\left(\log_2 d_{\ths}\right).
\end{equation}

Substituting the upper bound on $\max_{s,e}E[Y_{s,e}]$ in equation \ref{target_sofar} we get,
\begin{equation}
\max_{s,e \in C_n}E[|CommPath_{s,e}|] \leq O\left(\frac{n}{d_{\ths}} + \log_2 d_{\ths}\right)
\end{equation}

This concludes the proof of the upper bound on the total number of required entanglement swap operations for local best effort and modified greedy routing algorithms. For the ease of the reader's understanding, we restate the part of theorem \ref{ent_swap_begin}. 


\textbf{Power-law Part of Theorem \ref{ent_swap_begin}: }
\textit{In the continuous model, for a power-law virtual graph, constructed from the physical graph $C_n$, if $|D|=1$ and for all $i,j \in [0,n-1]$ if $D_{i,j}\in \{0,1\}$ then for sharing an entangled link between any source destination pair $s,e$, the expected number of required entangled swap operations by the Local best effort and the modified greedy routing algorithms ($\max_{s,e}E[|CommPath_{s,e}|]$), is upper bounded by}
\begin{align}
\label{greed_power}
&O\left(\frac{n}{d_{\ths}}+ \log_2 d_{\ths}\right).\\ \nonumber
\end{align}

\subsection{\textbf{Proof of the upper bound on the swap operations for NoN local best effort algorithm}}

Using a simple adaptation of the proof given in \cite{MNW04} one can show $E[Y_{s,e}] \leq O(\frac{\log_2 d_{\ths}}{\log_2 \log_2 d_{\ths}})$. For completeness, here we include the proof.

\begin{lemma}
\label{y_non_greed}
For the power-law virtual graphs, constructed from the physical graph $C_n$, if $|D|=1$ and for some $u,e \in [0,n-1]$ if $D_{u,e} = 1$ and if $\dist_{C_n}(u,e) \leq \frac{d_{\ths}}{2}$ then the algorithm \ref{path_non_local_best_effort} take $O(\frac{\log_2 d_{\ths}}{\log_2 \log_2 d_{\ths}})$ number of entanglement swap operations for distributing an entangled link between $u$ and $e$.
\end{lemma}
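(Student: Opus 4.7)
The plan is to adapt the hierarchy argument of lemma \ref{y_greed} using a coarser hierarchy whose contraction factor per level is $\log_2 d_{\ths}$ rather than $2$, exploiting that the NoN local best effort algorithm at a node $u'$ inspects the virtual neighbors of each of $u'$'s $k=\log_2 d_{\ths}$ virtual neighbors, effectively choosing the next hop from $k^2=(\log_2 d_{\ths})^2$ candidate second-neighbors per step. This is the power-law analogue of the classical NoN small-world analysis of \cite{MNW04}.

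Concretely, for $0 \le i \le m$ define the nested sets
\[
X^{(u,e)}_i := \left\{ u' \in V : |u'-e| \le \frac{d_{\ths}}{(\log_2 d_{\ths})^i} \right\}, \qquad m := \left\lceil \frac{\log_2 d_{\ths}}{\log_2 \log_2 d_{\ths}} \right\rceil,
\]
so that $d_{\ths}/(\log_2 d_{\ths})^m \le 1$ and $X^{(u,e)}_m$ consists of nodes at $O(1)$ hop-distance from $e$. The ring geometry gives $|X^{(u,e)}_i| \ge d_{\ths}/(\log_2 d_{\ths})^i$ as in lemma \ref{y_greed}. Let $Y^{(u,e)}_i$ denote the number of hops the algorithm spends in $X^{(u,e)}_i \setminus X^{(u,e)}_{i+1}$; the total number of swap operations is $\sum_{i=0}^{m} Y^{(u,e)}_i$.

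The heart of the argument is to show that $E[Y^{(u,e)}_i] = O(1)$ uniformly in $i$. Fix $u' \in X^{(u,e)}_i \setminus X^{(u,e)}_{i+1}$, and let $w$ be a virtual neighbor of $u'$ drawn according to $\pchose(u',\cdot)$. The triangle inequality together with equation \ref{pchoseapp} and lemma \ref{upp_beta} shows, exactly as in lemma \ref{y_greed}, that with probability $\Omega(1/\log_2 d_{\ths})$ the neighbor $w$ itself satisfies $|w-e| = O(d_{\ths}/(\log_2 d_{\ths})^i)$; conditional on such a $w$, a second application of equation \ref{pchoseapp} and lemma \ref{upp_beta} yields that each of $w$'s $k$ i.i.d.\ virtual neighbors lies in $X^{(u,e)}_{i+1}$ with probability $\Omega(1/(\log_2 d_{\ths})^2)$, so $w$ has at least one such second-neighbor with probability $1-(1-\Omega(1/(\log_2 d_{\ths})^2))^{\log_2 d_{\ths}} = \Omega(1/\log_2 d_{\ths})$. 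Since $u'$ draws its $k=\log_2 d_{\ths}$ first-level neighbors independently, the probability that at least one of the $k^2$ second-level candidates inspected by the NoN algorithm lands in $X^{(u,e)}_{i+1}$ is $1-(1-\Omega(1/\log_2 d_{\ths}))^{\log_2 d_{\ths}} = \Omega(1)$. Hence $Y^{(u,e)}_i$ is stochastically dominated by a geometric random variable with constant parameter, so $E[Y^{(u,e)}_i] = O(1)$, and summing over the $m$ levels gives $\sum_{i=0}^{m} E[Y^{(u,e)}_i] = O(m) = O(\log_2 d_{\ths}/\log_2 \log_2 d_{\ths})$.

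The main technical obstacle is the two-layer probability calculation: one has to juggle the randomness in the position of $w$ under $\pchose(u',\cdot)$ and the randomness of $w$'s neighbor under $\pchose(w,\cdot)$, and verify that both hops contract distance to $e$ by a factor of roughly $\log_2 d_{\ths}$ with constant amortised probability. The geometric restriction $|w-e| = O(d_{\ths}/(\log_2 d_{\ths})^i)$ is what makes the second-level $\pchose(w,v)$ estimate large enough; without conditioning on it the bound degrades by an extra $\log_2 d_{\ths}$ factor and the NoN improvement over lemma \ref{y_greed} disappears. Once this restriction and the attendant independence of the second-level draws are pinned down, the rest is the direct analogue of the derivation in lemma \ref{y_greed}.
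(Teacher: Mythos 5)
Your setup (the nested sets $X^{(u,e)}_i$ with contraction factor $\log_2 d_{\ths}$, the choice $m=O(\log_2 d_{\ths}/\log_2\log_2 d_{\ths})$, and the reduction to showing $E[Y^{(u,e)}_i]=O(1)$ per level) is exactly the paper's, and your instinct to track where the intermediate node $w$ sits is more careful than the paper, which simply bounds $\Pr[u'\not\rightarrow_2 X^{(u,e)}_{i+1}]\le(1-|X^{(u,e)}_{i+1}|\pchose(u',v))^{k^2}$ using the single-candidate probability computed from $u'$ itself, as if all $k^2$ second-level draws were i.i.d.\ copies of a link emanating from $u'$. But your final step does not close. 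By your own intermediate bounds, the \emph{unconditional} probability that a fixed first-level neighbour $w_j$ both lands in the favourable band $|w_j-e|=O(d_{\ths}/(\log_2 d_{\ths})^i)$ and then has a second-level link into $X^{(u,e)}_{i+1}$ is the product $\Omega(1/\log_2 d_{\ths})\cdot\Omega(1/\log_2 d_{\ths})=\Omega(1/(\log_2 d_{\ths})^2)$; in the last display you instead plug the \emph{conditional} figure $\Omega(1/\log_2 d_{\ths})$ into $1-(1-p)^{\log_2 d_{\ths}}$, silently dropping the goodness factor. With the correct product, $k=\log_2 d_{\ths}$ independent first-level neighbours give a per-step success probability of only $\Omega(1/\log_2 d_{\ths})$, hence $E[Y^{(u,e)}_i]=O(\log_2 d_{\ths})$ and a total of $O((\log_2 d_{\ths})^2/\log_2\log_2 d_{\ths})$ — worse than the greedy bound of Lemma~\ref{y_greed} and far from the claimed $O(\log_2 d_{\ths}/\log_2\log_2 d_{\ths})$.

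This is not a fixable constant: if you only credit first-level neighbours in the band at scale $d_{\ths}/(\log_2 d_{\ths})^i$, the expected number of successful (first-level, second-level) pairs is itself $\Theta(1/\log_2 d_{\ths})$, so by Markov no conditioning of this form can yield a constant per-step success probability. To rescue the argument you must harvest the contributions of intermediate nodes $w$ at \emph{all} distance scales between $d_{\ths}/(\log_2 d_{\ths})^{i+1}$ and $d_{\ths}$ from $e$: writing $r=d_{\ths}/(\log_2 d_{\ths})^{i+1}$ and averaging $\sum_{v\in X^{(u,e)}_{i+1}}\pchose(w,v)\gtrsim \frac{|X^{(u,e)}_{i+1}|}{\beta_w(|w-e|+r)}$ over the power-law position of $w$ shows that each dyadic band of $|w-e|$ contributes comparably, so the expected number of hits among the $k^2$ candidates is in fact $\Omega(1)$ (indeed $\Theta(\log r)$); one then needs a second-moment or careful independence argument, in the spirit of the full analysis of \cite{MNW04}, to convert this into a constant success probability. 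You correctly identified the two-layer calculation as the crux, but the band you condition on carries too little mass, and the step where the bound is asserted to be $\Omega(1)$ is exactly where the proof currently fails.
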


\begin{proof}
Let us consider a hierarchy of $m+1$ nonempty sets, $X^{(u,e)}_0 \supset X^{(u,e)}_1 \supset X^{(u,e)}_2 \supset \ldots \supset X^{(u,e)}_m$, where 
\begin{equation}
\label{m_non}
m := \frac{\log_2 d_{\ths}}{\log_2 \log_2 d_{\ths}},
\end{equation}
and for all $0\leq i \leq m$, $$X^{(u,e)}_i := \left\{u' : |u'-e| \le \frac{d_{\ths}}{(\log_2 d_{\ths})^i}\right\}.$$
Note that, all the nodes in $X^{(u,e)}_m$ are at most one distance away from the destination node $e$. This implies all of the nodes in the set $X^{(s,e)}_m$ are physical neighbours of the destination node $e$.

In a ring network, $C_n$, one can verify that for all $0 \leq i \leq m$,

\begin{equation}
\label{eqsize_power}
|X^{(u,e)}_i| \geq \frac{d_{\ths}}{(\log_2 d_{\ths})^i}.
\end{equation}

The algorithm starts to discover a path from a node in $u \in X^{(u,e)}_0$ to a node in the set $X^{(u,e)}_{m}$ through the nodes in the sets $X^{(u,e)}_{1}, \ldots , X^{(u,e)}_{m-1}$. The algorithm stops when it discovers the path up to the destination node in $X^{(u,e)}_m$. Let the algorithm spends $Y^{(u,e)}_i$ iterations in the set $X^{(u,e)}_i$. According to Algorithm \ref{path_non_local_best_effort} the length of the path is,
\begin{equation}
\label{eqY_non}
\sum_{i=0}^{m} Y^{(u,e)}_i.
\end{equation}

Here, first we show that for each $0 \leq i \leq m$, $E[Y^{(u,e)}_i]$ is upper bounded by a constant. This gives us the proof that, $E[Y_{s,e}] \leq O(m) \leq O\left(\frac{\log_2 d_{\ths}}{\log_2 \log_2 d_{\ths}}\right)$. The detailed proof is given below.

For any $0 \leq i \leq m-1$, suppose the path discovery algorithm discovers a path from $u \in X^{(u,e)}_0$ to a node $u'$ such that $u' \in X^{(u,e)}_i$ but $u' \not\in X^{(u,e)}_{i+1}$. Let $u' \rightarrow_1 X^{(u,e)}_{i+1}$ denotes the event that $u'$ has a virtual neighbour in the set $X^{(u,e)}_{i+1}$. In Algorithm \ref{path_non_local_best_effort} we are interested in the fact that whether $u'$ is connected to $X^{(u,e)}_{i+1}$ via a path of length two or not (neighbour of the neighbour). So, we use the notation $u' \rightarrow_2 X^{(u,e)}_{i+1}$ to denote the event $u'$ is connected to the set $X^{(u,e)}_{i+1}$ via a path of length two. Similarly, the complement of this event is denoted by $u' \not\rightarrow_2 X^{(u,e)}_{i+1}$.  If $v \in X^{(u,e)}_{i+1}$, then from equation \ref{pchoseapp} we have $\pchose(u',v) = \frac{1}{\beta_{u'}|u'-v|}$. Substituting the value of $\beta_{u'}$ from lemma \ref{upp_beta} we get $\pchose(u',v) \geq \frac{1}{4\log_2 d_{\ths}} \frac{1}{|u'-v|}$. As $v \in X^{(u,e)}_{i+1}$, this implies $|v-e| \leq \frac{d_{\ths}}{(\log_2 d_{\ths})^{i+1}}$, similarly as $u' \in X^{(s,e)}_i$, this implies $|u-e| \leq \frac{d_{\ths}}{(\log_2 d_{\ths})^{i}}$. By combining these two inequalities we get,

\begin{align}
\nonumber
|u'-e| + |v-e| & \leq \frac{2d_{\ths}}{(\log_2 d_{\ths})^{i}}.
\end{align}
From the triangle inequality we have, 
\begin{align}
\nonumber
|u'-e| + |v-e| & \geq |u'-v|.
\end{align}
This implies,
\begin{align}
\label{up_uv}
|u'-v| &\leq \frac{2d_{\ths}}{(\log_2 d_{\ths})^{i}}.
\end{align}
By substituting the upper bound of $|u'-v|$ from equation \ref{up_uv} and $\beta_{u'}$ from lemma \ref{upp_beta} in the expression of $\pchose(u',v)$ we get,

\begin{equation}
\label{low_non_p}
\pchose(u',v) \geq  \frac{1}{4\log_2 d_{\ths}} \frac{(\log_2 d_{\ths})^i}{d_{\ths}}.
\end{equation}

In algorithm \ref{path_non_local_best_effort} we are interested in finding whether the node $u'$ is connected to any node in $X^{(u,e)}_{i+1}$ via path of length at most $2$. In the set $X^{(u,e)}_{i+1}$, there are $|X^{(u,e)}_{i+1}|$ such $v$'s and each of them is a long distance neighbour of $u'$ with probability $\pchose(u',v)$. So, the probability that $u$ is connected with one of them is,
\begin{align*}
&\sum_{v \in X^{(u,e)}_{i+1}} \pchose(u',v) \\
& \text{Substituting the value of } \pchose(u',v) \text{ from equation \ref{low_non_p}}\\
& \geq \sum_{v \in X^{(u,e)}_{i+1}}\frac{1}{4\log_2 d_{\ths}} \frac{(\log_2 d_{\ths})^i}{d_{\ths}}\\
&= |X^{(u,e)}_{i+1}|\frac{1}{4\log_2 d_{\ths}} \frac{(\log_2 d_{\ths})^i}{d_{\ths}}\\
&\text{Substituting $|X^{(u,e)}_{i+1}|$ from equation \ref{eqsize_power}}\\
&\geq \frac{d_{\ths}}{(\log_2 d_{\ths})^{i+1}}\frac{1}{4\log_2 d_{\ths}} \frac{(\log_2 d_{\ths})^i}{d_{\ths}}~~~~\\
& = \frac{1}{4(\log_2 d_{\ths})^2}.
\end{align*}

$u'$ has $k = \log_2 d_{\ths}$ such virtual neighbours and each of them is identical and independently distributed according to $\pchose$. This implies,  

\begin{align*}
\Pr[u' \not\rightarrow_1 X^{(u,e)}_{i+1}] &\leq (1 - |X^{(u,e)}_{i+1}|\pchose(u',v))^k\\
\end{align*}
Each of the neighbours of $u'$ has again $k$ such virtual neighbours and each of those virtual neighbours are identically and independently distributed according to $\pchose$. This implies,

\begin{align*}
\Pr[u' \not\rightarrow_2 X^{(u,e)}_{i+1}] &\leq (1 - |X^{(u,e)}_{i+1}|\pchose(u',v))^{k^2}.
\end{align*}
As $(1-x)^r \leq  \exp(-xr)$, this implies,
\begin{align*}
\Pr[u' \not\rightarrow_2 X^{(u,e)}_{i+1}] & \leq \exp(-k^2|X^{(u,e)}_{i+1}|\pchose(u,v)).
\end{align*}
Substituting the value of $\pchose(u',v)$ and $X^{(u,e)}_{i+1}$ rom equation \ref{low_non_p} and \ref{eqsize_power} we get,
\begin{align*}
\Pr[u' \not\rightarrow_2 X^{(u,e)}_{i+1}]& \leq \exp\left(-\frac{k^2}{4(\log_2 d_{\ths})^2}\right).
\end{align*}
Substituting $k= \log_2 d_{\ths}$ we get,
\begin{align*}
\Pr[u' \not\rightarrow_2 X^{(u,e)}_{i+1}]& \leq \exp\left(-\frac{1}{4}\right)\\
1- \Pr[u' \not\rightarrow_2 X^{(u,e)}_{i+1}] &\geq 1- \exp\left(-\frac{1}{4}\right)
\end{align*}

As for any $0\leq x \leq 1$,$1-e^{-x} \geq \frac{x}{2}$. This implies,
\begin{align*}
\Pr[u' \rightarrow_2 X^{(u,e)}_{i+1}] & \geq \frac{1}{8}.
\end{align*}

If the algorithms discover the path to a node $u' \in X^{(u,e)}_i$, then with probability at least $\frac{1}{8}$, the algorithms discover another node $v \in X^{(u,e)}_{i+1}$. This implies, each step of the algorithms manages to find a neighbour node $v \in X^{(s,e)}_{i+1}$ with probability at least $\frac{1}{8}$. The phenomenon of this finding a node $v\in X^{(s,e)}_{i+1}$ can be modelled as a sequence of independent trials, where each trial has two outcomes (success and failure) and each trial succeeds with probability at least $\frac{1}{8}$.This implies the random variable $Y^{(u,e)}_i$ follows geometric distribution with parameter $\frac{1}{8}$. 

\begin{equation}
E[Y^{(u,e)}_i] \leq 8.
\end{equation}

Substituting the value of $E[Y^{(u,e)}_i]$ in Equation \ref{eqY_non} we get,

\begin{equation}
\sum_{i=0}^{m}E[Y^{(u,e)}_i] \leq O\left(\frac{\log_2 d_{\ths}}{\log_2 \log_2 d_{\ths}}\right).
\end{equation} 
This concludes the proof.
\end{proof}
As the result of lemma \ref{y_non_greed} holds for all possible source destination pair $(u,e)$, such that $\dist_{C_n}(u,e) \leq \frac{d_{\ths}}{2}$, so we can use this result to get the following upper bound on $\max_{s,e \in C_n}E[Y_{s,e}]$,

\begin{equation}
\max_{s,e\in{C_n}}E[Y_{s,e}] \leq O\left(\frac{\log_2 d_{\ths}}{\log_2 \log_2 d_{\ths}}\right).
\end{equation}
Substituting the value of $E[Y_{s,e}]$ in equation \ref{target_sofar} we get,
\begin{equation}
\max_{s,e\in {C_n}}E[|CommPath_{s,e}|] = O\left(\frac{n}{d_{\ths}} + \frac{\log_2 d_{\ths}}{\log_2 \log_2 d_{\ths}}\right)
\end{equation}

This concludes the proof of the upper bound on the total number of required entanglement swap operations for NoN local best effort algorithm. For the ease of the reader's understanding, we restate the part of theorem \ref{ent_swap_begin}.

\textbf{Power-law Part of Theorem \ref{ent_swap_begin}: }
\textit{In the continuous model, for a power-law virtual graph, constructed from the physical graph $C_n$, if $|D|=1$ and for all $i,j \in [0,n-1]$ if $D_{i,j}\in \{0,1\}$ then for sharing an entangled link between any source destination pair $s,e$, the expected number of required entangled swap operations by the NoN Local best effort algorithm ($\max_{s,e}E[|CommPath_{s,e}|]$), is upper bounded by}
\begin{align}
\label{non_power}
&O\left(\frac{n}{d_{\ths}}+ \frac{\log_2 d_{\ths}}{\log_2\log_2 d_{\ths}}\right).
\end{align}

\section{Upper Bound on the number of Entanglement Swap Operations for the Uniform Virtual Graphs}
\label{proof_unif}

In the continuous model, the maximum distance (distance in the physical graph) between two virtual neighbour nodes is $d_{\ths}$. Due to this upper bound, in the worst case any two nodes need to perform at least $\frac{\diam_{G_{\phs}}}{d_{\ths}}$ swap operations for sharing an entangled state. In lemma \ref{ndth_unif} we show that, in the worst case (when the distance between a source $s$ and a destination $e$ is $\diam_{G_{\phs}}$), for all of the proposed routing algorithms, after $O(\frac{n}{d_{\ths}})$ number of swap operations, the source $s$ can share an entangled state with a node $u$, such that $\dist_{G_{\phs}}(u,e) \leq \frac{d_{\ths}}{2}$. Later, in lemma \ref{y_greed_unif} we show that, both local best effort and modified greedy routing algorithms take $O\left(\frac{d_{\ths}}{(\log_2 d_{\ths})^2}\right)$ number of swap operations to share an entangled state from $u$ to the destination $e$. By combining the results of lemma \ref{ndth_unif} and \ref{y_greed_unif} we get the following upper bound,

\begin{equation}
\max_{s,e\in C_n}E[|CommPath_{s,e}|] \leq O\left(\frac{n}{d_{\ths}}+ \frac{d_{\ths}}{(\log_2 d_{\ths})^2}\right).
\end{equation}

In lemma \ref{y_non_unif} we show that using the NoN greedy routing algorithm, the node $u$ can share an entanglement link with the destination $e$ using $O\left(\frac{d_{\ths}}{(\log_2 d_{\ths})^2}\right)$ number of swap operations. Using the results of lemma \ref{ndth_unif} and \ref{y_non_unif} we prove the following upper bound on the total number of entanglement swap operations. 

\begin{equation}
\max_{s,e\in C_n}E[|CommPath_{s,e}|] \leq O\left(\frac{n}{d_{\ths}}+ \frac{d_{\ths}}{(\log_2 d_{\ths})^2}\right).
\end{equation}

In the next lemma we focus on proving $O(\frac{n}{d_{\ths}})$ bound.

\begin{lemma}
\label{ndth_unif}

For the uniform virtual graphs, constructed from the physical graph $C_n$, if $|D|=1$ and for all $i,j \in [0,n-1]$ if $D_{i,j}\in \{0,1\}$, then for any source destination pair $(s,e)$, all of the algorithms \ref{path_greed}, \ref{path_loc_best_eff} and \ref{path_non_local_best_effort} take $O(\frac{n}{d_{\ths}})$ number of entanglement swap operations for distributing an entangled link between $s$ and a node $u$ such that $\dist_{C_n}(u,e) \leq \frac{d_{\ths}}{2}$.
\end{lemma}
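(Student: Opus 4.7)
The plan is to replay the argument of Lemma~\ref{ndth} almost verbatim, with the per-link probability bounds re-tuned to the uniform distribution. First I would re-use the partition of the vertices of $C_n$ into the nested ``slabs'' $Z^{(s,e)}_0,\ldots,Z^{(s,e)}_{m'-1}$ with $m'=\lceil 2|s-e|/d_{\ths}\rceil$ and $Z^{(s,e)}_i=\{w:|w-e|\le |s-e|-id_{\ths}/2\}$, and let $\Z^{(s,e)}_i$ denote the number of hops the path-discovery subroutine spends inside slab $i$ before reaching slab $i{+}1$. Since $|s-e|\le n/2$, the full bound will follow from $\sum_{i=0}^{m'-2}E[\Z^{(s,e)}_i]=O(m')=O(n/d_{\ths})$ once each $E[\Z^{(s,e)}_i]=O(1)$.

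The core step is a single-hop progress bound: for any $w\in Z^{(s,e)}_i$, I would show that with probability bounded below by a positive constant, $w$ has a virtual neighbor in $Z^{(s,e)}_{i+1}$. Let $Z'^{(s,e)}_{i+1}$ be the subset of $Z^{(s,e)}_{i+1}$ within physical distance $d_{\ths}$ of $w$. By the identical triangle-inequality calculation used in Lemma~\ref{ndth}, $|Z'^{(s,e)}_{i+1}|\ge d_{\ths}/2$. Under the uniform rule, $\pchose(w,v)=1/N_{\le d_{\ths}}(C_n)\ge 1/(2d_{\ths})$ for every candidate $v$, so the probability that any one of $w$'s $k=\log_2 d_{\ths}$ independent draws lands in $Z'^{(s,e)}_{i+1}$ is at least $(d_{\ths}/2)/(2d_{\ths})=1/4$. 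Hence the probability that at least one of the $k$ draws succeeds is at least $1-(3/4)^{\log_2 d_{\ths}}$, which is lower-bounded by an absolute positive constant as soon as $d_{\ths}>2$ (the hypothesis of the theorem).

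From here the argument is mechanical: $\Z^{(s,e)}_i$ is stochastically dominated by a geometric random variable with constant success probability, so $E[\Z^{(s,e)}_i]=O(1)$, and summing over the at most $m'=O(n/d_{\ths})$ slabs gives the stated bound. The argument is uniform across the three path-discovery subroutines of Algorithms~\ref{path_greed}, \ref{path_loc_best_eff} and \ref{path_non_local_best_effort}, because the slab-to-slab progress probability depends only on the existence of \emph{some} virtual neighbor of $w$ in $Z^{(s,e)}_{i+1}$; the distinctions among the routing rules only affect how $w$ chooses among such neighbors, not whether one exists.

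The main obstacle is the minor combinatorial one of certifying $|Z'^{(s,e)}_{i+1}|\ge d_{\ths}/2$ when $w$ sits close to the ``far'' boundary of $Z^{(s,e)}_i$; this is dispatched exactly as in Lemma~\ref{ndth} by a worst-case triangle inequality, and no further ingredients beyond those used in the power-law proof are required.
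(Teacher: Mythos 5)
Your proposal matches the paper's own proof essentially step for step: the same slab decomposition $Z^{(s,e)}_0,\ldots,Z^{(s,e)}_{m'-1}$, the same triangle-inequality bound $|Z'^{(s,e)}_{i+1}|\ge d_{\ths}/2$, the same lower bound $\pchose(w,v)\ge 1/(2d_{\ths})$, and the same geometric-variable conclusion $E[\Z^{(s,e)}_i]=O(1)$ summed over $O(n/d_{\ths})$ slabs. Your constant $1-(3/4)^{\log_2 d_{\ths}}$ is in fact the correct evaluation of the expression the paper writes (the paper's displayed $(1/4)^{\log_2 d_{\ths}}$ is a typo), so the proposal is correct and follows the same route.
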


\textit{Outline of the Proof:} Like the proof of lemma \ref{ndth} here we divide the set of nodes of $C_n$ into $m'$ nonempty sets, $Z^{(s,e)}_0, \ldots , Z^{(s,e)}_{m'-1}$, where 
\begin{equation}
\label{eq_m'_unif}
m' := \left\lceil\frac{2|s-e|}{d_{\ths}}\right\rceil.
\end{equation} 

Here for all $0\leq i \leq m'-1$, $$Z^{(s,e)}_i := \left\{w : |w-e| \le |s-e| - \frac{id_{\ths}}{2}\right\}.$$ The greedy path discovery subroutines, start constructing the path from $s \in Z^{(s,e)}_0$. In this process if a node $w \in Z^{(s,e)}_i$, chooses $v$ as a next hop in the path, then either $v \in Z^{(s,e)}_i$ or $v \in Z^{(s,e)}_{i+1}$ ($0 \leq i \leq m'-1$). Note that all the nodes $u \in Z^{(s,e)}_{m'-1}$ are at most $\frac{d_{\ths}}{2}$ distance away from the destination $e$. If for all $0\leq i \leq m'-2$, $\Z^{(s,e)}_i$ denotes the number of nodes the path discovery algorithms visit to discover a node in the set $Z^{(s,e)}_{i+1}$ from a node in the set $Z^{(s,e)}_i$ then this lemma we would like to prove,  
\begin{equation}
\sum_{i=0}^{m'-2} E[\Z^{(s,e)}_i] \leq O\left(\frac{n}{d_{\ths}}\right).
\end{equation}

Here we show that each of $E[\Z^{(s,e)}_i]$ is upper bounded by a constant, which is independent of $n$ and $d_{\ths}$. As a result, we get $\sum_{i=0}^{m'-2} E[\Z^{(s,e)}_i] \leq O\left(m'\right)$.By substituting the value of $m'$ from equation \ref{eq_m_unif} we get $ \sum_{i=0}^{m'-2} E[\Z^{(s,e)}_i] \leq O\left(m'\right) = O\left(\left\lceil\frac{2|s-e|}{d_{\ths}}\right\rceil\right)$. As in $C_n$, for any two nodes $s,e$, $\dist_{C_n}(s,e) \leq \frac{n}{2}$. This implies, $\sum_{i=0}^{m'-2} E[\Z^{(s,e)}_i] \leq O\left(\left\lceil\frac{2|s-e|}{d_{\ths}}\right\rceil\right) \leq O\left(\frac{n}{d_{\ths}}\right)$.

\begin{proof} We consider for any $0\leq i \leq m'-1$ the path discovery algorithm discovers the path from $s$ to a node $w \in Z^{(s,e)}_i$. In the next step of the path discovery algorithm, $w$ choses another neighbour $v$ such that $\dist_{C_n}(v,e)\leq \dist_{C_n}(w,e)$. Now, let $\Pr[w \rightarrow Z^{(s,e)}_{i+1}]$ denotes the probability that $w$ has a neighbour $v$ in the set $Z^{(s,e)}_{i+1}$. From the definition of $\pchose$ (equation \ref{pchoseunifapp}) we have,

\begin{equation}
\pchose(w,v) = 
\begin{cases}
\frac{1}{N_{\leq d_{\ths}}(C_n)},~\dist_{C_n}(w,v) \leq d_{\ths}\\
 0~~~\text{Otherwise},
\end{cases}
\end{equation}

For $C_n$, $N_{\leq d_{\ths}}(C_n) \leq 2d_{\ths}$. Substituting the upper bound on $N_{\leq d_{\ths}}(C_n)$ in the equation above we get,

\begin{equation}
\label{pchose_unif}
\pchose(u,v) = 
\begin{cases}
\frac{1}{2d_{\ths}},~\dist_{C_n}(w,v) \leq d_{\ths}\\
 0~~~\text{Otherwise},
\end{cases}
\end{equation}
 
 Let $Z'^{(s,e)}_{i+1} \subseteq Z^{(s,e)}_{i+1}$ denotes the set of nodes $v$ such that $\forall v \in Z'^{(s,e)}_{i+1}$, $|w-v| \leq d_{\ths}$. Each of the node in $Z'^{(s,e)}_{i+1}$ choose $w$ as a virtual neighbour with probability $\pchose(w,v)$. This implies $w$ has a virtual neighbour $v\in Z'^{(s,e)}_{i+1}$ with probability $|Z'^{(s,e)}_{i+1}|\pchose(w,v)$. In section \ref{power_ring} we mention that in the virtual graph, $w$ has at least $\log_2 d_{\ths}$ such virtual links and each of them belongs to the set $Z^{(s,e)}_{i+1}$ with probability $|Z'^{(s,e)}_{i+1}|\pchose(w,v)$. This implies,


\begin{align}
\nonumber
\Pr[w \rightarrow Z^{(s,e)}_{i+1}] \geq 1 &- (1-|Z'^{(s,e)}_{i+1}|\pchose(w,v))^{\log_2 d_{\ths}}\\ \nonumber
\text{Substituting the value of }& \pchose \text{ from equation \ref{pchose_unif} we get,}\\ \label{low_bound_unif_sf}
\Pr[w \rightarrow Z^{(s,e)}_{i+1}] &\geq 1- \left(1-|Z'^{(s,e)}_{i+1}| \frac{1}{2d_{\ths}}\right)^{\log_2 d_{\ths}}.
\end{align}

Now we focus on giving a lower bound on $|Z'^{(s,e)}_{i+1}|$.

For the lower bound, we consider the nodes in $Z^{(s,e)}_{i}$ which are the furthest from the set $Z'^{(s,e)}_{i+1}$. According to the definition (see equation \ref{eq_zpower}), all of the nodes in $Z'^{(s,e)}_i$ are at most $|s-e| - \frac{id_{\ths}}{2}$ distance away from the destination node $e$ and all of the nodes in $Z'^{(s,e)}_{i+1}$ are at most $|s-e| - \frac{(i+1)d_{\ths}}{2}$ away from the destination $e$. The maximum distance between a node $\omega \in Z^{(s,e)}_{i}$ and the set $Z'^{(s,e)_{i+1}}$ is computed in the following equation,

\begin{equation}
\max_{\omega \in Z^{(s,e)}_i} \min_{v' \in Z^{(s,e)}_{i+1}} |\omega - v'|.
\end{equation}
From the triangle inequality we have, for any three nodes $\omega, v', e$, $|\omega - v'| \geq |\omega - e| - |v' - e|$. This implies, 
\begin{align}
\nonumber
\max_{\omega \in Z^{(s,e)}_i} \min_{v' \in Z^{(s,e)}_{i+1}} |\omega - v'| &\geq \max_{\omega \in Z^{(s,e)}_i} \min_{v' \in Z^{(s,e)}_{i+1}} |\omega - e| - |v' - e|\\ \label{temp_z}
& = \max_{\omega \in Z^{(s,e)}_i} |\omega - e| - \min_{v' \in Z^{(s,e)}_{i+1}}   |v' - e|
\end{align}
As $ \omega \in Z^{(s,e)}_i$ this implies, $|\omega - e| \geq |s-e| - \frac{id_{\ths}}{2}$. Similarly, as $v' \in Z^{(s,e)}_{i+1}$, this implies $|v'-e| \leq |s-e| - \frac{(i+1)d_{\ths}}{2}$. Substituting the values of $|\omega - e|$ and $|v' - e|$ in equation \ref{temp_z} we get, 
\begin{align*}
\max_{\omega \in Z^{(s,e)}_i} \min_{v' \in Z^{(s,e)}_{i+1}} |\omega - v'| & \geq |s-e| - \frac{id_{\ths}}{2} - |s-e| + \frac{(i+1)d_{\ths}}{2}\\
&= \frac{d_{\ths}}{2}.
\end{align*}

The above derivation implies that in the worst case all of the nodes $Z'^{(s,e)}_{i+1}$ are at least $\frac{d_{\ths}}{2}$ distance away from $w$. According to the definition of $Z'^{(s,e)}_{i+1}$, each of the nodes in this set are at most $d_{\ths}$ distance away from the node $w$. This implies, the maximum distance between any two nodes in $Z'^{(s,e)}_{i+1}$ is at least $\frac{d_{\ths}}{2}$. As, in $C_n$, there are at least $d$ number of nodes within $d$ distance from any node $w$. This implies, 

\begin{equation}
\label{set_size}
|Z'^{(s,e)}_{i+1}| \geq \frac{d_{\ths}}{2}. 
\end{equation}
Substituting the value of $|Z'^{(s,e)}_{i+1}|$ in equation \ref{low_bound_unif_sf} we get,
\begin{align}
\nonumber
\Pr[w \rightarrow Z^{(s,e)}_{i+1}] &\geq   1- \left(1-|Z'^{(s,e)}_{i+1}| \frac{1}{2d_{\ths}}\right)^{\log_2 d_{\ths}}
\end{align}
Substituting the value of $|Z'^{(s,e)}_{i+1}|$ we get,
\begin{align}
\Pr[w \rightarrow Z^{(s,e)}_{i+1}] &\geq 1- \left(1- \frac{d_{\ths}}{4d_{\ths}}\right)^{\log_2 d_{\ths}}\\\nonumber
& \geq  1- \left(\frac{1}{4}\right)^{\log_2 d_{\ths}}\\
\label{low_bound_unif}
& \geq \frac{3}{4}~~~~\text{As }d_{\ths} \geq 2.
\end{align}

Here, $\Z^{(s,e)}_{i+1}$ denotes the number of required swap operations for distributing an entangled link between a node $v \in Z^{(s,e)}_{i+1}$ from a node $w \in Z^{(s,e)}_{i}$. From the equation \ref{low_bound_unif} we have that each of such node $w \in Z^{(s,e)}_{i}$ is connected to a link $v \in  Z^{(s,e)}_{i+1}$ with probability at least $\frac{3}{4}$. This implies, each swap operation manages to create an entangled link with a node $v \in Z^{(s,e)}_{i+1}$ with probability at least $\frac{3}{4}$. The phenomenon of this entanglement can be modelled as a sequence of independent trials, where each trial succeeds with probability at least $\frac{3}{4}$. This implies, $\Z^{(s,e)}_{i+1}$ follows a geometric distribution with parameter $\frac{3}{4}$. So, we have, for all $0\leq i \leq m'-2$, $E[\Z^{(s,e)}_{i+1}]\leq \frac{4}{3} \leq 2 = O(1)$. This implies,

\begin{align}
\nonumber
\sum_{i=0}^{m'-2} E[\Z^{(s,e)}_i] &\leq \sum_{i=0}^{m'-2} 2\\
\nonumber
& \leq 2m'\\\nonumber
\text{Substituting the value of }& m' \text{ from equation \ref{eq_m'_unif} we get,}\\
\sum_{i=0}^{m'-2} E[\Z^{(s,e)}_i]& \leq 2\left\lceil\frac{2|s-e|}{d_{\ths}}\right\rceil\\
\nonumber
& \leq \frac{2n}{d_{\ths}}~~\text{As }|s-e| \leq \frac{n}{2}\\
\label{upp_stand}
& = O\left(\frac{n}{d_{\ths}}\right).
\end{align}
This concludes the proof.
\end{proof}
Substituting the upper bound on $\sum_{i=0}^{m'-2} E[\Z^{(s,e)}_i]$ to equation \ref{target} we get,
\begin{equation}
\label{target_sofar_unif}
\max_{s,e \in C_n}E[|CommPath_{s,e}|] \leq O\left(\frac{n}{d_{\ths}} \right) + \max_{s,e \in C_n}E[Y_{s,e}].
\end{equation}

Up to this part, the proofs are the same for all of the proposed algorithms. The analysis changes when we compute $E[Y_{s,e}]$.

\subsection{\textbf{Proof of the upper bound on the swap operations for both local best effort and modified greedy algorithms}} In equation \ref{target_sofar_unif}, the term $Y_{s,e}$ is related to the term $\max_{s,e \in C_n}E[|CommPath_{s,e}|]$. For a source destination pair $(s,e)$, the term $Y_{s,e}$ denotes how many more entanglement swap operations are required for creating an entangled link between $s$ and $e$, given that $s$ already has created an entangled link with a node $u$ such that $\dist_{C_n}(u,e) \leq \frac{d_{\ths}}{2}$. In lemma \ref{y_greed_unif} we show that the modified greedy and the local best effort algorithm can share an entangled state between any two nodes $u,e$ such that $\dist_{C_n}(u,e) \leq \frac{d_{\ths}}{2}$, using only $ O\left(\frac{d_{\ths}}{(\log_2 d_{\ths})^2}\right)$ swap operations. This gives us the required bound for $\max_{s,e}E[Y_{s,e}] \leq O\left(\frac{d_{\ths}}{(\log_2 d_{\ths})^2}\right)$. The proof technique is just a simple adaptation of the proof given in \cite{klein99}. For completeness, here we include the proof.

\begin{lemma}
\label{y_greed_unif}
For the uniform virtual graphs, constructed from the physical graph $C_n$, if $|D|=1$ and for some $u,e \in [0,n-1]$ if $D_{u,e} = 1$ and if $\dist_{C_n}(u,e) \leq \frac{d_{\ths}}{2}$ then the algorithms \ref{path_greed}, \ref{path_loc_best_eff} take $O\left(\frac{d_{\ths}}{(\log_2 d_{\ths})^2}\right)$ number of entanglement swap operations for distributing an entangled link between $u$ and $e$.

\end{lemma}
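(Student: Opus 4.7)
The plan is to mirror the hierarchical-set argument used in Lemma~\ref{y_greed} for the power-law virtual graph, but to choose a different number of levels and to bound the expected time per level differently, reflecting the fact that the uniform distribution assigns probability mass $\pchose(u',v)=\frac{1}{2d_{\ths}}$ uniformly to every node inside the $d_{\ths}$-ball. Concretely, I would introduce the nested family $X^{(u,e)}_0 \supset X^{(u,e)}_1 \supset \cdots \supset X^{(u,e)}_m$ with
$X^{(u,e)}_i := \{u' : |u'-e| \leq d_{\ths}/2^i\}$
exactly as in Lemma~\ref{y_greed}, and pick $m := \lfloor\log_2(d_{\ths}/\log_2 d_{\ths})\rfloor$, so that every node in $X^{(u,e)}_m$ lies within physical distance $O(\log_2 d_{\ths})$ of $e$. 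Let $Y^{(u,e)}_i$ denote the number of path-discovery steps spent inside $X^{(u,e)}_i\setminus X^{(u,e)}_{i+1}$; the total number of swap operations is then $\sum_{i=0}^{m-1} Y^{(u,e)}_i + O(\log_2 d_{\ths})$, where the last term accounts for finishing up inside $X^{(u,e)}_m$ via physical hops.

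Next I would bound $E[Y^{(u,e)}_i]$. For any $u'\in X^{(u,e)}_i\setminus X^{(u,e)}_{i+1}$ a triangle-inequality argument (as in Lemma~\ref{ndth_unif}) shows $|u'-v|\leq d_{\ths}$ for every $v\in X^{(u,e)}_{i+1}$, so $X^{(u,e)}_{i+1}$ lies entirely in $u'$'s admissible range. Hence each of the $k=\log_2 d_{\ths}$ virtual links of $u'$ lands in $X^{(u,e)}_{i+1}$ with probability $|X^{(u,e)}_{i+1}|/(2d_{\ths})\geq 1/2^{i+1}$, giving a per-step success probability
\[
q_i \;:=\; 1-\bigl(1-2^{-(i+1)}\bigr)^{\log_2 d_{\ths}}.
\]
For ``small'' levels with $2^i\leq c\log_2 d_{\ths}$ the inequality $1-(1-x)^k\geq 1-e^{-xk}$ yields $q_i=\Omega(1)$, so $E[Y^{(u,e)}_i]=O(1)$. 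For ``large'' levels, using $1-(1-x)^k\geq xk/2$ when $xk\leq 1$ yields $q_i\geq \log_2 d_{\ths}/2^{i+2}$, hence $E[Y^{(u,e)}_i]=O(2^i/\log_2 d_{\ths})$. Crucially, even when the algorithm cannot find a helpful virtual neighbor it falls back to a physical hop that still moves one unit closer to $e$, so $Y^{(u,e)}_i$ is also deterministically bounded by the diameter of the level, namely $d_{\ths}/2^{i+1}$; the final bound on $E[Y^{(u,e)}_i]$ is the minimum of the geometric and deterministic estimates.

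Summing a geometric series,
\[
\sum_{i=0}^{m-1} E[Y^{(u,e)}_i] \;=\; O\!\left(\frac{1}{\log_2 d_{\ths}}\sum_{i=0}^{m-1} 2^i\right) \;=\; O\!\left(\frac{2^m}{\log_2 d_{\ths}}\right) \;=\; O\!\left(\frac{d_{\ths}}{(\log_2 d_{\ths})^2}\right),
\]
which dominates the $O(\log_2 d_{\ths})$ finishing cost and gives the desired bound. The main obstacle I anticipate is justifying the geometric-distribution dominance of $Y^{(u,e)}_i$ when virtual jumps and physical-edge fallbacks are interleaved: each time the walk steps to a new node, its virtual neighborhood must be treated as an independent sample from the uniform law on the $d_{\ths}$-ball, which is legitimate because the virtual links are drawn independently per node at network-construction time, but the argument should be stated explicitly (e.g. by exposing the sampling of each new node's neighborhood only when the walk first reaches that node, as in the principle of deferred decisions). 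Once that is in place, the bound on $E[Y^{(u,e)}_i]$ follows from a standard geometric-variable calculation identical in spirit to the one at the end of Lemma~\ref{y_greed}.
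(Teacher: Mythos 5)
Your proposal follows essentially the same route as the paper's proof: the identical nested sets $X^{(u,e)}_i$ with $m=O(\log_2(d_{\ths}/\log_2 d_{\ths}))$ levels, a per-level geometric-variable bound $E[Y^{(u,e)}_i]=O(2^i/\log_2 d_{\ths})$ derived from the uniform choice probability $1/(2d_{\ths})$ and $k=\log_2 d_{\ths}$ links, a geometric-series summation to $O(d_{\ths}/(\log_2 d_{\ths})^2)$, and an $O(\log_2 d_{\ths})$ finishing term $Y'^{(u,e)}$. The only differences are cosmetic (your explicit small/large-level case split versus the paper's uniform use of $1-e^{-x}\geq x/2$, and a harmless constant slip where $|X^{(u,e)}_{i+1}|/(2d_{\ths})\geq 1/2^{i+2}$ rather than $1/2^{i+1}$), so the argument is correct and matches the paper.
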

\begin{proof}
Let us consider a hierarchy of $m+1$ nonempty sets, $X^{(u,e)}_0 \supset X^{(u,e)}_1 \supset X^{(u,e)}_2 \supset \ldots \supset X^{(u,e)}_m$. Here for all $0\leq i \leq m$, $$X^{(u,e)}_i := \left\{u : |u-e| \le \frac{d_{\ths}}{2^i}\right\},$$ 
where 

\begin{equation}
\label{eq_m_unif}
m := O\left(\log_2 \left(\frac{d_{\ths}}{\log_2 d_{\ths}}\right)\right).
\end{equation}

Note that, all of the nodes in $X^{(u,e)}_m$ is at most $\log_2 d_{\ths}$ distance away from the destination node $e$. This implies all of the greedy routing algorithms can discover the destination node $e$ using a constant number of entanglement swap operations. 

In a ring network, $C_n$, one can verify that for all $0 \leq i \leq m$,

\begin{equation}
\label{eqsize1}
|X^{(u,e)}_i| \geq \frac{d_{\ths}}{2^i}.
\end{equation}

As the Algorithms \ref{path_greed} and \ref{path_loc_best_eff} are greedy in nature, this implies they start to discover a path from a node in $u\in X^{(u,e)}_0$ to a node in the set $X^{(u,e)}_{m}$ through the nodes in the sets $X^{(u,e)}_{1}, \ldots , X^{(u,e)}_{m-1}$. The algorithm stops when the packet reaches $X^{(u,e)}_m$. Let the algorithm spends $Y^{(u,e)}_i$ iterations in the set $X^{(u,e)}_i$. According to Algorithm \ref{path_greed} and \ref{path_loc_best_eff} the length of the path is,
\begin{equation}
\label{path_len}
\sum_{i=0}^{m} Y^{(u,e)}_i + Y'^{(u,e)},
\end{equation}
where $Y'^{(u,e)}$ denotes the length of the path for reaching $e$ from a node in $X^{(u,e)}_m$.

For any $0 \leq i \leq m-1$, suppose the path discovery algorithm has discovered a path from $u \in X^{(u,e)}_0$ to a node $u'$ such that $u' \in X^{(u,e)}_i$ but $u' \not\in X^{(u,e)}_{i+1}$. Let $u' \rightarrow X^{(u,e)}_{i+1}$ denotes the event that $u'$ has a virtual neighbour in the set $X^{(u,e)}_{i+1}$. In a similar way, by the notation $u' \not\rightarrow X^{(u,e)}_{i+1}$ we denote the event that $u'$ has no virtual neighbour in $X^{(u,e)}_{i+1}$. Let $v \in X^{(u,e)}_{i+1}$, then from equation \ref{pchoseunifapp} we get 
\begin{equation}
\label{eqpchose}
\pchose(u',v) \geq  \frac{1}{2d_{\ths}}. 
\end{equation}

There are $|X^{(u,e)}_{i+1}|$ such $v$'s and each of them is a virtual neighbour of $u'$ with probability $\pchose(u',v)$. So, the probability that $u'$ is connected with one of them is,
\begin{align*}
\sum_{v \in X^{(u,e)}_{i+1}} \pchose(u',v) &\geq |X^{(u,e)}_{i+1}| \frac{1}{4d_{\ths}}.
\end{align*}

In the above expression, by substituting the value of $|X^{(u,e)}_{i+1}|$ rom equation \ref{eqsize1} we get,
\begin{align*}
&\geq \frac{d_{\ths}}{2^{i+1}} \frac{1}{4d_{\ths}}\\
& = \frac{1}{2^{i+3}}.
\end{align*}

As $u'$ has $k=\log_2 d_{\ths}$ such virtual neighbours and each of them is identical and independently distributed according to $\pchose$. This implies,  

\begin{align*}
\Pr[u' \not\rightarrow X^{(u,e)}_{i+1}] &\leq (1 - |X^{(u,e)}_{i+1}|\pchose(u',v))^k.
\end{align*}
As for all real $x$ and for all $r>0$ we have, $(1-x)^r \leq \exp(-xr)$, this implies,
\begin{align*}
\Pr[u' \not\rightarrow X^{(u,e)}_{i+1}] & \leq \exp(-k|X^{(u,e)}_{i+1}|\pchose(u',v)).
\end{align*}

Substituting the values of $\pchose(u',v)$ and $|X^{(u,e)}_{i+1}|$ from equation \ref{eqpchose} and \ref{eqsize1} in the above inequality we get,
\begin{align*}
\Pr[u' \not\rightarrow X^{(u,e)}_{i+1}] & \leq \exp(-\frac{k}{2^{i+3}})\\
1- \Pr[u' \not\rightarrow X^{(u,e)}_{i+1}] &\geq 1- \exp(-\frac{k}{2^{i+3}})
\end{align*}
Substituting $k = \log_2 d_{\ths}$ in the above inequality we get,

\begin{align*}
1- \Pr[u' \not\rightarrow X^{(u,e)}_{i+1}] &\geq 1- \exp(-\frac{\log_2 d_{\ths}}{2^{i+3}})
\end{align*}
As for any $0\leq x \leq 1$, $1-e^{-x} \geq \frac{x}{2}$, this implies,

\begin{align*}
\Pr[u' \rightarrow X^{(u,e)}_{i+1}] & \geq \frac{\log_2 d_{\ths}}{2^{i+4}}.
\end{align*}

If the algorithms discover the path to a node $u' \in X^{(u,e)}_i$, then with probability at least $\frac{\log_2 d_{\ths}}{2^{i+4}}$, the algorithms discover another node $v \in X^{(u,e)}_{i+1}$. This implies, each step of the algorithm manages to find a neighbour node $v \in X^{(s,e)}_{i+1}$ with probability at least $\frac{\log_2 d_{\ths}}{2^{i+4}}$. The phenomenon of this finding a node $v\in X^{(s,e)}_{i+1}$ can be modelled as a sequence of independent trials, where each trial has two outcomes (success and failure) and each trial succeeds with probability at least $\frac{\log_2 d_{\ths}}{2^{i+4}}$.This implies the random variable $Y^{(u,e)}_i$ follows geometric distribution with parameter $\frac{\log_2 d_{\ths}}{2^{i+4}}$. This implies,

\begin{equation}
\label{eq_y_sofar}
E[Y^{(u,e)}_i] \leq \frac{2^{i+4}}{\log_2 d_{\ths}}.
\end{equation}

In equation \ref{path_len} we get the expected path length,

\begin{align}
\nonumber
E[\sum_{i=0}^{m} Y^{(u,e)}_i &+ Y'^{(u,e)}]\\\nonumber
= \sum_{i=0}^{m} E[Y^{(u,e)}_i] &+ E[Y'^{(u,e)}].
\end{align}
Substituting the value of $E[Y^{(u,e)}_i]$ from equation \ref{eq_y_sofar} in the above expression we get,
\begin{align}
\nonumber
E[\sum_{i=0}^{m} Y^{(u,e)}_i + Y'^{(u,e)}] &\leq  \sum_{i=0}^{m} \frac{2^{i+4}}{\log_2 d_{\ths}}  + E[Y'^{(u,e)}]\\
\label{upp_unif_sofar}
&\leq O\left(\frac{d_{\ths}}{(\log_2 d_{\ths})^2}\right)  + E[Y'^{(u,e)}].
\end{align}

As $m = O\left(\log_2 \left(\frac{d_{\ths}}{\log_2 d_{\ths}}\right)\right)$, this implies from any node $v \in X^{(u,e)}_m$ the destination node $e$ is at most $\log_2 d_{\ths}$ far. All of the proposed algorithms are greedy in nature. This implies, if the path discovery algorithms use the physical neighbours to reach to the destination $e$ then it will take at most $O(\log_2 d_{\ths})$ number of swap operations. Hence, we can conclude that $E[Y'^{(u,e)}] \leq \log_2 d_{\ths} < O\left( \frac{d_{\ths}}{(\log_2 d_{\ths})^2}\right)$ (for all integer $d_{\ths} \geq 2$). Substituting this upper bound in equation \ref{upp_unif_sofar} we get the expected path length is $$E[\sum_{i=0}^{m} Y^{(u,e)}_i + Y'^{(u,e)}] \leq O\left(\frac{d_{\ths}}{(\log_2 d_{\ths})^2}\right).$$
This concludes the proof. 
\end{proof}
The proof of lemma \ref{y_greed_unif} holds for any source destination pair $u,e$ such that $\dist_{C_n}(u,e) \leq \frac{d_{\ths}}{2}$. This implies, we can use this upper bound to prove the upper bound on $\max_{s,e \in C_n}E[Y_{s,e}]$.
Substituting the value of $E[Y_{s,e}]$ in equation \ref{target_sofar_unif} we get,
\begin{equation}
\max_{s,e \in C_n}E[|CommPath_{s,e}|] = O\left(\frac{n}{d_{\ths}} + \frac{d_{\ths}}{(\log_2 d_{\ths})^2}\right)
\end{equation}

This proves the upper bound on the number of required entanglement swap operations for both local best effort and modified greedy routing algorithms. For the ease of the reader's understanding, we restate the uniform part of theorem \ref{ent_swap_begin}

\textbf{Uniform Part of Theorem \ref{ent_swap_begin}: }
\textit{In the continuous model, for a uniform virtual graph, constructed from the physical graph $C_n$, if $|D|=1$ and for all $i,j \in [0,n-1]$ if $D_{i,j}\in \{0,1\}$ then for any source destination pair $s,e$, the expected number of required entangled swap operations for sharing an entangled link between $s,e$ is upper bounded by}
\begin{align}
\nonumber ~&\text{For the routing algorithms \ref{path_greed} and \ref{path_loc_best_eff}}\\ \label{unif_cont}
&O\left(\frac{n}{d_{\ths}}+ \frac{d_{\ths}}{(\log_2 d_{\ths})^2}\right).
\end{align}

Now, we give the upper bound for the NoN local best effort algorithms. 
\subsection{\textbf{Proof of the upper bound on the swap operations for NoN local best effort algorithm}}

\begin{lemma}
\label{y_non_unif}
For the uniform virtual graphs, constructed from the physical graph $C_n$, if $|D|=1$ and for some $u,e \in [0,n-1]$ if $D_{u,e} = 1$ and if for the source destination pair $u,e$, if $\dist_{C_n}(u,e) \leq \frac{d_{\ths}}{2}$ then the algorithms \ref{path_non_local_best_effort} take $O\left(\frac{d_{\ths}}{(\log_2 d_{\ths})^2}\right)$ number of entanglement swap operations for distributing an entangled link between $u$ and $e$.
\end{lemma}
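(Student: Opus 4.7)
The plan is to mirror Lemma \ref{y_greed_unif} but exploit the fact that the NoN rule inspects $k^2 = (\log_2 d_{\ths})^2$ two-hop candidates per iteration instead of the $k = \log_2 d_{\ths}$ one-hop candidates, which permits a more aggressive hierarchy that contracts by a factor of $\log_2 d_{\ths}$ per level rather than by a factor of $2$.

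Specifically, I would define the nested sets
\[
X^{(u,e)}_i := \left\{u' : |u'-e| \le \frac{d_{\ths}}{(\log_2 d_{\ths})^i}\right\}, \quad 0 \le i \le m,
\]
with $m := \left\lceil \log_{\log_2 d_{\ths}}\bigl(d_{\ths}/\log_2 d_{\ths}\bigr)\right\rceil$, chosen so that every node in $X^{(u,e)}_m$ lies within physical distance $O(\log_2 d_{\ths})$ of the destination $e$. As in Lemma \ref{y_greed_unif}, a simple counting argument on $C_n$ gives $|X^{(u,e)}_i| \ge d_{\ths}/(\log_2 d_{\ths})^i$, and by (\ref{pchoseunifapp}) we have $\pchose(u',v) \ge 1/(2 d_{\ths})$ for every pair within range.

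Next, imitating the two-hop analysis of Lemma \ref{y_non_greed}, let $Y^{(u,e)}_i$ be the number of NoN iterations used to escape from $X^{(u,e)}_i \setminus X^{(u,e)}_{i+1}$ into $X^{(u,e)}_{i+1}$. The $k^2$ independently drawn two-hop candidates yield
\[
\Pr\bigl[u' \not\rightarrow_2 X^{(u,e)}_{i+1}\bigr] \le \left(1 - \frac{|X^{(u,e)}_{i+1}|}{2d_{\ths}}\right)^{k^2} \le \exp\!\left(-\tfrac{(\log_2 d_{\ths})^{1-i}}{2}\right).
\]
A short case split -- $i \le 1$, where the exponent is $\Omega(1)$, versus $i \ge 2$, where $1 - e^{-x} \ge x/2$ applies to $x = (\log_2 d_{\ths})^{1-i}/2 \le 1/2$ -- yields $E[Y^{(u,e)}_i] = O(1)$ in the first regime and $E[Y^{(u,e)}_i] = O\bigl((\log_2 d_{\ths})^{i-1}\bigr)$ in the second, because each $Y^{(u,e)}_i$ is geometric in the corresponding success probability.

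Finally, summing over $i$ gives a geometric progression whose total is dominated by its last term, $\sum_i E[Y^{(u,e)}_i] = O\bigl((\log_2 d_{\ths})^{m-2}\bigr)$, and the chosen $m$ makes this $O(d_{\ths}/(\log_2 d_{\ths})^2)$; the additional $O(\log_2 d_{\ths})$ swaps needed to traverse the final physical-neighbour stretch from $X^{(u,e)}_m$ to $e$ are absorbed into the same bound. The main subtlety will be justifying the independence used in raising to the $k^2$-th power: one must check that the (random) locations of the two intermediate virtual neighbours do not degrade the uniform $1/(2d_{\ths})$ lower bound on $\pchose$ for their own neighbours. Since $\pchose$ is uniform over the full $d_{\ths}$-disk around whichever node one sits at, and the target set $X^{(u,e)}_{i+1}$ always lies well inside the $d_{\ths}$-disk around any intermediate neighbour of $u'$, the same bound that worked verbatim in Lemma \ref{y_non_greed} carries over here without modification.
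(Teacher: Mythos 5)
Your proposal is correct and follows essentially the same route as the paper's proof: the same hierarchy $X^{(u,e)}_i$ contracting by a factor of $\log_2 d_{\ths}$, the same size bound $|X^{(u,e)}_i|\ge d_{\ths}/(\log_2 d_{\ths})^i$, the same uniform lower bound $\pchose\ge 1/(2d_{\ths})$ raised to the $k^2$ two-hop candidates, and the same geometric-variable summation dominated by the last level. Your explicit case split on $i$ before invoking $1-e^{-x}\ge x/2$ (which the paper applies even where $x>1$) and your choice of $m=\lceil\log_{\log_2 d_{\ths}}(d_{\ths}/\log_2 d_{\ths})\rceil$ (which makes the final geometric sum come out to exactly $O(d_{\ths}/(\log_2 d_{\ths})^2)$, whereas the paper's stated $m$ is slightly too large for that) are minor tightenings of the same argument rather than a different approach.
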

\begin{proof}

Let us consider a hierarchy of $m+1$ nonempty sets, $X^{(u,e)}_0 \supset X^{(u,e)}_1 \supset X^{(u,e)}_2 \supset \ldots \supset X^{(u,e)}_m$ such that for all $0\leq i \leq m$, $$X^{(u,e)}_i := \left\{u : |u-e| \le \frac{d_{\ths}}{(\log_2 d_{\ths})^i}\right\},$$ where

\begin{equation}
\label{eq_m_non_unif}
m = O\left(\log_{(\log_2 d_{\ths})} \left(\frac{d_{\ths}}{\log_2\log_2 d_{\ths}}\right)\right).
\end{equation}
Note that, all of the nodes in $X^{(u,e)}_m$ is at most $\log_2 \log_2 d_{\ths}$ distance away from the destination node $e$. This implies, all of the greedy routing algorithms can discover the destination node $e$ using $\log_2 \log_2 d_{\ths}$ number of entanglement swap operations.

In a ring network, $C_n$, one can verify that for all $0 \leq i \leq m$,

\begin{equation}
\label{eqsize}
|X^{(u,e)}_i| \geq \frac{d_{\ths}}{(\log_2 d_{\ths})^i}.
\end{equation}

The Algorithm \ref{path_non_local_best_effort} starts discovering a path from a node in $u \in X^{(u,e)}_0$ to a node in the set $X^{(u,e)}_{m}$ through the nodes in the sets $X^{(u,e)}_{1}, \ldots , X^{(u,e)}_{m-1}$. The algorithm stops when the it discovers the destination node in $X^{(u,e)}_m$. Let the algorithm spends $Y^{(u,e)}_i$ iterations in the set $X^{(u,e)}_i$. According to Algorithm \ref{path_non_local_best_effort} the length of the path is,
\begin{equation}
\label{path_len_non}
\sum_{i=0}^{m} Y^{(u,e)}_i + Y'^{(u,e)},
\end{equation}
where $Y'^{(u,e)}$ denotes the length of the path for reaching $e$ from a node in $X^{(u,e)}_m$.

For any $0 \leq i \leq m-1$, suppose the path discovery algorithm has discovered a path from $u \in X^{(u,e)}_0$ to a node $u'$ such that $u' \in X^{(u,e)}_i$ but $u' \not\in X^{(u,e)}_{i+1}$. Let $u' \rightarrow_1 X^{(u,e)}_{i+1}$ denotes the event that $u'$ has a virtual neighbour in the set $X^{(u,e)}_{i+1}$. In Algorithm \ref{path_non_local_best_effort} we are interested in the fact that whether $u'$ is connected to $X^{(u,e)}_{i+1}$ via a path of length two or not (neighbour of the neighbour). So, we use the notation $u' \rightarrow_2 X^{(u,e)}_{i+1}$ to denote the event $u'$ is connected to the set $X^{(u,e)}_{i+1}$ via a path of length two. Similarly, the complement of this event is denoted by $u' \not\rightarrow_2 X^{(u,e)}_{i+1}$.

Let $v \in X^{(u,e)}_{i+1}$, then $\pchose(u',v) \geq  \frac{1}{2d_{\ths}}$ (see equation \ref{pchoseunifapp}). 

There are $|X^{(u,e)}_{i+1}|$ such $v$'s and each of them is a virtual neighbour of $u'$ with probability $\pchose(u',v)$. So, the probability that $u'$ is connected with at least one of them is,
\begin{align*}
\sum_{v \in X^{(u,e)}_{i+1}} \pchose(u',v) &\geq |X^{(u,e)}_{i+1}| \frac{1}{2d_{\ths}}
\end{align*}
Substituting the value of $|X^{(u,e)}_{i+1}|$ from equation \ref{eqsize} in the above expression we get,
\begin{align*}
\sum_{v \in X^{(u,e)}_{i+1}} \pchose(u',v) &\geq \frac{d_{\ths}}{(\log_2 d_{\ths})^{i+1}} \frac{1}{2d_{\ths}}\\
& = \frac{1}{2 (\log_2 d_{\ths})^{i+1}}.
\end{align*}

As $u'$ has $k=\log_2 d_{\ths}$ such virtual neighbours and each of them is identical and independently distributed according to $\pchose$. This implies,  

\begin{align*}
\Pr[u' \not\rightarrow X^{(u,e)}_{i+1}] &\leq (1 - |X^{(u,e)}_{i+1}|\pchose(u',v))^k.
\end{align*}

Each of the neighbours of $u'$ has again $k$ such virtual neighbours and each of those virtual neighbours are identically and independently distributed according to $\pchose$ . This implies,

\begin{align*}
\Pr[u' \not\rightarrow_2 X^{(u,e)}_{i+1}] &\leq (1 - |X^{(u,e)}_{i+1}|\pchose(u',v))^{k^2}
\end{align*}
As for all real $x$ and $r >0$ we have, $(1-x)^r \leq \exp(-xr)$. This implies 
\begin{align*}
\Pr[u' \not\rightarrow_2 X^{(u,e)}_{i+1}]  &\leq \exp(-k^2|X^{(u,e)}_{i+1}|\pchose(u',v))
\end{align*}
In the above expression, by substituting the value of $|X^{(u,e)}_{i+1}|$ and $\pchose(u',v))$ from equation \ref{eqsize} and equation \ref{pchose_unif} we get,
\begin{align*}
\Pr[u' \not\rightarrow_2 X^{(u,e)}_{i+1}]  &\leq \exp\left(-\frac{k^2}{2(\log_2 d_{\ths})^{i+1}}\right)
\end{align*}
By substituting $k = \log_2 d_{\ths}$ in the above expression we get

\begin{align*}
 \Pr[u' \not\rightarrow_2 X^{(u,e)}_{i+1}] & \geq \exp\left(-\frac{1}{2(\log_2 d_{\ths})^{i-1}}\right)\\
1- \Pr[u' \not\rightarrow_2 X^{(u,e)}_{i+1}] &\geq 1- \exp\left(-\frac{1}{2(\log_2 d_{\ths})^{i-1}}\right)\\
\end{align*}

As for any $0\leq x \leq 1$, $1-e^{-x} \geq \frac{x}{2}$, this implies,

\begin{align*}
\Pr[u' \rightarrow_2 X^{(u,e)}_{i+1}]  \geq \frac{1}{4(\log_2 d_{\ths})^{i-1}}&.
\end{align*}

If the algorithms discover the path to a node $u' \in X^{(u,e)}_i$, then with probability at least $\frac{1}{4(\log_2 d_{\ths})^{i-1}}$, the algorithms discover another node $v \in X^{(u,e)}_{i+1}$. This implies, each step of the algorithm manages to find a neighbour node $v \in X^{(s,e)}_{i+1}$ with probability at least $\frac{1}{4(\log_2 d_{\ths})^{i-1}}$. The phenomenon of this finding a node $v\in X^{(s,e)}_{i+1}$ can be modelled as a sequence of independent trials, where each trial has two outcomes (success and failure) and each trial succeeds with probability at least $\frac{1}{4(\log_2 d_{\ths})^{i-1}}$.This implies the random variable $Y^{(u,e)}_i$ follows geometric distribution with parameter $\frac{1}{4(\log_2 d_{\ths})^{i-1}}$. This implies,

\begin{equation}
\label{temp_non}
E[Y^{(u,e)}_i] \leq 4(\log_2 d_{\ths})^{i-1}.
\end{equation}

Now, from equation \ref{path_len_non} we get,

\begin{align*}
E[\sum_{i=0}^{m} Y^{(u,e)}_i + Y'^{(u,e)}] &= \sum_{i=0}^{m} E[Y^{(u,e)}_i] + E[Y'^{(u,e)}]\\
\text{Substituting the value of }& E[Y^{(u,e)}_i] \text{ from equation \ref{temp_non} we get}\\
\leq \sum_{i=0}^{m}4(\log_2 d_{\ths})^{i-1}& + E[Y'^{(u,e)}]\\
\leq O\left(\frac{d_{\ths}}{(\log_2 d_{\ths})^2}\right) & + E[Y'^{(u,e)}].
\end{align*}

As $m = O\left(\log_{(\log_2 d_{\ths})} \left(\frac{d_{\ths}}{\log_2 \log_2 d_{\ths}}\right)\right)$, this implies from any node $v \in X^{(u,e)}_m$ the destination node $e$ is at most $\log_2\log_2 d_{\ths}$ far. NoN local best effort algorithm is greedy in nature. This implies, if the path discovery algorithm discovers $v$ as the next hop from $u$, then $\dist_{C_n}(v,e) < \dist_{C_n}(u,e)$. Hence, we can conclude that $E[Y'^{(u,e)}] \leq \log_2 \log_2 d_{\ths} < \frac{d_{\ths}}{(\log_2 d_{\ths})^2}$. This implies, $$ E[\sum_{i=0}^{m} Y^{(u,e)}_i + Y'^{(u,e)}] \leq O\left(\frac{d_{\ths}}{(\log_2 d_{\ths})^2}\right).$$
This concludes the proof. 
\end{proof}

The proof of lemma \ref{y_non_unif} holds for any source destination pair $u,e$ such that $\dist_{C_n}(u,e) \leq \frac{d_{\ths}}{2}$. This implies, we can use this upper bound to prove the upper bound on $\max_{s,e \in C_n}E[Y_{s,e}]$.
Substituting the value of $\max_{s,e \in C_n}E[Y_{s,e}]$ in equation \ref{target_sofar_unif} we get,
\begin{equation}
\max_{s,e \in C_n}E[|CommPath_{s,e}|] = O\left(\frac{n}{d_{\ths}} + \frac{d_{\ths}}{(\log_2 d_{\ths})^2}\right)
\end{equation}

This proves the upper bound on the number of required entanglement swap operations for both local best effort and modified greedy routing algorithms. For the ease of the reader's understanding, we restate the uniform part of theorem \ref{ent_swap_begin}.

\textbf{Uniform Part of Theorem \ref{ent_swap_begin}: }
\textit{In the continuous model, for a uniform virtual graph, constructed from the physical graph $C_n$, if $|D|=1$ and for all $i,j \in [0,n-1]$ if $D_{i,j}\in \{0,1\}$ then for any source destination pair $s,e$, the expected number of required entangled swap operations for sharing an entangled link between $s,e$ is upper bounded by}
\begin{align}
\nonumber ~&\text{For the routing algorithm \ref{path_non_local_best_effort}}\\ \label{unif_cont}
&O\left(\frac{n}{d_{\ths}}+ \frac{d_{\ths}}{(\log_2 d_{\ths})^2}\right).
\end{align}


\end{document}